\newcommand\vldbdoi{XX.XX/XXX.XX}
\newcommand\vldbpages{XXX-XXX}
\newcommand\vldbvolume{14}
\newcommand\vldbissue{1}
\newcommand\vldbyear{2020}
\newcommand\vldbauthors{\authors}
\newcommand\vldbtitle{\shorttitle} 
\newcommand\vldbavailabilityurl{URL_TO_YOUR_ARTIFACTS}
\newcommand\vldbpagestyle{plain} 
\begin{document}
\title{DUMP: A Dummy-point-based Local Differential Privacy Enhancement Approach under the Shuffle Model}

\author{Xiaochen Li}
\affiliation{%
  \institution{Zhejiang University}
}
\email{xiaochenli@zju.edu.cn}

\author{Weiren Liu}
\affiliation{%
  \institution{Alibaba Group}
}
\email{weiran.lwr@alibaba-inc.com}

\author{Hanwen Feng}
\affiliation{%
\institution{Alibaba Group}
}
\email{fenghanwen.fhw@alibaba-inc.com}

\author{Kunzhe Huang}
\affiliation{%
\institution{Zhejiang University}
}
\email{hkunzhe@zju.edu.cn}

\author{Yuke Hu}
\affiliation{%
\institution{Zhejiang University}
}
\email{yukehu@zju.edu.cn}

\author{Jinfei Liu}
\affiliation{%
  \institution{Zhejiang University}
}
\email{jinfeiliu@zju.edu.cn}

\author{Kui Ren}
\affiliation{%
\institution{Zhejiang University}
}
\email{kuiren@zju.edu.cn}

\author{Zhan Qin}
\affiliation{%
\institution{Zhejiang University}
}
\email{qinzhan@zju.edu.cn}

\begin{abstract}
The shuffle model is recently proposed to address the issue of severe utility loss in Local Differential Privacy (LDP) due to distributed data randomization. 
    In the shuffle model, a shuffler is utilized to break the link between the user identity and the message uploaded to the data analyst. 
    Since less noise needs to be introduced to achieve the same privacy guarantee, following this paradigm, the utility of privacy-preserving data collection is improved.

    In this paper, we propose DUMP (\underline{DUM}my-\underline{P}oint-based), a new framework for privacy-preserving histogram estimation in the shuffle model. 
    In DUMP, dummy messages are introduced on the user side, creating an additional \emph{dummy blanket} to further improve the utility of the shuffle model.
    In addition, this new privacy-preserving mechanism also achieves LDP privacy amplification effect to the user uploaded data against a curious shuffler. 
    We instantiate DUMP by proposing two protocols: pureDUMP and mixDUMP, both of which achieving theoretical improvements on accuracy and communication complexity comparing with existing protocols. 
    A comprehensive experimental evaluation is conducted to compare the proposed two designs under DUMP with the existing protocols.
    Experimental results on both synthetic and real-world datasets confirm our theoretical analysis and show that the proposed protocols achieve better utility than existing protocols under the same privacy guarantee.
\end{abstract}

\maketitle

\pagestyle{\vldbpagestyle}
\begingroup\small\noindent\raggedright\textbf{PVLDB Reference Format:}\\
\vldbauthors. \vldbtitle. PVLDB, \vldbvolume(\vldbissue): \vldbpages, \vldbyear.\\
\href{https://doi.org/\vldbdoi}{doi:\vldbdoi}
\endgroup
\begingroup
\renewcommand\thefootnote{}\footnote{\noindent
This work is licensed under the Creative Commons BY-NC-ND 4.0 International License. Visit \url{https://creativecommons.org/licenses/by-nc-nd/4.0/} to view a copy of this license. For any use beyond those covered by this license, obtain permission by emailing \href{mailto:info@vldb.org}{info@vldb.org}. Copyright is held by the owner/author(s). Publication rights licensed to the VLDB Endowment. \\
\raggedright Proceedings of the VLDB Endowment, Vol. \vldbvolume, No. \vldbissue\ %
ISSN 2150-8097. \\
\href{https://doi.org/\vldbdoi}{doi:\vldbdoi} \\
}\addtocounter{footnote}{-1}\endgroup

\ifdefempty{\vldbavailabilityurl}{}{
\vspace{.3cm}
\begingroup\small\noindent\raggedright\textbf{PVLDB Artifact Availability:}\\
The source code, data, and/or other artifacts have been made available at \url{\vldbavailabilityurl}.
\endgroup
}

\section{Introduction}
Differential privacy is widely accepted, by both academia and industry  \cite{erlingsson2014rappor,team2017learning,ding2017collecting,wang2019answering}, as the standard privacy definition for statistical analysis on databases containing sensitive user information. In the last decades, the community has devoted 
a number of efforts to designing practical mechanisms achieving better utility while providing \textit{reliable} privacy guarantee for users, and most of them can be categorized into the central model \cite{geng2015staircase, ghosh2012universally} or local model \cite{kairouz2014extremal, holohan2017optimal}.  In the central model, the analyst, who collects data from individuals,  is responsible to add noise to the aggregated result such that the noisy result will not leak too much information about each individual.  In contrast, the local model enables each user to perturb his data before submitting it to an analyst and thus provides a more reliable privacy guarantee even when the analyst is \textit{curious}. 
However, mechanisms achieving local differential privacy (LDP) only have limited utility and turn out ill-suited to practice \cite{bittau2017prochlo}.
It is known that the estimation error is at least $\Omega(\sqrt{n})$ for a wide range of natural problems in the local model ($n$ is the number of users), while an error of just $O(1)$ can be achieved in the central model\cite{duchi2013local}.

The above dilemma motivates the study of new approaches that prevent third parties from violating users' privacy while achieving reasonable utility; the shuffle model, initiated by Bittau et al. \cite{bittau2017prochlo}, is emerging as a promising one.  This model introduces an independent third party called \textit{shuffler} who is responsible for shuffling all (permuted) messages received from users before passing them to the analyst\footnote{In practice, various cryptographic tools like nest encryptions \cite{bittau2017prochlo} can be employed to ensure that the shuffler cannot learn the content of these reports.}. Regarding accuracy, for queries that are independent of the sequence, this shuffling will not introduce extra utility loss. Regarding privacy, intuitively,  the analyst cannot link each report in a shuffled database to an individual, and thus the privacy against analysts can be amplified. Or, less ``noise" shall be added by users to messages for the same privacy, which in turn improves the utility. Though similar (or even better) results could be achieved using heavy cryptographic primitives like homomorphic encryptions and multiparty computation protocols \cite{lazar2018karaoke, roy2020crypt}, shuffle-model approaches are more popular in the industry since they are considerably efficient and \textit{implementation-friendly}!  In recent years, several shuffle-model protocols have been presented for various tasks (e.g., frequency estimation \cite{balcer2019separating}, data summation \cite{ghazi2019scalable}, and range query \cite{ghazi2019private}) over either numerical data or categorical data.

We focus on the problem of \textit{histogram estimation}, the fundamental task over categorical data, in the shuffle model. Many recent works \cite{wang2020improving, ghazi2019power, balcer2019separating, balle2019privacy,ghazi2020private2} have aimed to design truly practical mechanisms for this problem, and they indeed have different tradeoffs. 
In terms of efficiency and implementation-friendliness, the generalized random response (GRR) based mechanisms\cite{balle2019privacy}, in which a user just replaces his data $x$ by a uniformly random value $r$ in the data domain $\mathbb{D}$ with certain probability $q$, turns out to be optimal: (1) only a single message shall be sent per user, and (2) it directly uses uniform randomness, which is naturally provided by operating systems via standard means. We note the feature (2) is highly desirable in practice: although random variables with many other distributions can be calculated from uniformly random variables, 
the inexact calculation over finite precision values cannot produce randomness exactly having the specified distribution, and the implemented privacy mechanism may leak significant information of the original data (see \cite{DBLP:conf/ccs/Ilvento20,DBLP:conf/ccs/Mironov12,DBLP:conf/innovations/BalcerV18,DBLP:journals/tcs/GazeauMP16}). More interestingly, Balle \textit{et al.} \cite{balle2019privacy} proposed the \textit{privacy blanket} intuition for GRR-based protocols, which gives a thoroughly theoretical explanation on how (and how much) shuffling enhances the privacy, in a conceptually simple manner.

Nonetheless, the estimation error of GRR-based protocols  
will become unexpectedly large as the domain size $|\mathbb{D}|$ grows; the privacy guarantee they can provide is also subjected to the number of users and the data domain size and might not be enough for applications. Subsequent works \cite{wang2020improving, ghazi2019power, balcer2019separating,ghazi2020private2}  improve the accuracy and achieve errors independent of $|\mathbb{D}|$ (under the same privacy guarantee) by letting users randomize their data in rather complex ways (e.g., encoding data using hash functions \cite{wang2020improving} or Hadamard matrix \cite{ghazi2019power}, adding noises from negative binomial distributions \cite{ghazi2020private2}, etc.); some of them \cite{ghazi2019power,ghazi2020private2} also enables high privacy guarantee. Yet they indeed sacrifice efficiency and/or implementation-friendliness (detailed comparison will be presented later). 

In this work, we present shuffle-model mechanisms for histogram estimation, which are almost as efficient and implementation-friendly as GRR-based protocols \cite{balle2019privacy}, enable high privacy guarantee, and have competitive accuracy with \cite{wang2020improving, ghazi2019power, balcer2019separating,ghazi2020private2}. Moreover, we give an intuition that clearly explains how and how much our mechanisms provide privacy (mimicking the \textit{privacy blanket} intuition); we believe that such intuition can assist in the design and analysis of future mechanisms.

\medskip
\noindent \textbf{Our Contributions.}
We overview our contributions in more detail below.

\noindent\underline{\it Dummy blanket \& DUMP framework. } 
We introduce the concept of \textit{dummy blanket}, which enables a very simple method of enhancing user's privacy (particularly, in the shuffle model) and significantly outperforms the \textit{privacy blanket} \cite{balle2019privacy} in the accuracy.  Recall that privacy blanket provides an insightful view of the GRR mechanism: if each user reports the true value $v$ with probability $p$ and a uniformly sampled $v'$ with probability $q=1-p$, then, equivalently, there is expected to be $nq$ uniformly random values among $n$ reports. After shuffling, the victim data $v^*$ is mixed with the $nq$ points, and these points will serve as a ``blanket" to hide $v^*$. Our \textit{dummy blanket} intuition indeed generalizes the privacy blanket. Specifically, we observe that privacy essentially comes from these uniformly random points and is determined by the number of them, and we do not need to insist that these points are contributed by data randomization. Instead, we allow users to directly add uniformly random points (called \textit{dummy points}) before reporting to the shuffler. Similar to the case of privacy blanket, these dummy points will also serve as a ``blanket" to protect the victim!

Dummy blanket allows us to achieve better accuracy at the cost of slightly increasing communication overhead. To see this, we first show the privacy guarantee is mostly determined by the total number of dummy points. Then, for achieving certain privacy (that needs $s$ dummy points), the privacy blanket intuition requires to replace $s$ true values, while we just add $s$ dummy points. Since original values are preserved, the estimation result of the dummy blanket will be more accurate. Formally, we show the estimation error introduced by dummy blanket is independent of the domain size $|\mathbb{D}|$ and always concretely smaller than that of privacy blanket (see Sect.\ref{subsec:pureDUMP}). Regarding communication overhead, note that only the total number of dummy points matters, and thus the amortized communication overhead will be close to a single message when $n$ is large.  
Furthermore,  to some extent, the dummy blanket intuition could explain why previous works \cite{balcer2019separating,wang2020improving} improved accuracy: their data randomizers implicitly generate dummy points, though the methodology of adding dummy points and its advantage had never been formally studied.

We incorporate the dummy blanket intuition into the existing shuffle-model framework and present a general DUMP (\underline{DUM}my-\underline{P}oint-based) framework for histogram estimation. In DUMP, a user first processes his data using a data randomizer (as he does in the conventional shuffle-model framework), then invokes a dummy-point generator to have a certain amount of dummy points, and finally submits the randomized data together with these dummy points to the shuffler.
From the dummy blanket intuition, adding dummy points could enhance privacy without significantly downgrading the accuracy. Note that most existing locally private protection mechanisms (e.g., Randomized Response \cite{warner1965randomized}, RAPPOR \cite{erlingsson2014rappor}, OLH \cite{wang2017locally}) could be formulated as one of the special cases of the data randomizer, and thus the framework admits diverse instantiations which can give different tradeoffs for different purposes.

\smallskip\noindent\underline{\it Instantiations: pureDUMP \& mixDUMP. } We first instantiate DUMP with an empty data randomizer and obtain the pureDUMP protocol. In pureDUMP, the privacy guarantee is fully provided by dummy points, and thus the analysis of pureDUMP could quantify properties of the dummy blanket. Assume each user adds $s$ dummy points and the data domain $\mathbb{D}$ has size $k$. We prove that  (1) pureDUMP satisfies $(\epsilon,\delta)$-differential privacy for $\delta\leq 0.2907$ and $\epsilon=\sqrt{14k\cdot \ln (2/\delta)/(ns-1)}$; (2) its expected mean squared error (eMSE) of frequency estimation is $s(k-1)/(nk^2)$.  So the eMSE is $O(\log(1/\delta)/(\epsilon^2n^2))$, surely independent of $k$, and only increases the error in the central model by $\log (1/\delta)$ that is logarithmic in $n$; the number of messages per user is $1+O(k\log (1/\delta)/(n\epsilon^2))$ that will be close to $1$ as $n$ grows. Moreover, by increasing $s$, $\epsilon$ can be arbitrarily close to 0 as wishes. Meanwhile, we prove that pureDUMP even satisfies local differential privacy, which is meaningful when the shuffler and the analyst collude.

We then present mixDUMP, which utilizes GRR as the data randomizer in the DUMP framework. mixDUMP leads to a new tradeoff between communication overhead and accuracy. Particularly, since privacy can also be provided by GRR, fewer dummy points are needed. We show that mixDUMP has the smallest communication overhead (compared with all existing multi-messages mechnisms) while its eMSE is still competitive (though slightly larger than that of pureDUMP). 



Finally, notice that privacy is determined by the total number of dummy points; when $n$ is large, it may not be necessary to ask each user to send even a single dummy point. Therefore we investigate the setting that each user has a certain probability to generate a fixed number of dummy points. Meanwhile, we find that this setting is interesting for the case that not all users are willing to utilize dummy-points-based privacy protection mechanism due to the limited network bandwidth or computation capability of their devices. We formally prove that pureDUMP and mixDUMP can still provide enhanced privacy protection of $(\epsilon, \delta)$-differential privacy in this setting. This result allows having smaller communication overhead and better accuracy when $n$ is large.

\smallskip\noindent\underline{\it Implementations \& Comparisons.} We analyze and implement all the existing related protocols \cite{balle2019privacy,wang2020improving, ghazi2019power, balcer2019separating,ghazi2020private2}  (at least to the best of our knowledge) for comprehensive comparisons. 
We conduct experiments on both synthetic and real-world datasets; we are the first (also to the best of knowledge) to report \cite{balcer2019separating, ghazi2019power} on large-size experiments.  
With the same privacy guarantee, the results show that (1) pureDUMP and mixDUMP have the minimal communication overhead compared with all existing multi-message protocols (\cite{ghazi2019power, balcer2019separating,ghazi2020private2}); (2) pureDUMP achieves the best accuracy compared with \cite{balle2019privacy,wang2020improving, ghazi2019power, balcer2019separating}, while the only exception \cite{ghazi2020private2}, achieving near-optimal accuracy, has much higher communication-overhead and has to use randomness from negative-binomial distributions whose implementation is indeed challenging and vulnerable in practice. 
Details follow.
\begin{itemize}[-]
	\item \textit{Efficiency.} The communication overhead of both pureDUMP and mixDUMP decreases with the increase of the numbe of users, while \cite{ghazi2019power,balcer2019separating} do not have this feature; the expected communication overhead of \cite{ghazi2020private2} turns out to be concretely larger.
	In our experiments, for example, over the real-world dataset \textit{Ratings} \cite{ratings-web} with $\epsilon=1$ and $\delta=10^{-6}$, pureDUMP (and mixDUMP) only require around $0.8$ (and $0.5$) expected extra messages per user, while \cite{ghazi2019power,balcer2019separating} and \cite{ghazi2020private2} need around $10^2$, $10^3$ and $10^4$ extra messages respectively.  On the other hand, single-message protocols \cite{balle2019privacy,wang2020improving} are less competitive in accuracy and/or subjected to upper-bounds on privacy guarantee (namely, the lower-bounds on $\epsilon$). Particularly, for Ratings, \cite{balle2019privacy} \textit{cannot }even achieve $(\epsilon,\delta)$-privacy with $\epsilon\leq 1$. 
	\item \textit{Accuracy.} Our experiements show that pureDUMP and mixDUMP enjoy better accuracy than all existing protocols except \cite{ghazi2020private2}. Particularly, in the experiments over Ratings with $\epsilon=1$ and $\delta=10^{-6}$, the eMSE of pureDUMP and that of \cite{ghazi2020private2} are in $[10^{-10},10^{-9} )$, while the eMSEs of others are close to $10^{-8}$.
	\item \textit{Implementation-friendliness.}  pureDUMP and mixDUMP are extremely implementation-friendly, as they just generate a fixed amount of uniformly random values with fixed probability and do not involve any cumputation that shall be approximated over finite-precision devices (e.g., base-$e$ logarithm $\ln(\cdot)$). In contrast,  \cite{ghazi2020private2} uses randomness with the negative binomial distribution $\mathsf{NB}(r,p)$ s.t. $r=1/n$ and $p=e^{-\epsilon}$.  \footnote{For $k\in \mathbb{Z}$, and $X\leftarrow \mathsf{NB}(r,p)$, $\Pr[X=k]=\tbinom{k+r-1}{k}(1-p)^rp^k$.}  Notice that $r=1/n$ is not an integer, so we cannot sample from this distribution by naively running independent events that have succesful probability $p$ untill seeing $r$ failures. We may have to follow the common approach of sampling negative binomial distributions: first sample $\lambda$ from a $\Gamma$ distribution $\Gamma(r, p/(1-p))$ and then sample from a Poisson distribution $\mathsf{Poisson}(\lambda)$. Note that sampling from Poisson distributions involves $\ln(\cdot)$ computation, and Mironov \cite{DBLP:conf/ccs/Mironov12} showed that DP meachnisms implementing $\ln(\cdot)$ over double-precision floating-point numbers cannot provide the claimed privacy protection!  Other sampling methods for  $\mathsf{NB}(r,p)$ might be possible but to be explored.
\end{itemize}

\section{Background} \label{sect:background}
\medskip\noindent{\bf Central Differential Privacy. }
The initial setting of differential privacy is defined in central model \cite{dwork2008differential}, in which a trusted analyst collects sensitive data from users, gathers them into a dataset, and works out the statistical results.
The statistical results are perturbed by the analyst before being released.
Intuitively, differential privacy protects each individual's privacy by requiring any single element in the dataset has limited impact on the output.

\begin{definition}\label{def:differential-privacy} (\emph{Central Differential Privacy, CDP}). 
	An algorithm $\mathcal{M}$ satisfies ($\epsilon,\delta$)-DP,where $\epsilon, \delta \geq 0$, if and only if for any neighboring datasets $\mathcal{D}$ and $\mathcal{D'}$ that differ in one element, 
	and any possible outputs $\mathcal{R} \subseteq Range(\mathcal{M})$, we have  
	\[\Pr \left[ \mathcal{M}(\mathcal{D}) \in \mathcal{R}\right] \leq e^{\epsilon} \Pr \left[ \mathcal{M}(\mathcal{D'}) \in \mathcal{R} \right] + \delta\]
\end{definition}

Here, $\epsilon$ is called the \emph{privacy budget}. 
The smaller $\epsilon$ means that the outputs of $\mathcal{M}$ on two adjacent datasets are more similar, and thus the provided privacy guarantee is stronger. 
$\delta$ is generally interpreted as $\mathcal{M}$ not satisfying $\epsilon$-DP with probability $\delta$. As usual, we consider the neighboring datasets $\mathcal{D}$ and $\mathcal{D'}$ having the same number of elements (aka. bounded differential privacy). For simplicity, we assume w.o.l.g. that $\mathcal{D}$ and $\mathcal{D'}$ differ at the $n$-th element, and we denote them by 
 $\{x_1, x_2, \ldots, x_n\}$ and $\{x_1, x_2, \ldots, x'_n\}$.

\medskip\noindent{\bf Local Differential Privacy.}
The central model assumes the analyst is fully trusted and does not ensure privacy against curious analysts, which is not practical in many scenarios. In constrast, the local model \cite{kasiviswanathan2011can} allows  
each user to locally perturb his data using a randomization algorithm $\mathcal{M}$ before sending it to the untrusted analyst so that the analyst cannot see the raw data. We say $\mathcal{M}$ satisfies local differential privacy (LDP) if for different data $v$ and $v'$, the distribution of $\mathcal{M}(v)$ and that of $\mathcal{M}(v')$ are close. 
\begin{definition}\label{def:local-differential-privacy} (\emph{Local Differential Privacy, LDP}). 
	An algorithm $\mathcal{M}$ satisfies $(\epsilon,\delta)$-LDP, where $\epsilon, \delta \geq 0$, if and only if for any different elements $v, v' \in \mathbb{D}$ , and any possible outputs $\mathcal{R} \subseteq Range(\mathcal{M})$, we have 
	\[\Pr \left[\mathcal{M}(v) \in \mathcal{R}\right] \leq e^{\epsilon} \Pr \left[\mathcal{M}(v') \in \mathcal{R} \right] + \delta\] 
\end{definition}
$\delta$ is typically set as $0$, and $(\epsilon, 0)$-LDP can be denoted by $\epsilon$-LDP.

\medskip\noindent{\bf (Generalized) Randomized Response.}
Randomized Response (RR) \cite{warner1965randomized} is considered as the first LDP mechanism that supports binary response. It allows each user to provide a false answer with certain probability and thus provides plausible deniability to users. RR mechanism can be extended to support multi-valued domain response, known as Generalized Randomized Response (GRR) \cite{wang2017locally}.
Denote $k$ as the size of the domain $\mathbb{D}$. 
Each user with private value $v \in \mathbb{D}$ reports the true value $v' = v$ with probability $p$ and reports a randomly sampled value $v' \in \mathbb{D}$ where $v' \neq v$ with probability $q$.  By standard analysis, it is easy to see that, GRR satisfies $\epsilon$-LDP when
\begin{equation}
	\label{eq:rr}
	\left\{
	\begin{aligned}	
		p & = \frac{e^{\epsilon}}{e^{\epsilon} + k - 1}, \\
		q & = \frac{1}{e^{\epsilon} + k - 1}.
	\end{aligned}
	\right. 
\end{equation}
%

\medskip\noindent{\bf Shuffle Model.}
The shuffle model follows the \textsf{Encode}, \textsf{Shuffle}, \textsf{Analyze} (ESA) architecture proposed by Bittau et al. \cite{bittau2017prochlo}.
A shuffler is inserted between users and the analyst to break the link between the user identity and the message.
Following  \cite{cheu2019distributed}, we describe a shuffle-model protocol by the following three algorithms.
\begin{enumerate}[-]
	\item $R$: $\mathcal{X}\rightarrow\mathcal{Y}^m$. The local randomized encoder $R$ takes a single user's data $x_i$ as input and outputs a set of $m$ messages $y_{i,1},\ldots,y_{i,m}\in\mathcal{Y}$.
	\item$S$: $(\mathcal{Y}^{m})^n\rightarrow\mathcal{Y}^{mn}$. The shuffler $S$ takes $n$ users' output messages of $R$ as input and outputs these messages in a uniformly random order.
	\item $A$: $\mathcal{Y}^{mn}\rightarrow\mathcal{Z}$. The analyst $A$ that takes all the output messages of S as input and runs some analysis functions on these messages.
\end{enumerate}
Such a protocol is called a \emph{single-message} protocol when $m=1$ or a \emph{multi-message} protocol when $m>1$. 


\begin{table}[t]
	\small
	\centering
	\begin{tabular}{|c|c|}
		\hline
		\textbf{Variable}&\textbf{Description}\\
		\hline
		$\mathbb{D}$&Finite and discrete domain \{1, 2 , \ldots, k\}\\
		\hline
		$k$&Size of domain $\mathbb{D}$\\
		\hline
		$\mathcal{D}$&Set of users' input values\\
		\hline
		$n$&Number of users\\
		\hline
		$\mathcal{S}$&Set of dummy points\\
		\hline
		$|\mathcal{S}|$&Total number of dummy points\\
		\hline
		$s$&Number of dummy points sent per user\\
		\hline
		$x_i\in\mathbb{D}$&Input value from the $i^\text{th}$ user\\
		\hline
		$y_{i,j}\in\mathbb{D}$&The $j^\text{th}$ value sent from the $i^\text{th}$ user\\
		\hline
		$\vec z$&The estimated frequency of elements in $\mathbb{D}$\\
		\hline
		$\mathsf{Ber}(\cdot)$&Sample a random binary value under Bernoulli distribution\\
		\hline
		$\mathsf{Unif}([k])$&Uniformly sample a random value from $\{1,2,\ldots ,k\}$\\
		\hline
	\end{tabular}
	\setlength{\abovecaptionskip}{-3ex} 
	\setlength{\belowcaptionskip}{0ex}
	\caption{Important Notations}
	\label{tab:compare_pro}
\end{table}

\section{Histogram Estimation in the shuffle model}
\subsection{Problem Statement}
\noindent\textbf{System Setting.}
The task is performed in the shuffle model. 
The system involves an analyst, a shuffler, and $n$ users.
Specifically, the analyst knows the data domain $\mathbb{D}$.
We assume the size of $\mathbb{D}$ is $k$, i.e., $\mathbb{D}=\{1,2,\ldots,k\}$.
The $i^{\text{th}}$ user has a single data $x_i\in\mathbb{D}$.
All users send perturbed data to the shuffler, and the shuffler scrambles the order before transmitting the data to the analyst.
For the $j^{\text{th}}$ item in $\mathbb{D}$, the analyst estimates frequency $\vec z[j]$ (the proportion of users who has a certain value), i.e., $\vec z[j]=\frac{1}{n}\sum_{i=1}^{n}\mathbbm{1}_{x_i=j}$. Here $\mathbbm{1}_{x_i=j}=1$ if $x_i=j$; $\mathbbm{1}_{x_i=j}=0$ otherwise.

\noindent\textbf{Threat Model.}  Similar to most shuffle-model works, we consider privacy against curious analysts. We concentrate on the standard case, in which the analyst does not collude with shufller. We say a shuffle-model protocol $P=(R,S,A)$ (see Section \ref{sect:background}) satisfies $(\epsilon,\delta)$-DP against curious analysts, if the algorithm $S\circ \cup_i R(x_i)=A(S(R(x_1),\ldots,R(x_n)))$ satisfies $(\epsilon,\delta)$-DP in the central model (see Definition \ref{def:differential-privacy}). We also study the case that the analyst colludes with the shufller such that he has a direct access to users' reports. In this case, the differential privacy of $P$ against the collusion is equivalent to the local differential privacy of $R$ (see Definition \ref{def:local-differential-privacy}).



Following \cite{balle2019privacy}, throughout this paper we consider powerful adversaris who have the same background knowledge, i.e., all other users' values except one victim's; thus our mechanisms are also secure against any weaker adversaries.
All privacy analyses are under the assumption of bounded differential privacy, thus we do not need to consider the case that some users dropout.

\subsection{Existing Related Protocols}
\label{sub:relta histogram_estimation}
There are several existing works on histogram estimation in the shuffle model.
To the best of our knowledge, the first protocol is proposed by Balle et al. \cite{balle2019privacy}, in which each user randomizes sensitive data with GRR mechanism.
Then, other mechanisms are utilized as randomizers to improve the utility of protocols.
Ghazi et al. \cite{ghazi2019power} propose two \emph{multi-message} protocols: private-coin and public-coin.
The private-coin utilizes Hadamard Response as the data randomizer.
The public-coin is based on combining the Count Min data structure \cite{cormode2005improved} with GRR \cite{wang2017locally}.
The user in the truncation-based protocol proposed by Balcer et al. \cite{balcer2019separating} sends sensitive data with at most $k$ (size of data domain) non-repeated data sampled from $\mathbb{D}$ to the shuffler.
However, the truncation-based protocol may suffer from an undesirable communication load, i.e., each user needs to send thousands of messages when there are thousands of candidate values in $\mathbb{D}$.
The SOLH proposed by Wang et al. \cite{wang2020improving} utilizes Optimal Local Hash (OLH) as the data randomizer to improve the utility when the size of $\mathbb{D}$ is large.
The correlated-noise protocol proposed by Ghazi et al. \cite{ghazi2020private2} adds noises from negative binomial distributions, which achieves accuracy that is arbitrarily close to that of CDP.

All the above existing protocols seek for more suitable randomizers to improve the utility of the shuffle model.
Different from existing works, we try to introduce a new mechanism to provide privacy protection.
Our proposed method gets more privacy benefits than existing approaches, which further improves the utility of histogram estimation in the shuffle model.

\section{DUMP Framework}
\subsection{Framework Overview}
There are three parties in DUMP framework: user side, shuffler, and analyst.
Figure \ref{fig:framework_DUMP} shows the interactions among them.

\noindent\textbf{User Side.}
There are two types of protection procedures on the user side.
One is the data randomizer.
Most existing LDP mechanisms could be categorized as one of special cases of the data randomizer such as Randomized Response (RR), Unary Encoding (UE), Optimal Local Hash (OLH), and Hadamard Response (HR).
Hence, the output of the data randomizer $R(x_i)$ could be in different forms such as element, vector, or matrix.
Besides, the data randomizer directly outputs the user's sensitive data when no privacy protection is provided by the data randomizer, i.e., $R(x_i)=x_i$.
Another is the dummy-point generator, which provides privacy protection by generating dummy points.
$s$ dummy points are uniformly random sampled from the output space of the data randomizer, where $s$ is calculated from the number of users $n$, the domain size $k$, and privacy parameters.
The output of $R(x_i)$ mixed with $s$ dummy points is then sent to the shuffler.

\noindent\textbf{Shuffler.}
The operation of shuffler in DUMP is same as the existing shuffle model\cite{bittau2017prochlo}.
The shuffler collects all messages from users and removes all implicit metadata (such as IP address, timestamps, and routing paths).
Then all messages are randomly shuffled before sending to the analyst.

\begin{figure}
	\centering
	\includegraphics[scale=0.45]{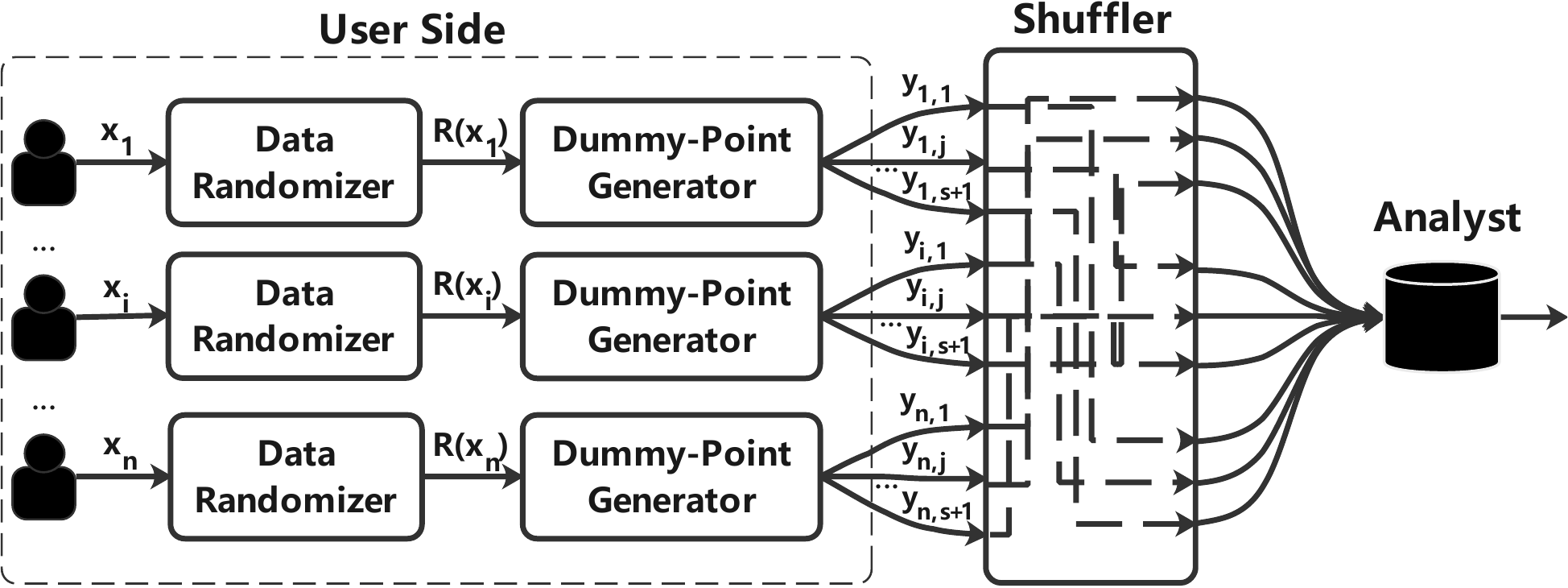}
	\vspace{-3ex}
	\setlength{\belowcaptionskip}{-3ex}
	\caption{The overview of DUMP framework}
	\label{fig:framework_DUMP}
\end{figure}

\noindent\textbf{Analyst.}
The analyst performs histogram estimation on the set of messages received from the shuffler.
It removes the bias caused by dummy points and data randomization to obtain unbiased estimation results.

\medskip
Under DUMP framework, we design two protocols, pureDUMP and mixDUMP, in Section \ref{sec:pro} to show the advantage of dummy points in improving utility.
The dummy points provides privacy protection in both two protocols.
In addition, the mixDUMP uses GRR as the data randomizer, while no protection is provided by the data randomizer in pureDUMP.
We show that the dummy points can improve the accuracy without increasing too much communication overhead under a certain privacy guarantee.
Table \ref{tab:compare_pro} presents our results alongside existing results, in which all protocols are used for $k$-bins histogram estimation in the shuffle model.

\begin{table*}[t]
	\linespread{1.2}
	\small
	\centering
	\resizebox{\textwidth}{!}
	{\begin{tabular}{|c|c|c|c|c|}
			\hline
			\textbf{Protocol}&\textbf{No. Messages per User}&\textbf{No. Bits per Message}&\textbf{Expected Mean Square Error}&\textbf{Condition}\\
			\hline
			privacy-amplification\cite{balle2019privacy}&$1$&$O(\log k)$&$O(\frac{\epsilon^2}{(\epsilon^2(n-1)-k\log (1/\delta))^2}\cdot\log (1/\delta))$&$\sqrt{\frac{1}{n-1}\cdot 14k\log (2/\delta)}<\epsilon\le 1$\\
			\hline
			SOLH\cite{wang2020improving}&$1$&$O(\log k)$&$O(\frac{1}{\epsilon^2n^2-\log (1/\delta)}\cdot\log (1/\delta))$&$\sqrt{\frac{1}{n-1}\cdot 14\log (2/\delta)}<\epsilon\le 1$\\
			\hline
			truncation-based\cite{balcer2019separating}&1+$O(k)$&$O(\log k)$&$O(\frac{1}{\epsilon^2 n^2}\cdot\log (1/\delta))$&$e^{-O(n\epsilon^2)}\le\delta<1/n$\\
			\hline
			private-coin\cite{ghazi2019power}&1+$O(\frac{1}{\epsilon^2}\log\frac{1}{\epsilon\delta})$&$O(\log n\log k)$&$O(\frac{1}{\epsilon^2 n^2}\cdot\log (1/\delta))$&$\epsilon<1$, $\delta<1/\log(k)$\\
			\hline
			public-coin\cite{ghazi2019power}&1+$O(\frac{1}{\epsilon^2}\log^3k\cdot\log(\frac{1}{\delta}\log k))$&$O(\log n+\log\log k)$&$O(\frac{1}{n}\cdot \log (\log(k) / \delta)\cdot \log^2 k)$&$\epsilon<1$, $\delta<1/\log(k)$\\
			\hline
			correlated-noise\cite{ghazi2020private2}&1+$O(\frac{1}{n\epsilon^2}k\cdot\log^2\frac{1}{\delta})$&$O(\log k)$&$O(\frac{1}{\epsilon^2 n^2})$&$\epsilon\le O(1)$\\
			\hline
			mixDUMP&1+$O(\frac{1}{n\epsilon^2}k\cdot\log\frac{1}{\delta})$&$O(\log k)$&$O(\frac{1}{\epsilon^2 n^2}\cdot(1+\frac{k}{e^{\epsilon}-1})^2\cdot\log (1/\delta))$&$\epsilon\le 1$\\
			\hline
			pureDUMP&1+$O(\frac{1}{n\epsilon^2}k\cdot\log\frac{1}{\delta})$&$O(\log k)$&$O(\frac{1}{\epsilon^2 n^2}\cdot\log (1/\delta))$&$\epsilon\le 1$\\
			\hline
	\end{tabular}}
	\setlength{\abovecaptionskip}{-3ex} 
	\setlength{\belowcaptionskip}{0ex}
	\caption{Comparison of results for the histogram estimation. Each user is assumed to hold number of $1$ value from $\mathbb{D}$.}
	\label{tab:compare_pro}
\end{table*}

\subsection{Dummy Blanket}
The \emph{blanket} intuition is first proposed by Balle et al. \cite{balle2019privacy}.
They decompose the probability distribution of the data randomizer's output into two distributions to show the privacy enhancement of the shuffle model.
One distribution is dependent on the input value and the other is independently random.
Consider inputting a dataset $\{x_1,x_2,\ldots,x_n\}$ into the data randomizer.
The output histogram $Y$ can be decomposed as $Y=Y_1\cup Y_2$, 
where $Y_1$ is the histogram formed by independently random data and $Y_2$ is the histogram formed by unchanged sensitive data.
Suppose that the analyst knows $\{x_1,x_2,\ldots,x_{n-1}\}$, and $x_n$ remains unchanged.
The analyst needs to observe $Y_1\cup\{x_n\}$ if it tries to break the privacy of $x_n$ from the dataset sent by the shuffler.
Although $x_n$ is not randomized, $Y_1$ hides $x_n$ like a random blanket.

In this paper, we propose the concept of \emph{dummy blanket} to take further advantage of the shuffle model.
We only consider the uniformly random distributed dummy blanket in this paper.
Whether participating in the randomization or not, each user sends $s$ uniformly random generated dummy points to the shuffler.
The total number of $ns$ dummy points form a histogram $Y_3$ that hides each user's data like a dummy blanket. 
In the aforementioned assumptions, the analyst needs to distinguish $x_n$ from $Y_1\cup Y_3\cup \{x_n\}$.
The dummy blanket provides the same privacy guarantee for each user, while the communication overhead is shared by all users.
It can bring significant enhancement of utility when there are a large number of users.
Moreover, the dummy blanket does not conflict with any randomization mechanism, it can be wrapped outside of the blanket formed by any data randomizer.

\section{Proposed Protocols}
\label{sec:pro}
In this section, we present pureDUMP protocol (Section \ref{subsec:pureDUMP}) and mixDUMP protocol (Section \ref{subsec:mixDUMP}).
For each proposed protocol, we analyze the privacy, accuracy, and efficiency properties.

\subsection{{pureDUMP}: A Pure DUMP protocol}
\label{subsec:pureDUMP}

\noindent\textbf{Design of Algorithms.}
All the privacy guarantee is provided by \emph{dummy blanket} in pureDUMP.
On the user side, instead of randomizing the input data $x_i\in\mathbb{D}$, $s$ dummy points are randomly sampled from the domain $\mathbb{D}$ to hide $x_i$.
Here $s$ is determined by privacy parameters when the number of users and the size of data domain are fixed, and the effect of $s$ on privacy budget $\epsilon$ will be discussed later in privacy analysis.
Then the shuffler collects all $n(s+1)$ messages from users and randomly shuffles them before sending them to the analyst.
From the receiving messages, the analyst can approximate the frequency of each $i\in\mathbb{D}$ with a debiasing standard step.
We omit the description of the shuffling operation as it's same as in the standard shuffle-based mechanisms.
The details of the user-side algorithm and the analyst-side algorithm are given in Algorithm \ref{alg:pure_c} and Algorithm \ref{alg:pure_a}, respectively.

\setlength{\textfloatsep}{1ex}
\begin{algorithm}[t]
	\caption{User-side Algorithm of PureDUMP}
	\label{alg:pure_c}
	{\small{
			\begin{algorithmic}[1]
				\Procedure{P}{$x_i$}:
				\Require {$x_i \in \mathbb{D}$, parameters $n, k, s\in \mathbb{N}^+$}
				\Ensure {Multiset $Y_i\subseteq\mathbb{D}$}
				\For {$j \leftarrow 1 \text{ to }s$} \Comment {Generate dummy points}
				\State {$y_{i,j}\leftarrow \mathsf{Unif}([k])$}
				\EndFor
				\State {Mixes $x_i$ with $\{y_{i,1},\ldots,y_{i,s}\}$ to get $Y_i:=\{y_{i,1},\ldots,y_{i,s+1}\}$}\\
				\Return {$Y_i$}
				\EndProcedure
	\end{algorithmic}}}
\end{algorithm}
\setlength{\floatsep}{1ex}
\setlength{\textfloatsep}{1ex}
\begin{algorithm}[t]
	\caption{Analyst-side Algorithm of PureDUMP}
	\label{alg:pure_a}
	{\small{
			\begin{algorithmic}[1]
				\Procedure{A}{$Y_1,\ldots,Y_n$}:
				\Require {Multiset $\{\{y_{1,1},\ldots,y_{1,s+1}\},\ldots,\{y_{n,1},\ldots,y_{n,s+1}\}\}\subseteq\mathbb{D}$, parameters $n, k, s\in \mathbb{N}^+$}
				\Ensure {Vector $\vec z\in[0,1]^k$} \Comment{Frequency of k elements}
				\For {$m\leftarrow 1 \text{ to }k$}
				\State {$\vec z[m]=\frac{1}{n}\cdot(\sum_{i \in [n] , j \in [s+1]} \mathbbm{1}_{m = y_{i,j}} - \frac{ns}{k})$}
				\EndFor\\
				\Return {$\vec z$}
				\EndProcedure
	\end{algorithmic}}}
\end{algorithm}
\setlength{\floatsep}{1ex}

\noindent\textbf{Privacy Analysis.}
Now we analyze the privacy guarantee of pureDUMP.
First, we show how much the privacy guarantee on the analyst side is provided by \emph{dummy blanket}.
We focus on the regime where the count of each element in the dummy dataset is much smaller than the total number of dummy points, which captures numerous practical scenarios since the size of data domain $\mathbb{D}$ is usually large.
Therefore, our privacy analysis is based on Lemma \ref{lem:assumption}.
In fact, the privacy analyses in \cite{balle2019privacy, wang2020improving} are also need to be based on Lemma \ref{lem:assumption}, but it is ignored.

\begin{lemma}
    \label{lem:assumption}
    Denote $\{a_1,a_2, ..., a_m\}$ are random variables that follow the binomial distribution, and $a_1+a_2+...+a_m=N$.
    For any $i,j\in[1,m], i\neq j$, $a_i$ and $a_j$ are approximately considered as mutually independent when $a_1, a_2,...,a_m<<N$.
\end{lemma}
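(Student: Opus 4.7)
The plan is to interpret the $a_i$'s as bin counts of a multinomial distribution: $N$ i.i.d.\ trials are placed into $m$ bins with probabilities $(p_1,\dots,p_m)$, and $a_i$ records the count in bin $i$ (in the paper's dummy-blanket setting $p_i = 1/k$ for all $i$). Each marginal $a_i \sim \mathrm{Binomial}(N,p_i)$, but the constraint $\sum_i a_i = N$ forbids strict independence. I aim to show that for any pair $(i,j)$ with $i\neq j$, the joint probability mass function is well-approximated by the product of the binomial marginals in the regime $a_i, a_j \ll N$.

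First I would marginalize the multinomial over all coordinates except $i$ and $j$ to obtain the exact trinomial joint distribution
\[
P(a_i=x,\, a_j=y) \;=\; \frac{N!}{x!\,y!\,(N-x-y)!}\, p_i^{x} p_j^{y} (1-p_i-p_j)^{N-x-y},
\]
and write out the product of the two binomial marginals. Forming the ratio, the factorials collapse to $(N-x)!\,(N-y)! / [N!\,(N-x-y)!]$ and the exponential pieces to $(1-p_i-p_j)^{N-x-y} / [(1-p_i)^{N-x}(1-p_j)^{N-y}]$. The rest of the argument is then to show that both factors tend to $1$ under the assumed scaling.

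For the factorial factor I would use the elementary estimate $(N-a)!/N! \approx N^{-a}$, valid whenever $a \ll N$; applied to $a=x$, $a=y$, and $a=x+y$, the three contributions cancel to leading order. For the exponential factor I would pass to logarithms and apply $\ln(1-t) = -t + O(t^2)$; the first-order terms of the form $-Np_i$ and $-Np_j$ appearing in the numerator match those appearing in the denominator, so only $O(N(p_i+p_j)^2)$ error terms survive, which are negligible when each $p_i$ is small.

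The hardest part is that the lemma is stated informally, so the cleanest rigorous route is to invoke the classical Poisson approximation to the multinomial: couple $(a_1,\dots,a_m)$ to a vector of independent Poisson variables $(\tilde a_1,\dots,\tilde a_m)$ with matching means $Np_i$, and bound the total variation distance by $O(N \sum_i p_i^2)$ via a standard coupling or Stein's method. This quantity is small precisely in the regime where each typical count $Np_i$ is small compared to $N$. Since any subfamily of independent Poissons is independent, approximate pairwise independence of the $a_i$'s follows up to this vanishing error, and I would close by emphasizing that the lemma should be read as a leading-order approximation rather than an exact equality, sufficient for the downstream privacy analyses that invoke it.
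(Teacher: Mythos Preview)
Your primary plan—form the ratio of the trinomial joint $P(a_i=x,a_j=y)$ to the product of the two binomial marginals and show it tends to $1$—is exactly what the paper does. The paper works with the equivalent quantity $P(a_j=x\mid a_i=c)/P(a_j=x)$ in the uniform case $p_i=1/m$ and sandwiches it between $\bigl(\tfrac{m-2}{m-1}\bigr)^{N-c-x}\bigl(\tfrac{N-x-c+1}{N-c+1}\bigr)^c$ and $\bigl(\tfrac{m}{m-1}\bigr)^{c+x}$ via elementary manipulations, whereas you reach the same conclusion through Stirling-type and Taylor expansions. One small slip: after the first-order log expansion of the exponential factor, the surviving term is $xp_j+yp_i$, not zero; it only becomes $O(N(p_i+p_j)^2)$ once you substitute the typical scale $x\sim Np_i$, $y\sim Np_j$, so that step needs to be stated more carefully.

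The proposed ``cleanest rigorous route'' via a global Poisson coupling, however, does not work as written. If $(a_1,\dots,a_m)$ is multinomial then $\sum_i a_i=N$ deterministically, while for independent Poissons $(\tilde a_1,\dots,\tilde a_m)$ the sum is $\mathrm{Poisson}(N)$; the total-variation distance between a point mass at $N$ and $\mathrm{Poisson}(N)$ already tends to $1$, so $d_{\mathrm{TV}}\bigl((a_i),(\tilde a_i)\bigr)\to 1$ rather than $O(N\sum_i p_i^2)$. In the paper's regime one has $N=ns$ and $m=k$ with $ns\gg k$ (this is needed for the privacy bound $\epsilon_d\approx\sqrt{14k\ln(2/\delta)/(ns-1)}$ to be below $1$), so even a correct global Poisson bound such as $O(N/m)$ would be large. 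A coupling argument can be salvaged, but only at the bivariate level: compare the trinomial marginal of $(a_i,a_j)$ directly to the product of two binomials, where the relevant error is governed by the correlation $-\sqrt{p_ip_j}\approx -1/m$, not by the full-vector coupling. In short, keep your direct ratio argument (it matches the paper) and drop or repair the Poisson detour.
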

The proof of Lemma \ref{lem:assumption} is in Appendix \ref{app:assumption}. Then we show the analysis result of pureDUMP in Theorem \ref{the:pureprivacy}.
\begin{theorem}
	\label{the:pureprivacy}
	PureDUMP satisfies ($\epsilon_d, \delta_d$)-DP against the analyst for any $k, s, n \in \mathbb{N}^+$, $\epsilon_d \in (0,1]$, and $\delta_d \in (0,0.2907]$, where      
	\[\epsilon_d= \sqrt{\frac{14 k \cdot \ln{(2/\delta_d)}}{|\mathcal{S}|-1}},\ |\mathcal{S}|=ns\]
\end{theorem}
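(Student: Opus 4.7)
The plan is to reduce the $(\epsilon_d,\delta_d)$-DP guarantee against the analyst to a concentration argument on two (nearly) independent binomial counts. Because the shuffler outputs a uniformly random permutation and the analyst only performs histogram estimation, the view against which we must argue privacy is equivalent to the multiset of $n(s+1)$ values, i.e.\ the histogram $h=(h_1,\ldots,h_k)$ over $\mathbb{D}$. For neighboring datasets $\mathcal{D}=\{x_1,\ldots,x_{n-1},a\}$ and $\mathcal{D}'=\{x_1,\ldots,x_{n-1},b\}$ with $a\neq b$, the contributions from the first $n-1$ users are identical, and swapping $a$ for $b$ shifts exactly one count from bin $a$ to bin $b$. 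So the analysis reduces to comparing the joint distribution of the dummy-induced counts $(D_a,D_b)$ under a $+1$ shift in $a$ versus a $+1$ shift in $b$, where $D_j\sim\mathrm{Bin}(ns,1/k)$.

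First I would invoke Lemma~\ref{lem:assumption}: in the intended regime $D_j\approx ns/k\ll ns$, the multinomial components $D_a$ and $D_b$ are treated as independent binomials. The privacy-loss random variable at a realized observation $(\alpha,\beta)$ at positions $(a,b)$ is then the log of a product of two binomial ratios; using the identity $\Pr[\mathrm{Bin}(N,p)=t-1]/\Pr[\mathrm{Bin}(N,p)=t]=t(1-p)/((N-t+1)p)$ with $p=1/k$, the two $(k-1)$ factors cancel and the ratio collapses to the clean expression $\alpha(ns-\beta+1)/\bigl(\beta(ns-\alpha+1)\bigr)$. Taking logarithms and Taylor-expanding around the common mean $\mu=ns/k$ bounds the privacy loss by $O\bigl((|\alpha-\mu|+|\beta-\mu|)\cdot k/(ns)\bigr)$ up to lower-order terms.

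Next I would apply a Chernoff tail bound to each of $D_a$ and $D_b$: with probability at least $1-\delta_d$ (a union bound over the two bins explains the factor $2$ inside $\ln(2/\delta_d)$), both $|D_j-\mu|$ are bounded by $O\bigl(\sqrt{\mu\ln(2/\delta_d)}\bigr)=O\bigl(\sqrt{(ns/k)\ln(2/\delta_d)}\bigr)$. Substituting this concentration window into the log-ratio from the previous step gives a privacy loss of order $\sqrt{k\ln(2/\delta_d)/(ns-1)}$, matching the stated form of $\epsilon_d$, with the constant $14$ absorbing the Chernoff constant and the second-order terms in the Taylor expansion. The standard reduction that a pointwise privacy-loss bound on a $(1-\delta_d)$-probability event yields $(\epsilon_d,\delta_d)$-DP then completes the proof.

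The main obstacle will be twofold. The first is justifying the independence step cleanly: the $D_j$'s are jointly multinomial and so only weakly negatively correlated, and one must argue that in the regime $D_j\ll ns$ the joint PMF of $(D_a,D_b)$ is close in total variation to the product of its marginals, folding the residual error into the $\delta_d$ slack (this is exactly the gap that the authors note is glossed over in \cite{balle2019privacy,wang2020improving}). The second, more delicate step is pinning down the constant $14$: this requires tracking the second-order Taylor terms in $\ln\bigl[\alpha(ns-\beta+1)/(\beta(ns-\alpha+1))\bigr]$ together with a sharp Chernoff bound for $\mathrm{Bin}(ns,1/k)$, and separately handling the atypical realizations where $\alpha$ or $\beta$ lies near $0$ or $ns$ by sweeping them into the failure event. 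Once both points are settled the bound $\epsilon_d=\sqrt{14k\ln(2/\delta_d)/(|\mathcal{S}|-1)}$ follows.
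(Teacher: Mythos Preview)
Your plan is sound and would prove the theorem, but the paper's argument differs in two places that make it noticeably cleaner.

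First, the paper computes the likelihood ratio \emph{exactly} from the multinomial law of the dummies, without first invoking the independence approximation. Writing $s_1,s_2$ for the dummy counts at bins $1$ and $2$ needed to explain the observed $\mathcal{V}$, the ratio of the two multinomial probabilities collapses to precisely $s_1/s_2$ (the other factorial terms cancel). Your expression $\alpha(ns-\beta+1)/\bigl(\beta(ns-\alpha+1)\bigr)$ arises only because you replace the multinomial by a product of independent binomials \emph{before} taking the ratio; it is an approximation to $s_1/s_2$ that agrees to leading order when $\alpha,\beta\ll ns$, but the exact multinomial calculation is both simpler and does not spend the independence assumption at this stage. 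In the paper, Lemma~\ref{lem:assumption} is invoked only afterwards, for the concentration step on $(S_1,S_2)$.

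Second, the paper avoids Taylor expansion entirely. To bound $\Pr[S_1/S_2\geq e^{\epsilon_d}]$ it uses the elementary inclusion
\[
\{S_1/S_2\geq e^{\epsilon_d}\}\subseteq\{S_1\geq ce^{\epsilon_d/2}\}\cup\{S_2\leq ce^{-\epsilon_d/2}\},\qquad c=\frac{|\mathcal{S}|-1}{k},
\]
and then applies a one-sided multiplicative Chernoff bound to each event separately. The constant $14$ then drops out of two explicit inequalities valid for $\epsilon_d\in(0,1]$: namely $1-e^{-\epsilon_d/2}\geq(1-e^{-1/2})\epsilon_d\geq\epsilon_d/\sqrt{7}$ for the lower-tail piece, and $e^{\epsilon_d/2}-1-1/c\geq\tfrac{25}{54}\epsilon_d$ (once $c\geq 27/\epsilon_d$) for the upper-tail piece; the side condition $\delta_d\leq 0.2907$ is precisely what ensures $14\ln(2/\delta_d)/\epsilon_d^2\geq 27/\epsilon_d$ so that the first constraint on $c$ dominates. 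Your Taylor-expansion route could reach the same endpoint, but this geometric split makes the provenance of the constant transparent and removes the need to control second-order remainder terms or to treat extreme realizations of $\alpha,\beta$ by a separate case analysis.
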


\begin{proof}
	(Sketch)
	For any two neighboring datasets $\mathcal{D}$ and $\mathcal{D'}$, w.l.o.g., we assume they differ in the $n^\text{th}$ value, and $x_n=1$ in $\mathcal{D}$, $x'_n=2$ in $\mathcal{D'}$.
	Let $\mathcal{S}$ and $\mathcal{S'}$ be two dummy datasets with $|\mathcal{S}|=|\mathcal{S}'|=ns$ dummy points sent by users.
	What the analyst receives is a mixed dataset $\mathcal{V}$ that is a mixture of user dataset $\mathcal{D}$(or $\mathcal{D'})$ and the dummy dataset $\mathcal{S}$(or $\mathcal{S'})$. 
	We denote $\mathcal{V}=\mathcal{D}\cup\mathcal{S}$(or $\mathcal{V}=\mathcal{D'}\cup\mathcal{S'}$).
	Denote $s_j$ as the number of value $j$ in the dummy dataset.
	To get the same $\mathcal{V}$, there should be a dummy point equals to 2 in $\mathcal{S}$ but equals to 1 in $\mathcal{S'}$.
	Thus the numbers of $k$ values in other $|\mathcal{S}|-1$ dummy points are $[s_1-1,s_2,\ldots,s_k]$ in $\mathcal{S}$ and $[s_1,s_2-1,\ldots,s_k]$ in $\mathcal{S'}$.
	Then we have
	\[\setlength\abovedisplayskip{0.5ex}
	\setlength\belowdisplayskip{0.5ex}
	\begin{aligned}
		\frac{\Pr [\mathcal{D}\cup\mathcal{S} = \mathcal{V}]}{\Pr [\mathcal{D'}\cup\mathcal{S'} = \mathcal{V}]} = \frac{(_{s_1 - 1,s_2, \ldots, s_k}^{\ \ \ \ |\mathcal{S}|-1})}{(_{s_1, s_2 - 1, \ldots,s_k}^{\ \ \ \ |\mathcal{S}|-1})} = \frac{s_1}{s_2}
	\end{aligned}\]
	
	As dummy points are all uniformly random sampled from $\mathbb{D}$, both $s_1$ and $s_2$ follow binomial distributions, where  $s_1$ follows $Bin(|\mathcal{S}|-1,\frac{1}{k})+1$ and $s_2$ follows $Bin(|\mathcal{S}|-1,\frac{1}{k})$.
	Denote $S_1$ as $Bin(|\mathcal{S}|-1,\frac{1}{k})+1$ and $S_2$ as $Bin(|\mathcal{S}|-1,\frac{1}{k})$.
	Here, $s_1$ and $s_2$ can be considered as mutually independent variables according to Lemma \ref{lem:assumption}.
	Then the probability that pureDUMP does not satisfy $\epsilon_d$-DP is
	\[\setlength\abovedisplayskip{0.5ex}
	\setlength\belowdisplayskip{0.5ex}
	\begin{aligned}
		\Pr [\frac{\Pr [\mathcal{D}\cup\mathcal{S} = \mathcal{V}]}{\Pr [\mathcal{D'}\cup\mathcal{S'} = \mathcal{V}]}\ge e^{\epsilon_d}]=\Pr [\frac{S_1}{S_2} \geq e^{\epsilon_d}] \leq \delta_d
	\end{aligned}\]
	
	Let $c=E[S_2]=\frac{|\mathcal{S}|-1}{k}$, the $\Pr [\frac{S_1}{S_2} \geq e^{\epsilon_d}]$ can be divided into two cases with overlap that $S_1\ge ce^{\epsilon_d/2}$ and $S_2\le ce^{-\epsilon_d/2}$.
	According to multiplicative Chernoff bound, we have
	\[\setlength\abovedisplayskip{0.5ex}
	\setlength\belowdisplayskip{0.5ex}
	\begin{aligned}
		\Pr [\frac{S_1}{S_2} \geq e^{\epsilon_d}]&\le \Pr [S_1\ge ce^{\epsilon_d/2}]+\Pr [S_2\le ce^{-\epsilon_d/2}]\\
		&\le e^{-\frac{c}{3}(e^{\epsilon_d}-1-1/c)^2}+e^{-\frac{c}{2}(1-e^{-\epsilon_d/2})^2}
		\le \delta_d\\
	\end{aligned}\]
	
	\noindent Both $e^{-\frac{c}{3}(e^{\epsilon_d}-1-1/c)^2}$ and $e^{-\frac{c}{2}(1-e^{-\epsilon_d/2})^2}$ are no greater than $\frac{\delta_d}{2}$. 
	We can deduce that $c$ always satisfies $c\ge \frac{14\ln{(2/\delta_d)}}{\epsilon_d^2}$ when $\epsilon_d\le 1$ and $\delta_{d}\le 0.2907$.
	Substitute $c=\frac{|\mathcal{S}|-1}{k}$ into $c\ge \frac{14\ln{(2/\delta_d)}}{\epsilon_d^2}$, we can get the upper bound of $\epsilon_d$.
	
	The detailed proof is deferred to Appendix \ref{app:pureprivacy}.
\end{proof}

Then, we show that when the shuffler colludes with the analyst, each individual user's data can also be protected in pureDUMP.

\begin{corollary}
	PureDUMP satisfies $(\epsilon_l, \delta_l)$-LDP against the shuffler for any $k,s\in \mathbb{N}^+$, $\epsilon_l \in (0,1]$, and $\delta_l \in (0,0.2907]$, where      
	\[\epsilon_l= \sqrt{\frac{14 k \cdot \ln{(2/\delta_l)}}{s-1}}\]
\end{corollary}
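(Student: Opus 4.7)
The plan is to mirror the proof of Theorem \ref{the:pureprivacy} at the level of a single user, since when the shuffler colludes with the analyst, no cross-user shuffling amplification is available and the privacy of user $i$ must come entirely from the $s$ dummy points that $i$ herself generates. Concretely, I instantiate the LDP definition by fixing two candidate inputs $v = 1$ and $v' = 2$ and analyzing the distribution of the multiset $Y_i = \{v\} \cup \mathcal{S}$ versus $Y_i = \{v'\} \cup \mathcal{S}'$ that the colluding shuffler receives, where $\mathcal{S}$ and $\mathcal{S}'$ are each $s$ i.i.d.\ uniform samples from $[k]$.

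First, I would compute the likelihood ratio for a fixed observed multiset $\mathcal{V}$ of size $s+1$. As in the theorem's proof, producing the same $\mathcal{V}$ from the two inputs forces one dummy position to take value $2$ under input $v=1$ and value $1$ under input $v'=2$. Letting $s_j$ denote the count of value $j$ among the remaining $s-1$ free dummies, the multinomial coefficients collapse and give
\[
\frac{\Pr[\{1\} \cup \mathcal{S} = \mathcal{V}]}{\Pr[\{2\} \cup \mathcal{S}' = \mathcal{V}]} = \frac{s_1}{s_2},
\]
with $s_1 \sim \mathrm{Bin}(s-1, 1/k)+1$ and $s_2 \sim \mathrm{Bin}(s-1, 1/k)$. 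Invoking Lemma \ref{lem:assumption}, I treat $s_1$ and $s_2$ as approximately independent (the regime of interest is $s_1, s_2 \ll s-1$).

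Second, I bound the bad event $\Pr[s_1/s_2 \ge e^{\epsilon_l}]$ by splitting it as in Theorem \ref{the:pureprivacy}: setting $c = \mathbb{E}[s_2] = (s-1)/k$, I use $\Pr[s_1/s_2 \ge e^{\epsilon_l}] \le \Pr[s_1 \ge c e^{\epsilon_l/2}] + \Pr[s_2 \le c e^{-\epsilon_l/2}]$ and apply multiplicative Chernoff bounds to each summand. Requiring each to be at most $\delta_l/2$ yields, after the same algebraic manipulation used in the theorem, the condition $c \ge 14 \ln(2/\delta_l)/\epsilon_l^2$ valid for $\epsilon_l \in (0,1]$ and $\delta_l \in (0, 0.2907]$. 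Substituting $c = (s-1)/k$ produces the claimed bound
\[
\epsilon_l = \sqrt{\frac{14 k \cdot \ln(2/\delta_l)}{s-1}}.
\]

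The argument is essentially the earlier one with $|\mathcal{S}| = ns$ replaced by $s$, so the technical work is largely bookkeeping. The only step that deserves care is the conceptual reduction at the start: I must justify that, against a shuffler who sees each user's uploaded multiset $Y_i$ directly, the LDP condition reduces to comparing the distributions of $\{v\} \cup \mathcal{S}$ and $\{v'\} \cup \mathcal{S}'$, so that privacy amplification from aggregating over $n$ users is unavailable and only the user's own $s$ dummy points can serve as the blanket. Once this is stated cleanly, the Chernoff calculations from Theorem \ref{the:pureprivacy} port over verbatim with the obvious substitution, and no new inequalities are needed.
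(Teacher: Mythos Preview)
Your proposal is correct and follows exactly the paper's approach: the paper's proof simply states that the argument is the same as Theorem \ref{the:pureprivacy} except that each user's data is protected only by the $s$ dummy points she herself generates, i.e., the dummy blanket now has size $s$ rather than $ns$. Your write-up spells out this substitution and the accompanying LDP reduction in more detail than the paper does, but the method is identical.
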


\begin{proof}
	The proof is similar to Theorem \ref{the:pureprivacy}, except that each sensitive data only protected by $s$ dummy points generated by its owner.
	More specifically, LDP protection is provided by \emph{dummy blanket} composed of $s$ dummy points before the shuffling operation.
\end{proof}

\noindent\textbf{Accuracy Analysis.}
We analyze the accuracy of pureDUMP by measuring the Mean Squared Error (MSE) of the estimation results.
\begin{equation}
	\label{eq:mse_the}
	MSE = \frac{1}{k} \sum_{v \in \mathbb{D}} E[(\tilde f_v - f_v)^2]
\end{equation}
\noindent where $f_v$ is the true frequency of element $v \in \mathbb{D}$ in users’ dataset, and $\tilde f_v$ is the estimation result of $f_v$.
We first show $\tilde f_v$ of pureDUMP is unbiased in Lemma \ref{lem:pureuti}.

\begin{lemma}
	\label{lem:pureuti}
	The estimation result $\tilde f_v$ for any $v\in\mathbb {D}$ in pureDUMP is an unbiased estimation of $f_v$.
\end{lemma}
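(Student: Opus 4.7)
The plan is to compute $E[\tilde f_v]$ directly from the estimator formula in Algorithm \ref{alg:pure_a}. Writing $\tilde f_v = \vec z[v] = \frac{1}{n}\bigl(\sum_{i\in[n], j\in[s+1]} \mathbbm{1}_{v = y_{i,j}} - \tfrac{ns}{k}\bigr)$, the key observation is that among the $s+1$ messages sent by user $i$, exactly one is the true value $x_i$ and the remaining $s$ are i.i.d.\ uniform samples from $\mathbb{D} = [k]$ generated by the dummy-point generator in Algorithm \ref{alg:pure_c}. So I would split the sum inside the estimator into these two contributions.

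Next, I would apply linearity of expectation. The contribution from the true values is deterministic, summing to $\sum_{i=1}^{n} \mathbbm{1}_{x_i=v} = n f_v$. For each of the $ns$ dummy points, uniformity over $\mathbb{D}$ gives $\Pr[y_{i,j} = v] = 1/k$, so the expected contribution from dummies is $ns/k$. Combining, the expected count of $v$ in the collected multiset is $n f_v + ns/k$.

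Plugging this back into the estimator,
\[
E[\tilde f_v] \;=\; \frac{1}{n}\Bigl(n f_v + \frac{ns}{k} - \frac{ns}{k}\Bigr) \;=\; f_v,
\]
which is exactly the claim. The debiasing term $ns/k$ subtracted by the analyst is precisely designed to cancel the expected dummy contribution.

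There is no real obstacle here: the argument is essentially a one-line linearity-of-expectation calculation, and it works precisely because the dummy-point generator samples uniformly from $\mathbb{D}$ (so the bias it introduces is data-independent and known in closed form). The only subtlety worth stating explicitly in the proof is that the dummy points are drawn independently of the users' private values $x_i$, which justifies separating the two contributions under expectation. Unbiasedness of $\tilde f_v$ is what will later let us equate the MSE in Eq.\ (\ref{eq:mse_the}) with the variance of $\tilde f_v$, simplifying the subsequent accuracy analysis.
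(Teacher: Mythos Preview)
Your proposal is correct and essentially identical to the paper's proof: both write out the estimator from Algorithm~\ref{alg:pure_a}, apply linearity of expectation to compute $E\bigl[\sum_{i,j}\mathbbm{1}_{v=y_{i,j}}\bigr] = nf_v + ns/k$, and cancel the debiasing term $ns/k$ to obtain $f_v$. You are slightly more explicit than the paper in separating the true-value and dummy contributions and in noting the independence of the dummies from the $x_i$, but the argument is the same one-line linearity computation.
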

\begin{proof}
	\[\setlength\abovedisplayskip{0.5ex}
	\setlength\belowdisplayskip{0.5ex}
	\begin{aligned}
		E[\tilde f_v] & = E \left[\frac{1}{n}\sum_{i \in [n], j \in [s+1]} \left(\mathbbm{1}_{v = y_{i,j}} - \frac{ns}{k}\right) \right] \\
		& =\frac{1}{n} E\left[\sum_{i \in [n], j \in [s+1]} \mathbbm{1}_{v=y_{i,j}}\right]-\frac{s}{k}\\
		& =\frac{1}{n}(n f_v + \frac{ns}{k})-\frac{s}{k} \\
		& =f_v \\
	\end{aligned}\]
\end{proof}

Since $\tilde f_v$ is an unbiased estimation of $f_v$, the MSE of $\tilde f_v$ in pureDUMP is equal to the variance of $\tilde f_v$ according to Equation (\ref{eq:mse_the}).

\begin{equation}
	\label{eq:mse}
	\setlength\abovedisplayskip{0ex}
	\setlength\belowdisplayskip{0ex}
	\begin{aligned}
		MSE & = \frac{1}{k} \sum_{v \in \mathbb{D}} E[(\tilde f_v-f_v)^2] = \frac{1}{k} \sum_{v \in \mathbb{D}} (Var[\tilde f_v]+[E[\tilde f_v]-f_v]^2) \\
		& =\frac{1}{k}\sum_{v\in \mathbb{D}} Var(\tilde f_v)\\
	\end{aligned}
\end{equation}

Now we calculate the MSE of pureDUMP in Theorem \ref{the:pureuti}.

\begin{theorem} \label{the:pureuti}
	The MSE of frequency estimation in pureDUMP is 
	\[\setlength\abovedisplayskip{0.6ex}
	\setlength\belowdisplayskip{0.5ex}
	\frac{s(k-1)}{n k^2}\].
\end{theorem}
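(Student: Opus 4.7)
The plan is to leverage the unbiasedness established in Lemma \ref{lem:pureuti} together with Equation (\ref{eq:mse}), which reduces the MSE to the average of the variances $\mathrm{Var}(\tilde f_v)$ over $v \in \mathbb{D}$. So the whole task is to compute $\mathrm{Var}(\tilde f_v)$ for an arbitrary $v$, multiply by $\tfrac{1}{k}$ times the number of domain elements, and simplify.

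First, I would isolate the source of randomness in the estimator. Since each user's true value $x_i$ is non-random from the estimator's perspective, the sum $\sum_{i \in [n], j \in [s+1]} \mathbbm{1}_{v=y_{i,j}}$ decomposes as $n f_v$ (a deterministic contribution from the $n$ true values) plus a random contribution $X_v$ counting how many of the $ns$ dummy points take the value $v$. Because each dummy point is independently drawn via $\mathsf{Unif}([k])$, we get $X_v \sim \mathsf{Bin}(ns, 1/k)$. Substituting this into the definition of $\tilde f_v$ from Algorithm \ref{alg:pure_a} yields $\tilde f_v = f_v + \tfrac{1}{n}\bigl(X_v - \tfrac{ns}{k}\bigr)$, so $\mathrm{Var}(\tilde f_v) = \tfrac{1}{n^2}\mathrm{Var}(X_v)$.

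Next I would plug in the standard binomial variance $\mathrm{Var}(X_v) = ns \cdot \tfrac{1}{k}\bigl(1-\tfrac{1}{k}\bigr) = \tfrac{ns(k-1)}{k^2}$, which gives $\mathrm{Var}(\tilde f_v) = \tfrac{s(k-1)}{nk^2}$. Note this quantity does not depend on $v$, so the final averaging step from Equation (\ref{eq:mse}) is trivial: $\mathrm{MSE} = \tfrac{1}{k}\sum_{v \in \mathbb{D}} \tfrac{s(k-1)}{nk^2} = \tfrac{s(k-1)}{nk^2}$, matching the claim.

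There is essentially no serious obstacle here; the proof is a one-line variance computation once the binomial structure is identified. The only subtlety worth flagging is that the marginal distribution of $X_v$ really is $\mathsf{Bin}(ns, 1/k)$ (no dependency on other users' true values or other coordinates of the histogram), which follows immediately from the i.i.d.\ uniform sampling of dummy points in Algorithm \ref{alg:pure_c}. This is distinct from the correlation issue addressed in Lemma \ref{lem:assumption}, which concerned the \emph{joint} behavior of multiple counts $s_1, s_2, \ldots$; here we only need a marginal variance, so the approximation in Lemma \ref{lem:assumption} is not invoked.
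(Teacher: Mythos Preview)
Your proposal is correct and follows essentially the same approach as the paper: reduce MSE to variance via unbiasedness (Lemma~\ref{lem:pureuti} and Equation~(\ref{eq:mse})), recognize that the only randomness comes from the $ns$ i.i.d.\ $\mathsf{Unif}([k])$ dummy points so that the relevant count is $\mathsf{Bin}(ns,1/k)$, and read off the binomial variance. Your explicit remark that only the marginal variance is needed (so Lemma~\ref{lem:assumption} is not invoked) is a helpful clarification but does not alter the argument.
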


\begin{proof}
	\[\setlength\abovedisplayskip{0.5ex}
	\setlength\belowdisplayskip{0.5ex}
	\begin{aligned}
		Var[\tilde f_v] & =Var\left[\frac{1}{n} \left(\sum_{i \in [n], j \in [s+1]} \mathbbm{1}_{v=y_{i,j}}-\frac{ns}{k}\right)\right]\\
		& =\frac{1}{n^2} Var\left[\sum_{i \in [n], j\in [s+1]} \mathbbm{1}_{v=y_{i,j}}\right]\\
		& =\frac{1}{n^2}(\frac{1}{k} \cdot (1-\frac{1}{k}) \cdot ns)\\
		& =\frac{s(k-1)}{n k^2}\\
	\end{aligned}\]
	
	According to Equation (\ref{eq:mse}), we have
	\[\setlength\abovedisplayskip{0.5ex}
	\setlength\belowdisplayskip{0.5ex}
	\begin{aligned}
		MSE = \frac{1}{k} \sum_{v \in \mathbb{D}} E[(\tilde f_v - f_v)^2] = \frac{1}{k} \sum_{v \in \mathbb{D}} Var(\tilde f_v) = \frac{s(k-1)}{n k^2}
	\end{aligned}\]			
\end{proof}

\noindent\textbf{Efficiency Analysis.}
Now we analyze the efficiency of pureDUMP.
For an input $x_i\in\mathbb{D}$, the output of each user side $P(x_i)$ consists of $s+1$ messages of length $O(\log k)$ bits.
The runtime of the user side $P(x_i)$ is at most $O(s)$.
Moreover, $s$ is at most $O(k\ln(1/\delta)/(n\epsilon^2))$ when pureDUMP satisfies $(\epsilon,\delta)$-DP on the analyst side.
The shuffler transmits $n(s+1)$ messages to the analyst with the space of $O(n s\log k)$.
The runtime of the analyst on input $n(s+1)$ messages is at most $O(kns)$ and its output has space $O(k\log(n(s+1)))$ bits.

According to Theorem \ref{the:pureuti}, the expected MSE of estimation in pureDUMP is $O(\log(1/\delta)/(\epsilon^2n^2))$.
Note that the domain size has almost no effect on the estimation error because more dummy points are generated to offset the error caused by the larger domain. 
When the user side only uses a data randomizer that satisfies $\epsilon_l$-LDP to provide protection, we can deduce from results in \cite{balle2019privacy} that the expected MSE is $O(\epsilon^2\log (1/\delta)/(\epsilon^2(n-1)-k\log (1/\delta))^2)$.
By comparison, the estimation error of pureDUMP is always smaller than the method that only using data randomization on the user side, which proves the \emph{dummy blanket} dominates the randomization in the shuffle model.

\subsection{{mixDUMP}: A DUMP Protocol with GRR}
\label{subsec:mixDUMP}

\noindent\textbf{Design of Algorithms.}
mixDUMP utilizes GRR as the data randomizer to provide privacy guarantee along with dummy points.
The details of the algorithm are given in Algorithm \ref{alg:mix_r}.
Note that GRR mechanism utilized in mixDUMP is a little different from the form in \cite{wang2017locally}.
To analyze the privacy benefit with the idea of \emph{blanket}, we use the technique proposed in \cite{balle2019privacy} to decompose GRR into
\[\setlength\abovedisplayskip{0.5ex}
\setlength\belowdisplayskip{0.5ex}
\begin{aligned}
	\forall_{x_r\in\mathbb{D}} \Pr [GRR(x) = x_r] = (1 - \lambda)\Pr [x=x_r] + \lambda \Pr [\mathsf{Unif}([k])=x_r]
\end{aligned}\]

\noindent where $\Pr [x=x_r]$ is the real data distribution that depends on $x$, and $\Pr [\mathsf{Unif}([k])=x_r]$ is the uniformly random distribution.
With probability $1-\lambda$, the output is true input, and with probability $\lambda$, the output is random sampled from the data domain.
Given the probability of $q=1/(e^{\epsilon_l}+k-1)$ in GRR, it has 
\[\setlength\abovedisplayskip{0.5ex}
\setlength\belowdisplayskip{0.5ex}
\begin{aligned}
	\lambda=kq=\frac{k}{e^{\epsilon_l} +k-1}
\end{aligned}\]

\noindent where $\epsilon_l$ is the privacy budget of LDP.

The procedure of generating dummy points is same as Algorithm \ref{alg:pure_c}.
Same as before, we omit the description of shuffling operation.
The details of the user-side algorithm and the analyst-side algorithm are shown in Algorithm \ref{alg:mix_c} and Algorithm \ref{alg:mix_a}, respectively.

\setlength{\textfloatsep}{1ex}
\begin{algorithm}[t]
	\caption{Data Randomizer of MixDUMP}
	\label{alg:mix_r}
	{\small{
			\begin{algorithmic}[1]
				\Procedure{R}{$x_i$}:
				\Require {$x_i \in \mathbb{D}$, parameters $k\in \mathbb{N}^+$ and $\lambda\in(0,\ 1]$}
				\Ensure {$x_r\in\mathbb{D}$}
				\State {$b\leftarrow \mathsf{Ber}(\lambda)$} \Comment {Randomize data with probability $\lambda$}
				\If {$b=0$}
				\State{$x_r\leftarrow x$}
				\Else 
				\State {$x_r\leftarrow \mathsf{Unif}([k])$}
				\EndIf
				\EndProcedure
	\end{algorithmic}}}
\end{algorithm}
\setlength{\floatsep}{1ex}

\noindent\textbf{Privacy Analysis.}
We show how much the privacy guarantee is provided by both randomization and \emph{dummy blanket} to compare with pureDUMP.
In particular, we analyze the privacy amplification provided by \emph{dummy blanket} locally to distinguish the work of adding dummy points on the shuffler side \cite{wang2020improving}. 
We first show the privacy guarantee on the analyst side in Theorem \ref{the:mixprivacy}.
\begin{theorem}
	\label{the:mixprivacy}
	MixDUMP satisfies ($\epsilon_s, \delta_s$)-DP for any $k, s, n \in \mathbb{N}^+$, $\epsilon_s \in (0, 1]$, $\lambda \in (0, 1]$, and $\delta_s \in (0,0.5814]$, where   
	\[\epsilon_s=\sqrt{\frac{14k \ln{(4/\delta_s)}}{|\mathcal{S}|+(n-1)\lambda-\sqrt{2(n-1)\lambda \ln{(2/\delta_s)}}-1}},\ |\mathcal{S}|=ns\]
	
\end{theorem}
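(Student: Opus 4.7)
The plan is to generalize the argument for Theorem \ref{the:pureprivacy} to account for the additional uniform randomness contributed by GRR, beyond the dummy points themselves. Using the decomposition stated in the protocol, namely $\Pr[\mathrm{GRR}(x) = x_r] = (1-\lambda)\Pr[x=x_r] + \lambda\Pr[\mathsf{Unif}([k]) = x_r]$, I would view the view-generation as a two-stage process: first each user independently flips a $\mathsf{Ber}(\lambda)$ coin to decide whether their GRR output is ``true'' or ``random,'' and only then the random-marked outputs (together with all dummy points) are instantiated as uniform samples from $\mathbb{D}$. Under this lens, the analyst's view decomposes into (a) the victim's GRR output, (b) the other $n-1$ users' GRR outputs, and (c) the $|\mathcal{S}| = ns$ dummy points.

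The argument then splits on the victim's coin. If the victim's GRR produced a uniform sample (probability $\lambda$), the view is identically distributed under the neighboring datasets $\mathcal{D}$ and $\mathcal{D}'$, so the DP inequality is trivially satisfied. Otherwise, the victim contributes their true value $x_n$ (resp.\ $x_n'$), and we must rely on a blanket of uniformly random points to hide it. This blanket pools (i) all $|\mathcal{S}|$ dummy points and (ii) the GRR outputs of the other $n-1$ users whose coin landed ``random.'' Letting $B$ count the latter, we have $B \sim \mathrm{Bin}(n-1, \lambda)$ with $\mathbb{E}[B] = (n-1)\lambda$, and the effective blanket size is $|\mathcal{S}| + B$.

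Next I would invoke a one-sided Chernoff/Hoeffding bound on $B$ to show that
\[
B \,\geq\, (n-1)\lambda - \sqrt{2(n-1)\lambda\ln(2/\delta_s)}
\]
holds with probability at least $1 - \delta_s/2$. Conditioning on this event, I would replay the counting/Chernoff computation from the proof of Theorem \ref{the:pureprivacy} verbatim, but with $|\mathcal{S}| + B$ playing the role of $|\mathcal{S}|$ and with failure parameter $\delta_s/2$ in place of $\delta_d$; this replaces the $\ln(2/\delta_d)$ in the pureDUMP bound by $\ln(2/(\delta_s/2)) = \ln(4/\delta_s)$ and yields a ratio of the form $s_1/s_2$ over the enlarged pool, controlled by the same multiplicative Chernoff argument. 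A union bound over the two failure modes (namely $B$ being too small, and the pureDUMP-style counting argument failing given a large enough blanket) accumulates to the claimed failure probability $\delta_s$. The restriction $\delta_s \leq 0.5814 = 2 \times 0.2907$ is exactly what is needed for the inner pureDUMP invocation with parameter $\delta_s/2$ to be valid.

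The main obstacle will be justifying that GRR-random outputs can be pooled with dummy points into a single effective blanket whose internal counts still behave like those analyzed in Theorem \ref{the:pureprivacy}. This requires carefully conditioning on the coin pattern so that, within the pooled multiset of $|\mathcal{S}| + B$ uniform samples, the counts $s_1$ and $s_2$ of the two differing values remain $\mathrm{Bin}(|\mathcal{S}|+B, 1/k)$ variables to which Lemma \ref{lem:assumption} applies, and ensuring that conditioning on $B$ does not disturb this independence structure. One must also verify that the $\epsilon_s \leq 1$ regime continues to support the constant $14$ obtained in the pureDUMP Chernoff derivation when $\delta_d$ is replaced by $\delta_s/2$.
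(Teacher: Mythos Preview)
Your proposal is correct and follows essentially the same approach as the paper: case-split on whether the victim's GRR coin lands ``random'' (trivial case) or ``true,'' pool the other users' random GRR outputs with the dummy points into a single uniform blanket of size $|\mathcal{S}|+B$ where $B\sim\mathrm{Bin}(n-1,\lambda)$, apply a one-sided Chernoff bound to lower-bound $B$ with failure $\delta_s/2$, invoke Theorem~\ref{the:pureprivacy} on the enlarged blanket with failure $\delta_s/2$ (hence the $\ln(4/\delta_s)$), and union-bound the two failures. Your identification of $0.5814=2\times 0.2907$ as the threshold needed for the inner pureDUMP call is exactly the paper's reasoning.
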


\begin{proof}
	(Sketch)
	The assumption is same as in Theorem \ref{the:pureprivacy}.
	We assume the $n^\text{th}$ value $x_n=1$ in $\mathcal{D}$ and $x'_n=2$ in its neighboring dataset $\mathcal{D'}$.
	Denote $R(\mathcal{D})$ (or $R(\mathcal{D}')$) as randomized users' dataset $\{R(x_1),...,R(x_n)\}$ (or $\{R(x_1),...,R(x_n')\}$).
	If the $n^{\text{th}}$ user randomizes its data, then we can easily get
	\[\setlength\abovedisplayskip{0.5ex}
	\setlength\belowdisplayskip{0.5ex}
	\begin{aligned}
		\frac{\Pr [R(\mathcal{D})\cup \mathcal{S} = \mathcal{V}]}{\Pr [R(\mathcal{D'})\cup\mathcal{S'} = \mathcal{V}]}=1
	\end{aligned}\]
	
	So, we focus on the case that the $n^{\text{th}}$ user does not randomize its data.
	Denote $\mathcal{H}$ as the set of users participating in randomization in Algorithm \ref{alg:mix_r}, and the size of $\mathcal{H}$ is $|\mathcal{H}|$.
	Since the randomized users' data and dummy points are all follows the uniform distribution, mixDUMP can be regarded as pureDUMP with dummy dataset $\mathcal{H}\cup\mathcal{S}$ is introduced.
	From Theorem \ref{the:pureprivacy}, mixDUMP satisfies $(\epsilon_H,\frac{\delta_s}{2})$-DP, we have
	\[\setlength\abovedisplayskip{0.5ex}
	\setlength\belowdisplayskip{0.5ex}
	\begin{aligned}
		\Pr [R(\mathcal{D})\cup\mathcal{S}=\mathcal{V}]\le e^{\epsilon_H}\Pr [R(\mathcal{D'})\cup\mathcal{S'}=\mathcal{V}]+\frac{\delta_s}{2}
	\end{aligned}\]
	where $\epsilon_H=\sqrt{\frac{14k\ln{(4/\delta_s)}}{|\mathcal{H}|+|\mathcal{S}|-1}}$, $\delta_s\le 0.5814$.
	Since $|\mathcal{H}|$ follows $Bin(n-1,\lambda)$, according to Chernoff bound, we can get 
	\[\setlength\abovedisplayskip{0.5ex}
	\setlength\belowdisplayskip{0.5ex}
	\begin{aligned}
		\Pr [|\mathcal{H}|<(1-\gamma)\mu]<\frac{\delta_s}{2}
	\end{aligned}\]
	where $\gamma=\sqrt{\frac{2\ln(2/\delta_s)}{(n-1)\lambda}}$.
	Denote $t=(n-1)\lambda-\sqrt{2(n-1)\lambda ln(2/\delta_s)}$, then we have
	\[\setlength\abovedisplayskip{0.5ex}
	\setlength\belowdisplayskip{0.5ex}
	\begin{aligned}
		&\Pr [R(\mathcal{D})\cup \mathcal{S}=\mathcal{V}]\\
		&\le \Pr [R(\mathcal{D})\cup \mathcal{S}=\mathcal{V} \wedge |\mathcal{H}|\ge t]+\frac{\delta_s}{2}\\
		& \le(\sum_{h\ge t} \Pr [R(\mathcal{D})\cup \mathcal{S}=\mathcal{V}]\Pr [|\mathcal{H}|=h])+\frac{\delta_s}{2}\\
		& \le e^{\sqrt{\frac{14k\ln{(4/\delta_s)}}{t+|\mathcal{S}|-1}}}\Pr [R(\mathcal{D'})\cup \mathcal{S'}=\mathcal{V}]+\delta_s\\
	\end{aligned}\]
	
	The details of this proof is in Appendix \ref{app:mixprivacy}. 
\end{proof}

\setlength{\textfloatsep}{1ex}
\begin{algorithm}[t]
	\caption{User-side Algorithm of MixDUMP}
	\label{alg:mix_c}
	{\small{
			\begin{algorithmic}[1]
				\Procedure{M}{$x_i$}:
				\Require {$x_i \in \mathbb{D}$, parameters $n, k, s\in \mathbb{N}^+$ and $\lambda\in(0,\ 1]$}
				\Ensure {Multiset $Y_i\subseteq\mathbb{D}$}
				\State {$x_r\leftarrow R(x_i)$} \Comment {Randomize real data}
				\State {$Y_i\leftarrow P(x_r)$} \Comment {Mix $x_r$ with $s$ dummy points}\\
				\Return {$Y_i$}
				\EndProcedure
	\end{algorithmic}}}
\end{algorithm}
\setlength{\floatsep}{1ex}

\setlength{\textfloatsep}{1ex}
\begin{algorithm}[t]
	\caption{Analyst-side Algorithm of MixDUMP}
	\label{alg:mix_a}
	{\small{
			\begin{algorithmic}[1]
				\Procedure{A}{$Y_1, \ldots ,Y_n$}:
				\Require {Multiset $\{\{y_{1,1}, \ldots ,y_{1,s+1}\},\ldots,\{y_{n,1},\ldots,y_{n,s+1}\}\}\subseteq\mathbb{D}$, parameters $n, k, s\in \mathbb{N}$, $\lambda\in(0,\ 1]$}
				\Ensure {Vector $\vec z\in[0,1]^k$} \Comment{Frequency of k elements}
				\For {$m\leftarrow 1 \text{ to }k$}
				\State {$\vec z[m]=\frac{1}{n}\cdot(\sum_{i \in [n] , j \in [s+1]} \mathbbm{1}_{m = y_{i,j}} - \frac{n(\lambda+s)}{k})/(1-\lambda)$}
				\EndFor\\
				\Return {$\vec z$}
				\EndProcedure
	\end{algorithmic}}}
\end{algorithm}
\setlength{\floatsep}{1ex}

Then, we analyze the LDP privacy guarantee provided by mixDUMP when the shuffler colludes with the analyst.
We show that our design different from \cite{wang2020improving} can obtain privacy benefits from dummy points before shuffling operation, which enables users to be less dependent on trust in the shuffler.
When the data randomizer satisfies $\epsilon_l$-LDP, the introduced dummy points can achieve an enhanced LDP with the privacy budget smaller than $\epsilon_l$.
The theoretical analysis is shown in Theorem \ref{the:shuffle_privacy}.

\begin{theorem}
	\label{the:shuffle_privacy}
	If the data randomizer satisfies $\epsilon_l$-LDP, after introducing $s$ dummy points, $(\epsilon_r, \delta_r)$-LDP is achieved for any $\frac{k}{s}\le min(\frac{1}{2\ln(1/\delta_r)},(\sqrt{c^2+\frac{4c}{\sqrt{2\ln(1/\delta_r)}}}-c)^2)$,
	$c=\sqrt{2\ln(1/\delta_r)}\cdot\frac{e^\epsilon_l(e^{\epsilon_l}+1)}{s(e^{\epsilon_l}+1)+e^{\epsilon_l}}$,
	where 
	\[\setlength\abovedisplayskip{0.5ex}
	\setlength\belowdisplayskip{0.5ex}
	\epsilon_r=\ln{(\frac{k}{1-\sqrt{2k\ln(1/\delta_r)/s}}(1+\frac{e^{\epsilon_l}}{s(e^{\epsilon_l}+1)}))}\]
\end{theorem}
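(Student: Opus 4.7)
The plan is to analyze the likelihood ratio $\mu(Y;v)/\mu(Y;v')$ induced by one user on the multiset output, and then control its tail with a Chernoff bound on the count of dummy points that happen to equal $v'$.

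First, I would write the pointwise output distribution explicitly. The user sends an unordered multiset of size $s+1$ consisting of one randomizer message $R(v)$ and $s$ i.i.d.\ uniform dummies on $[k]$. Marginalizing over which of the $s+1$ slots carries the randomizer output yields, for any histogram $Y=(n_1,\ldots,n_k)$ with $\sum_j n_j=s+1$,
\[
  \mu(Y;v) \;=\; \frac{M}{k^s(s+1)}\sum_{j=1}^{k}\Pr[R(v)=j]\,n_j,
\]
with $M=\binom{s+1}{n_1,\ldots,n_k}$ independent of $v$. Specializing to the extremal $\epsilon_l$-LDP randomizer (GRR with $p=e^{\epsilon_l}q$) reduces the likelihood ratio to
\[
  \frac{\mu(Y;v)}{\mu(Y;v')} \;=\; \frac{(e^{\epsilon_l}-1)\,n_v + (s+1)}{(e^{\epsilon_l}-1)\,n_{v'} + (s+1)},
\]
which is increasing in $n_v$ and decreasing in $n_{v'}$.

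Second, I would upper-bound this ratio using the trivial $n_v\le s+1$ in the numerator and a Chernoff tail on $n_{v'}$ in the denominator. Under $\mu(\cdot;v)$ the count $n_{v'}$ stochastically dominates a $\mathrm{Bin}(s,1/k)$ contribution from the dummies, so $\mathbb{E}[n_{v'}]\ge s/k$, and the multiplicative lower-tail Chernoff bound gives
\[
  \Pr\!\bigl[n_{v'}\le (1-\gamma)\,s/k\bigr] \;\le\; \exp\!\bigl(-\gamma^2 s/(2k)\bigr).
\]
Setting this equal to $\delta_r$ fixes $\gamma=\sqrt{2k\ln(1/\delta_r)/s}$, exactly the factor appearing in the theorem's $\epsilon_r$; the admissibility requirement $\gamma\le 1$ is precisely the first side-condition $k/s\le 1/(2\ln(1/\delta_r))$. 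On the good event the likelihood ratio is bounded by $e^{\epsilon_l}(s+1)/\bigl((e^{\epsilon_l}-1)(1-\gamma)s/k + (s+1)\bigr)$, and the failure probability $\delta_r$ becomes the slack in the $(\epsilon_r,\delta_r)$-LDP definition.

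Third, I would massage this closed-form bound into the target expression $\tfrac{k}{1-\gamma}\bigl(1+\tfrac{e^{\epsilon_l}}{s(e^{\epsilon_l}+1)}\bigr)$. Factoring $k/(1-\gamma)$ outside the ratio and absorbing the residual into the stated correction term is an inequality rather than an identity, and it holds only when the residual is dominated by $e^{\epsilon_l}/(s(e^{\epsilon_l}+1))$. Expanding this feasibility requirement in the variable $u=\sqrt{k/s}$ gives a quadratic $u^2+2cu - 4c/\sqrt{2\ln(1/\delta_r)} \le 0$, whose positive root is exactly $\sqrt{c^2+4c/\sqrt{2\ln(1/\delta_r)}}-c$ with $c$ as defined, producing the second side-condition. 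The main obstacle will be this final step: the probabilistic and combinatorial pieces are routine once one commits to the multiset view of the output, but matching the theorem's specific factored form — and pinning down the exact regime of $k/s$ in which it is valid — will require careful algebraic tracking of lower-order corrections rather than a clean asymptotic simplification.
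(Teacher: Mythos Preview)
Your probabilistic argument is sound and in fact cleaner than the paper's. The paper does not keep the exact likelihood ratio; it first drops the terms indexed by $j\notin\{v,v'\}$ (using $\frac{a+c}{b+c}\le\frac{a}{b}$ for $a\ge b$, $c\ge0$) to reach the simpler form $\frac{e^{\epsilon_l}n_v+n_{v'}}{e^{\epsilon_l}n_{v'}+n_v}$, then invokes Lemma~\ref{lem:assumption} to treat $n_v-1$ and $n_{v'}$ as approximately independent $\mathrm{Bin}(s,1/k)$ variables, bounding the numerator copy above by $s$ and the denominator copy below via Chernoff. That lands directly on the stated $e^{\epsilon_r}$. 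Your stochastic-domination route avoids the approximate-independence lemma entirely, which is more rigorous, but it yields the slightly different intermediate expression $\frac{e^{\epsilon_l}(s+1)}{(e^{\epsilon_l}-1)(1-\gamma)s/k+(s+1)}$.

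Your third step misreads the role of the second side condition. A short cross-multiplication shows your bound is \emph{always} at most the paper's $\frac{k}{1-\gamma}\bigl(1+\tfrac{e^{\epsilon_l}}{s(e^{\epsilon_l}+1)}\bigr)$ whenever $\gamma<1$ (the difference of cross-products reduces to $(s+e^{\epsilon_l})$ plus a nonnegative multiple of $s-\beta$ with $\beta=(1-\gamma)s/k\le s$), so no feasibility hypothesis is needed to pass from one to the other. In the paper the second constraint on $k/s$ is derived \emph{after} the bound on the ratio is established, and its sole purpose is to ensure $e^{\epsilon_r}\le e^{\epsilon_l}$, i.e., that adding dummy points genuinely tightens the LDP guarantee rather than loosening it. To match the statement you should obtain the quadratic in $u=\sqrt{k/s}$ from the inequality $\frac{k}{1-\gamma}\bigl(1+\tfrac{e^{\epsilon_l}}{s(e^{\epsilon_l}+1)}\bigr)\le e^{\epsilon_l}$, not from the comparison between your intermediate bound and the target.
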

\begin{proof}
	(Sketch)
	Denote $[n_1,\ldots,n_k]$ as numbers of $k$ values in the $i^{\text{th}}$ user's reports set $\mathcal{V}_i$ where $\sum_{i=1}^k n_i=s+1$.
	Denote $\mathcal{S}_i$ (or $\mathcal{S}_i'$) as the dummy dataset generated by the $i^{\text{th}}$ user.
	Let $v$ be the true value held by the $i^{\text{th}}$ user.
	According to the definition of LDP, we compare the probability of shuffler receiving the same set $\mathcal{V}_i$ with numbers of k values are $[n_1,\ldots,n_k]$ when $v=1$ and $v=2$.
	We prove the following formula
	\[\setlength\abovedisplayskip{0.5ex}
	\setlength\belowdisplayskip{0.5ex}
	\begin{aligned}
		\Pr [R(v=1)\cup \mathcal{S}_{i}=\mathcal{V}_{i}]\le e^{\epsilon_r}\Pr [R(v=2)\cup \mathcal{S}_{i}'=\mathcal{V}_{i}]+\delta_r
	\end{aligned}\]
	According to Equation (\ref{eq:rr}) of GRR mechanism, we have
	\[\setlength\abovedisplayskip{0.5ex}
	\setlength\belowdisplayskip{0.5ex}
	\begin{aligned}
		\frac{\Pr [R(v=1)\cup \mathcal{S}_{i}=\mathcal{V}_{i}]}{\Pr [R(v=2)\cup \mathcal{S}_{i}'=\mathcal{V}_{i}]}&\le\frac{e^{\epsilon_l}n_1+n_2}{e^{\epsilon_l}n_2+n_1}\\
	\end{aligned}\]
	Both $n_1$ and $n_2$ follow binomial distribution, where $n_1$ follows $Bin(s,\frac{1}{k})+1$ and $n_2$ follows $Bin(s,\frac{1}{k})$.
	Based on Lemma \ref{lem:assumption}, $n_1$ and $n_2$ are considered as mutually independent variables.
	Then we have
	\[\setlength\abovedisplayskip{0.5ex}
	\setlength\belowdisplayskip{0.5ex}
	\begin{aligned}
		\frac{\Pr [R(v=1)\cup \mathcal{S}_{i}=\mathcal{V}_{i}]}{\Pr [R(v=2)\cup \mathcal{S}_{i}'=\mathcal{V}_{i}]}\le\frac{Bin(s,\frac{1}{k})}{Bin(s,\frac{1}{k})}+\frac{e^{\epsilon_l}}{Bin(s,\frac{1}{k})(e^{\epsilon_l}+1)}\\
	\end{aligned}\]
	According to the Chernoff bound, we have
	\[\setlength\abovedisplayskip{0.5ex}
	\setlength\belowdisplayskip{0.5ex}
	\Pr [Bin(s,\frac{1}{k})\ge\frac{s}{k}(1-\sqrt{\frac{2k\ln(1/\delta_r)}{s}})]\ge 1-\delta_r\]
	Note that $1-\sqrt{\frac{2k\ln(1/\delta_r)}{s}}\ge 0$, so $\frac{k}{s}\le\frac{1}{2\ln(1/\delta_r)}$.
	As $Bin(s,\frac{1}{k})\in[\frac{s}{k}(1-\sqrt{\frac{2k\ln(1/\delta_r)}{s}}),s]$ with the probability larger than $1-\delta_r$, we have
	\[\setlength\abovedisplayskip{0.5ex}
	\setlength\belowdisplayskip{0.5ex}
	\begin{aligned}
		\frac{\Pr [R(v=1)\cup \mathcal{S}_{i}=\mathcal{V}_{i}]}{\Pr [R(v=2)\cup \mathcal{S}_{i}'=\mathcal{V}_{i}]}&\le\frac{k}{1-\sqrt{\frac{2k\ln(1/\delta_r)}{s}}}(1+\frac{e^{\epsilon_l}}{s(e^{\epsilon_l}+1)})\\
	\end{aligned}\]
	when $\frac{k}{s}\le(\sqrt{c^2+\frac{4c}{\sqrt{2\ln(1/\delta_r)}}}-c)^2)$, it has $\frac{k}{1-\sqrt{\frac{2k\ln(1/\delta_r)}{s}}}(1+\frac{e^{\epsilon_l}}{s(e^{\epsilon_l}+1)})\\\le e^{\epsilon_l}$, where $c=\sqrt{2\ln(1/\delta_r)}\cdot\frac{e^\epsilon_l(e^{\epsilon_l}+1)}{s(e^{\epsilon_l}+1)+e^{\epsilon_l}}$.
	
	The details of this proof is in Appendix \ref{app:shuffle_privacy}.
\end{proof}

Figure \ref{fig:shuffle} shows the theoretical result of Theorem \ref{the:shuffle_privacy}.
We set $\epsilon_l=5$, $\delta_r=10^{-4}$, and show the privacy budget $\epsilon_r$ with varying numbers of dummy points on three different sizes of the data domain.
We can observe that dummy points significantly reduce the upper bound of the privacy budget of LDP protection, which enables enhanced privacy protection is provided when the shuffler colludes with the analyst.

\begin{figure}
	\centering
	\includegraphics[width=0.4\textwidth]{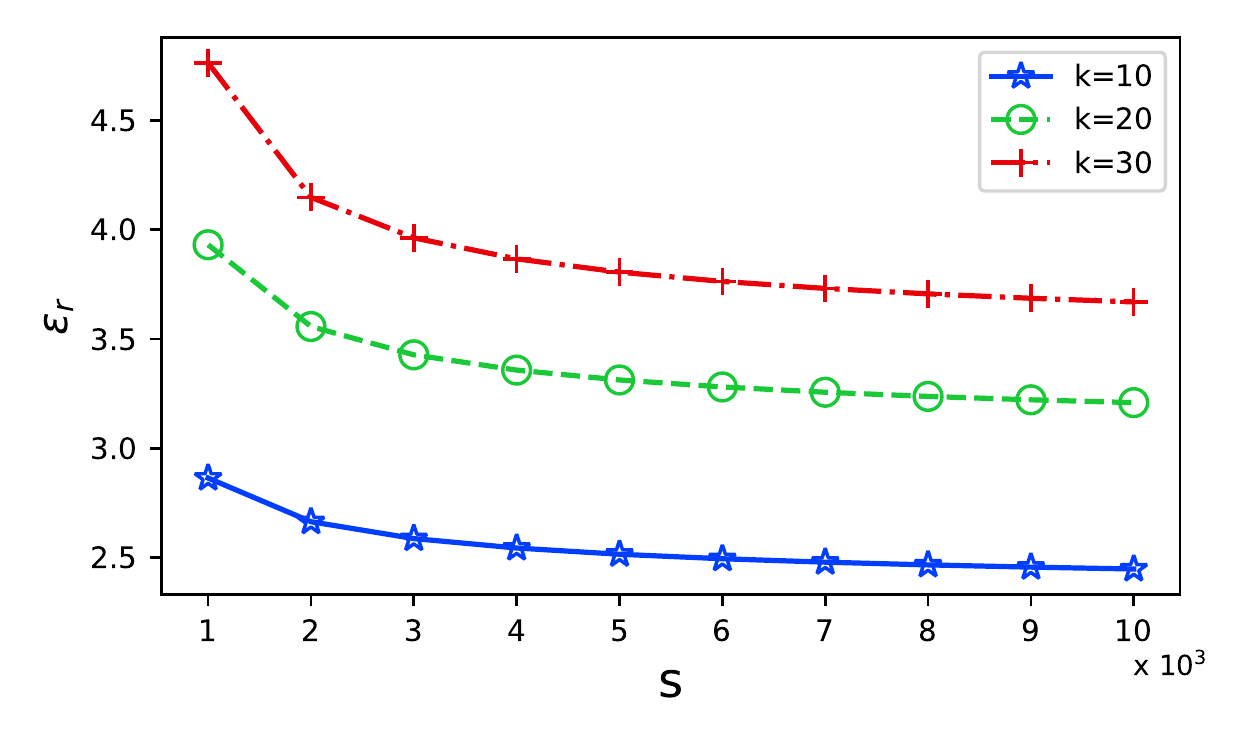}
	\vspace{-3ex}
	\setlength{\belowcaptionskip}{0.5ex} 
	\caption{Enhancement of LDP privacy protection by dummy points, where $\epsilon_l=5$ and $\delta_r=10^{-4}$.}
	\label{fig:shuffle}
\end{figure}

\noindent\textbf{Accuracy Analysis.}
We still use MSE as metrics to analyze the accuracy of mixDUMP. 
Same as before, we first show $\tilde f_v$ is unbiased in mixDUMP in Lemma \ref{lem:mixDUMP}.

\begin{lemma}
	\label{lem:mixDUMP}		
	The estimation result $\tilde f_v$ for any $v\in\mathbb {D}$ in mixDUMP is an unbiased estimation of $f_v$.
\end{lemma}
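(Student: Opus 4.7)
The plan is a direct expectation computation that mirrors the proof of Lemma \ref{lem:pureuti} but adds the GRR randomization contribution. Starting from the analyst's estimator
\[\tilde f_v = \frac{1}{n(1-\lambda)}\left(\sum_{i \in [n],\, j \in [s+1]} \mathbbm{1}_{v = y_{i,j}} - \frac{n(\lambda+s)}{k}\right),\]
I would split the indicator sum into two pieces: the single randomized report $R(x_i)$ per user produced by Algorithm \ref{alg:mix_r}, and the $s$ dummy points per user produced by the dummy-point generator, which are i.i.d.\ uniform on $[k]$.

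By linearity of expectation, the $ns$ dummy indicators contribute $ns \cdot \Pr[\mathsf{Unif}([k]) = v] = ns/k$. For the randomized reports, I would invoke the mixture decomposition of GRR given just before mixDUMP, namely $\Pr[R(x_i) = v] = (1-\lambda)\,\mathbbm{1}_{x_i = v} + \lambda/k$. Summing over the $n$ users yields
\[\sum_{i=1}^{n} \Pr[R(x_i) = v] = (1-\lambda)\, n f_v + \lambda n/k.\]
Adding the two contributions gives the total expected count $(1-\lambda)\, n f_v + n(\lambda+s)/k$.

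Substituting back, the debiasing offset $n(\lambda+s)/k$ is exactly the non-informative component and cancels out, leaving $E[\tilde f_v] = \tfrac{1}{n(1-\lambda)} \cdot (1-\lambda)\, n f_v = f_v$, as required.

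There is essentially no obstacle here: the argument is a one-line application of linearity of expectation, and the only care needed is to use the GRR mixture decomposition (rather than the original $p/q$ form) so that the $1-\lambda$ factor pulls out cleanly and matches the $1/(1-\lambda)$ normalization in the estimator. Independence between the randomizer output and the dummy points (both drawn locally by each user) makes the split valid without any further argument.
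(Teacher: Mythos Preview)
Your proof is correct and follows essentially the same approach as the paper: a direct expectation computation that separates the randomized report and the uniform dummy points, then cancels the debiasing offset. The only cosmetic difference is that the paper's appendix carries out the calculation in the $p,q$ parameterization of GRR (writing the expected report count as $nf_v p + n(1-f_v)q$ and dividing by $p-q$), whereas you use the mixture $\lambda$-form; since $p-q = 1-\lambda$ and $q = \lambda/k$, the two derivations are line-by-line equivalent, and your version arguably matches Algorithm~\ref{alg:mix_a} more transparently.
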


The proof of Lemma \ref{lem:mixDUMP} is deferred to Appendix \ref{app:mix_unbias}.
Then the MSE of frequency estimation equals to the variance of $\tilde f_v$ in mixDUMP. 
Next, we show the MSE of mixDUMP in Theorem \ref{the:mixuti}.

\begin{theorem}
	\label{the:mixuti}
	The MSE of frequency estimation in mixDUMP is 	
	\[\frac{1}{n}\cdot \frac{e^{\epsilon_l}+k-2}{(e^{\epsilon_l}-1)^2}+\frac{s(k-1)}{n k^2}\cdot (\frac{e^{\epsilon_l}+k-1}{e^{\epsilon_l}-1})^2\]
	
	where $\epsilon_l$ is the privacy budget of GRR in mixDUMP.
\end{theorem}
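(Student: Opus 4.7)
The plan is to leverage Lemma~\ref{lem:mixDUMP} so that the MSE reduces to the variance expression $\frac{1}{k}\sum_{v\in\mathbb{D}}\mathrm{Var}[\tilde f_v]$, exactly as was done after Equation~(\ref{eq:mse}) in the pureDUMP analysis. This sidesteps any bias cross-term and lets us concentrate on a pure variance computation. From the estimator in Algorithm~\ref{alg:mix_a}, $\tilde f_v=\frac{1}{n(1-\lambda)}\bigl(S_v-\frac{n(\lambda+s)}{k}\bigr)$, where $S_v=\sum_{i,j}\mathbbm{1}_{y_{i,j}=v}$, so $\mathrm{Var}[\tilde f_v]=\mathrm{Var}[S_v]/\bigl(n^2(1-\lambda)^2\bigr)$.

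The key structural observation is that $S_v$ splits additively as $S_v=A_v+B_v$, where $A_v=\sum_{i\in[n]}\mathbbm{1}_{R(x_i)=v}$ counts the $n$ randomized messages and $B_v$ counts the $ns$ dummy points. Since every user randomizes via GRR and samples dummy points independently, all underlying indicators are mutually independent and $\mathrm{Var}[S_v]=\mathrm{Var}[A_v]+\mathrm{Var}[B_v]$. This is what lets the total MSE cleanly decompose into one term attributable to GRR randomization and one attributable to the dummy blanket.

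For the dummy-point part, $B_v\sim\mathrm{Bin}(ns,1/k)$, so $\mathrm{Var}[B_v]=ns(k-1)/k^2$; dividing by $n^2(1-\lambda)^2$ and using $1-\lambda=(e^{\epsilon_l}-1)/(e^{\epsilon_l}+k-1)$ yields the second summand $\frac{s(k-1)}{nk^2}\bigl(\frac{e^{\epsilon_l}+k-1}{e^{\epsilon_l}-1}\bigr)^2$ directly, which is exactly the pureDUMP variance amplified by the $1/(1-\lambda)^2$ debiasing factor. For the GRR part, each summand of $A_v$ is Bernoulli with parameter $p=\frac{e^{\epsilon_l}}{e^{\epsilon_l}+k-1}$ when $x_i=v$ and $q=\frac{1}{e^{\epsilon_l}+k-1}$ otherwise, giving $\mathrm{Var}[A_v]=nf_v\,p(1-p)+n(1-f_v)\,q(1-q)$. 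Applying the standard LDP simplification that the target frequency $f_v$ is small compared to $1$ (so that the $f_v\,p(1-p)$ contribution is absorbed) collapses this to $nq(1-q)$, and substituting $q(1-q)=\frac{e^{\epsilon_l}+k-2}{(e^{\epsilon_l}+k-1)^2}$ together with $1/(1-\lambda)^2=\bigl(\frac{e^{\epsilon_l}+k-1}{e^{\epsilon_l}-1}\bigr)^2$ produces the first summand $\frac{1}{n}\cdot\frac{e^{\epsilon_l}+k-2}{(e^{\epsilon_l}-1)^2}$.

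The main obstacle is justifying the reduction of $\mathrm{Var}[A_v]$ to $nq(1-q)$: the exact quantity is $f_v$-dependent and only agrees with the stated formula under the usual ``low-frequency'' regime common in GRR/OLH frequency estimation~\cite{wang2017locally,balle2019privacy}. I would state this assumption explicitly, note that it is the standard convention used to quote the GRR variance in the LDP literature (and is consistent with the approximation employed in the paper's comparison table), and then combine the two contributions to obtain the claimed expression.
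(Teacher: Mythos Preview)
Your proposal is correct and follows essentially the same approach as the paper: both reduce the MSE to $\mathrm{Var}[\tilde f_v]$ via unbiasedness, factor out $1/(n^2(1-\lambda)^2)=1/(n^2(p-q)^2)$, split the variance of the counts into the GRR contribution $nf_v p(1-p)+n(1-f_v)q(1-q)$ and the dummy contribution $ns(k-1)/k^2$, and then invoke the small-$f_v$ convention from~\cite{wang2017locally} to drop the $f_v$-dependent cross term. The only differences are cosmetic---you phrase the split as $S_v=A_v+B_v$ with explicit independence, while the paper writes the combined variance directly and works in $(p,q)$ rather than $\lambda$.
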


\begin{proof}
	We give the estimation function in Algorithm \ref{alg:mix_a}. 
	With $p=\frac{e^{\epsilon_l}}{e^{\epsilon_l}+k-1},\ q=\frac{1}{e^{\epsilon_l}+k-1}$ in Equation (\ref{eq:rr}), and $\lambda=kq$, we replace $\lambda$ with $q$ to simplify the derivation. 
	We can get the variance of $\tilde f_v$ is		
	\[\setlength\abovedisplayskip{0.5ex}
	\setlength\belowdisplayskip{0.5ex}
	\begin{aligned}
		Var[\tilde f_v]&=Var[\frac{1}{n}\cdot \frac{\sum_{i\in[n],j\in [s+1]}\mathbbm{1}_{\{v=y_{i,j}\}}-ns\cdot \frac{1}{k}-nq}{p-q}]\\
		& =\frac{1}{n^2}\cdot \frac{1}{(p-q)^2}Var[\sum_{i\in[n],j\in [s+1]}\mathbbm{1}_{\{v=y_{i,j}\}}]\\
		& =\frac{nf_{v}p(1-p)+n(1-f_{v})q(1-q)+ns(k-1)/k^2}{n^2(p-q)^2}\\
		& =\frac{nq(1-q)+nf_{v}(p(1-p)-q(1-q))+ns(k-1)/k^2}{n^2(p-q)^2}\\
	\end{aligned}\]
	As in \cite{wang2017locally}, $f_v$ is assumed to be small on average, then we have
	\[\begin{aligned}
		Var[\tilde f_v]&\simeq\frac{1}{n^2(p-q)^2}(nq(1-q)+\frac{ns(k-1)}{k^2})\\
		& =\frac{1}{n}\cdot \frac{e^{\epsilon_l}+k-2}{(e^{\epsilon_l}-1)^2}+\frac{s(k-1)}{n k^2}\cdot (\frac{e^{\epsilon_l}+k-1}{e^{\epsilon_l}-1})^2\\
	\end{aligned}\]
	According to Equation (\ref{eq:mse}) we have
	\[\setlength\abovedisplayskip{0.5ex}
	\setlength\belowdisplayskip{0.5ex}
	\begin{aligned}
		MSE=&\frac{1}{k}\sum_{v\in \mathbb{D}}E[(\tilde f_v-f_v)^2]=\frac{1}{k}\sum_{v\in \mathbb{D}}Var(\tilde f_v)=\\
		& \frac{1}{n}\cdot \frac{e^{\epsilon_l}+k-2}{(e^{\epsilon_l}-1)^2}+\frac{s(k-1)}{n k^2}\cdot (\frac{e^{\epsilon_l}+k-1}{e^{\epsilon_l}-1})^2\\
	\end{aligned}\]
\end{proof}

\noindent\textbf{Efficiency Analysis.}
Next, we show the efficiency analysis of mixDUMP.
For an input $x_i\in\mathbb{D}$, the output of the user side $M(x_i)$ is $s+1$ messages of length $O(\log k)$ bits.
Since we amplify twice in the process of proving the upper bound of the privacy budget in mixDUMP, the advantage of mixDUMP compared to pureDUMP in the number of messages is weakened.
When mixDUMP satisfies $(\epsilon,\delta)$-DP, $s$ is at most $O(14k\ln(4/\delta)/(n\epsilon^2)-\lambda(1-\sqrt{2\ln(2/\delta)/(\lambda n)}))$.
The upper bound of $s$ can be approximated as $O(14k\ln(4/\delta)/(n\epsilon^2))$ when $\epsilon$ is close to $0$.

The shuffler transmits $n(s+1)$ dummy points to the analyst with the space of $O(ns\log k)$ bits.
The analyst outputs $O(k\log(n(s+1))$ bits estimated from $n(s+1)$ messages with expected MSE is $O(\ln(1/\delta)/(n\epsilon(1-\lambda))^2)$.
Note that $\lambda=k/(e^{\epsilon_l}+k-1)$ is the probability that each user randomizes its data.
As $\epsilon_l\ge \epsilon$, we can amplify $\lambda\le k/(e^{\epsilon}+k-1)$.
Then, the estimation error can be approximated as $O(\frac{1}{\epsilon^2n^2}\cdot(1+\frac{k}{e^{\epsilon}-1})^2\cdot\log\frac{1}{\delta})$.

\subsection{Analysis for Practical Implementations}
\label{sub:flex}
We propose two protocols for privacy-preserving histogram estimation in Subsection \ref{subsec:pureDUMP} and Subsection \ref{subsec:mixDUMP}.
Under the assumption that each user sends $s$ dummy points, we analyze the performance of the proposed protocols theoretically.
In practical implementations, some users might be limited by network bandwidth or computation capability (e.g., running on mobile devices or sensors), and some users may even refuse to send additional dummy points.
Taking these conditions into account, we also analyze the privacy and accuracy of the proposed protocols under a more realistic assumption.

The weak assumption of flexible DUMP protocols assumes that each user sends dummy points with the probability of $\gamma$.
More specifically, each user uniformly samples $s$ random dummy points from $\mathbb{D}$ with the probability of $\gamma$, and only sends one message with the probability of $1-\gamma$.
In practical implementations, $\gamma$ can be obtained by investigating users' desires or referring to the past experience of collectors.
In the following part, we first try to answer how much the weak assumption affects the privacy of protocols in DUMP framework.
We prove the following lemma.

\begin{lemma}
	\label{lem:relax}
	For any DUMP protocols that satisfy $(\epsilon,\ \delta)$-DP for any $n\in\mathbb{N}^+$ and $\gamma\in(0,1]$, 
	the flexible DUMP protocols satisfy $(\epsilon-\ln(1-e^{-\gamma n}),\ \delta+e^{-\gamma n})$-DP.
\end{lemma}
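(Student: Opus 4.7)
The plan is to reduce the flex DUMP to the base DUMP by conditioning on the event that at least one user actually sends dummy points. Let $N$ denote the random number of contributing users; since each user independently contributes with probability $\gamma$, $N\sim\mathrm{Bin}(n,\gamma)$. Define $E=\{N\geq 1\}$. Using $1-x\leq e^{-x}$, I would obtain
\[
\Pr[\neg E]=(1-\gamma)^n\leq e^{-\gamma n},
\]
which is the source of both the additive $e^{-\gamma n}$ loss in $\delta$ and the multiplicative $1/(1-e^{-\gamma n})$ blow-up in the privacy budget.

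Next, conditional on $E$, I would argue that the flex DUMP is distributed exactly as a base DUMP executed by the (random but nonempty) subset of contributing users, with the non-contributing users simply mixing in their un-randomized data. The hypothesis that the base DUMP satisfies $(\epsilon,\delta)$-DP \emph{for any} $n\in\mathbb{N}^+$ is precisely what is required to carry the guarantee through uniformly over every possible realization of the contributor set, so that conditional on $E$ the flex execution inherits $(\epsilon,\delta)$-DP.

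Finally, I would plug these two pieces into the standard split-by-event argument for DP. For any neighboring datasets $\mathcal{D},\mathcal{D}'$ and any measurable output set $S$,
\[
\begin{aligned}
\Pr[\mathcal{F}(\mathcal{D})\in S]
&\leq \Pr[\mathcal{F}(\mathcal{D})\in S\mid E]+\Pr[\neg E]\\
&\leq e^{\epsilon}\Pr[\mathcal{F}(\mathcal{D}')\in S\mid E]+\delta+e^{-\gamma n}\\
&\leq \frac{e^{\epsilon}}{1-e^{-\gamma n}}\Pr[\mathcal{F}(\mathcal{D}')\in S]+(\delta+e^{-\gamma n}),
\end{aligned}
\]
where the last step uses $\Pr[\mathcal{F}(\mathcal{D}')\in S\mid E]\leq \Pr[\mathcal{F}(\mathcal{D}')\in S]/\Pr[E]$ together with $\Pr[E]\geq 1-e^{-\gamma n}$. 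Rewriting $1/(1-e^{-\gamma n})=\exp(-\ln(1-e^{-\gamma n}))$ then immediately yields the claimed $(\epsilon-\ln(1-e^{-\gamma n}),\,\delta+e^{-\gamma n})$-DP bound.

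The main obstacle is making the second step rigorous: I need the conditional-on-$E$ joint distribution of the shuffler's output to coincide with a valid base-DUMP execution on the contributing subset, so that the $(\epsilon,\delta)$-DP hypothesis can be cleanly invoked for every realized value of $N\geq 1$. This requires checking that the coin flips deciding which users contribute are independent of the users' sensitive inputs and of the base DUMP's internal randomness, so that conditioning on the contributor set factorizes correctly and the privacy guarantee does not depend on $N$. Without this uniformity assumption, the crude bound $\Pr[\neg E]\leq e^{-\gamma n}$ would have to be sharpened via a Chernoff-type tail bound on $N$, which would degrade the clean form of the stated parameters.
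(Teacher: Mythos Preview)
Your proposal is correct and follows essentially the same route as the paper. The paper conditions on the complementary event $\overline{A}$ (``not all users skip dummy points''), which is exactly your event $E=\{N\ge 1\}$, bounds $\Pr[A]\le e^{-\gamma n}$ via $(1-\gamma)^n\le e^{-\gamma n}$, asserts the $(\epsilon,\delta)$-DP inequality conditionally on $\overline{A}$, and then performs the identical three-line split you wrote down, including the step $\Pr[\cdot\mid\overline{A}]\le \Pr[\cdot]/\Pr[\overline{A}]\le \Pr[\cdot]/(1-e^{-\gamma n})$. Your discussion of why the conditional guarantee holds (independence of the contribution coins from the data and the need for the base DUMP bound to hold uniformly over every realized contributor count) is more explicit than the paper, which simply asserts the conditional DP inequality without elaboration; otherwise the arguments coincide.
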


\begin{proof}
	The probability that all users do not generate dummy points is $\Pr [A]=(1-\gamma)^n\le e^{-\gamma n}$, then it has $\Pr [\overline A]\ge 1-e^{-\gamma n}$.
	According to the results of Theorem \ref{the:pureprivacy} and Theorem \ref{the:mixprivacy}, the DUMP protocols satisfy
	\[\setlength\abovedisplayskip{0.5ex}
	\setlength\belowdisplayskip{0.5ex}
	\begin{aligned}
		\Pr [R(\mathcal{D})\cup \mathcal{S}=\mathcal{V}|\overline A]\le e^{\epsilon}\Pr [R(\mathcal{D}')\cup \mathcal{S}'=\mathcal{V}|\overline A]+\delta\\
	\end{aligned}\]
	where $R(\mathcal{D})$ and $R(\mathcal{D}')$ are outputs of data randomizers, $\mathcal{S}$ and $\mathcal{S}'$ are sets of dummy points.
	Then we have 
	\[\setlength\abovedisplayskip{0.5ex}
	\setlength\belowdisplayskip{0.5ex}
	\begin{aligned}
		&\Pr [R(\mathcal{D})\cup \mathcal{S}=\mathcal{V}]\\
		&=\Pr [R(\mathcal{D})\cup \mathcal{S}=\mathcal{V}|\overline A]\Pr [\overline A]+\Pr [R(\mathcal{D})\cup \mathcal{S}=\mathcal{V}|A]\Pr [A]\\
		&\le \Pr [R(\mathcal{D})\cup \mathcal{S}=\mathcal{V}|\overline A]+e^{-\gamma n}\\
		&\le e^{\epsilon}\Pr [R(\mathcal{D}')\cup \mathcal{S}'=\mathcal{V}|\overline A]+\delta+e^{-\gamma n}\\
		&\le \frac{1}{\Pr [\overline A]}\cdot e^{\epsilon}\Pr [R(\mathcal{D}')\cup \mathcal{S}'=\mathcal{V}]+\delta+e^{-\gamma n}\\
		&\le \frac{1}{1-e^{-\gamma n}}\cdot e^{\epsilon}\Pr [R(\mathcal{D}')\cup \mathcal{S}'=\mathcal{V}]+\delta+e^{-\gamma n}\\
	\end{aligned}\]
\end{proof}

According to Lemma \ref{lem:relax}, we can get Theorem \ref{the:relaxpure} and Theorem \ref{the:relaxmix}.
We show that if each user sends dummy points with the probability of $\gamma$, it can increase the privacy budget and the failure probability of DUMP protocols.
Because under this weak assumption, there is a possibility that no user sends dummy points, and it may cause the protocols to fail to satisfy DP.

\begin{theorem}
	\label{the:relaxpure}
	The flexible pureDUMP satisfies $(\epsilon_d-\ln(1-e^{-\gamma n}),\ \delta_d+e^{-\gamma n})$-DP for any $k,S,n\in \mathbb{N}^+,\ \gamma\in(0,1]$,
	where $\delta_d\in(0,0.2907]$, and $\epsilon_d=\sqrt{14k\ln(2/\delta_d)/(S-1)}$.
\end{theorem}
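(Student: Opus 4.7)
The plan is to obtain Theorem \ref{the:relaxpure} as an immediate composition of Theorem \ref{the:pureprivacy} with Lemma \ref{lem:relax}. Since Lemma \ref{lem:relax} is already proven generically for arbitrary DUMP protocols that achieve $(\epsilon,\delta)$-DP under the ``every user sends $s$ dummy points'' assumption, the only work is to instantiate the generic statement with the concrete parameters that Theorem \ref{the:pureprivacy} supplies for pureDUMP, and then to check that the side conditions on $\delta_d$ and $\epsilon_d$ are preserved.

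First I would invoke Theorem \ref{the:pureprivacy}, which establishes that when every user contributes $s$ dummy points so that the total dummy budget is $|\mathcal{S}|=ns$, pureDUMP satisfies $(\epsilon_d,\delta_d)$-DP with $\epsilon_d=\sqrt{14k\ln(2/\delta_d)/(|\mathcal{S}|-1)}$, valid for $\delta_d\in(0,0.2907]$ and $\epsilon_d\in(0,1]$. In the theorem statement $S$ plays the role of $|\mathcal{S}|$, so the baseline guarantee matches exactly. Next I would observe that pureDUMP is itself a DUMP protocol (in fact the canonical instantiation, with an empty data randomizer), so it falls directly under the hypothesis of Lemma \ref{lem:relax}. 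Applying Lemma \ref{lem:relax} to the parameter pair $(\epsilon_d,\delta_d)$ then yields that the flexible variant—where each user independently decides, with probability $\gamma$, to generate the full batch of $s$ dummy points and otherwise sends only its single report—satisfies $(\epsilon_d-\ln(1-e^{-\gamma n}),\ \delta_d+e^{-\gamma n})$-DP. These are precisely the parameters claimed in the theorem, so the proof is complete.

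The main ``obstacle'' here is essentially bookkeeping rather than technical: one must be careful that the $|\mathcal{S}|$ appearing in the expression from Theorem \ref{the:pureprivacy} corresponds to the same quantity that Lemma \ref{lem:relax} treats as a worst-case (all users participate) dummy-point budget, and that the bad event $A$ that ``no user sends any dummy point'' is correctly absorbed into the additive slack $e^{-\gamma n}$. Under the framing of Lemma \ref{lem:relax}, this is already handled, because its proof conditions on $\overline A$ and loses only the multiplicative factor $1/(1-e^{-\gamma n})$, which converts into the additive $-\ln(1-e^{-\gamma n})$ in the exponent, together with the Markov-style additive term $e^{-\gamma n}$ coming from bounding $\Pr[A]$. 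No further Chernoff-style estimates, independence arguments, or new analyses of the dummy blanket are required; the statement is a direct corollary rather than a self-contained theorem.
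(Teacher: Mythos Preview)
Your proposal is correct and matches the paper's own treatment exactly: the paper states Theorem \ref{the:relaxpure} immediately after proving Lemma \ref{lem:relax} and simply says ``According to Lemma \ref{lem:relax}, we can get Theorem \ref{the:relaxpure},'' i.e., it is obtained as a direct corollary of Lemma \ref{lem:relax} instantiated with the $(\epsilon_d,\delta_d)$ pair supplied by Theorem \ref{the:pureprivacy}. Your bookkeeping remarks about identifying $S$ with the full-participation dummy budget $|\mathcal{S}|=ns$ and about the role of the bad event $A$ accurately describe the only verification needed.
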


\begin{theorem}
	\label{the:relaxmix}
	The flexible mixDUMP satisfies $(\epsilon_s-\ln(1-e^{-\gamma n}),\ \delta_s+e^{-\gamma n})$-DP for any $k,S,n\in \mathbb{N}^+,\ \lambda\in(0,1],\ \gamma\in(0,1]$,
	where $\delta_s\in(0,0.5814]$, and $\epsilon_s=\sqrt{\frac{14k\ln(4/\delta_s)}{(S+(n-1)\lambda-\sqrt{2(n-1)\lambda\ln(2/\delta_s)}-1)}}$.
\end{theorem}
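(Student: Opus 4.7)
The plan is to obtain this theorem as an immediate consequence of Lemma \ref{lem:relax} applied to Theorem \ref{the:mixprivacy}, exactly mirroring the way Theorem \ref{the:relaxpure} follows from Lemma \ref{lem:relax} and Theorem \ref{the:pureprivacy}. Since Lemma \ref{lem:relax} is stated generically for ``any DUMP protocol that satisfies $(\epsilon,\delta)$-DP,'' mixDUMP qualifies as soon as we plug in its privacy parameters, so the argument reduces to a parameter substitution once the hypotheses are checked.

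Concretely, I would first recall from Theorem \ref{the:mixprivacy} that mixDUMP under the original assumption (each user sends $s$ dummy points deterministically) satisfies $(\epsilon_s,\delta_s)$-DP with $\delta_s\in(0,0.5814]$ and $\epsilon_s=\sqrt{14k\ln(4/\delta_s)/(|\mathcal{S}|+(n-1)\lambda-\sqrt{2(n-1)\lambda\ln(2/\delta_s)}-1)}$, where $|\mathcal{S}|=ns$. I would then invoke Lemma \ref{lem:relax} with this $(\epsilon,\delta)=(\epsilon_s,\delta_s)$: the lemma replaces the conditional privacy guarantee (conditioned on the event $\overline{A}$ that at least one user actually generates dummy points) with an unconditional guarantee at the cost of inflating the privacy budget by $-\ln(1-e^{-\gamma n})$ and the failure probability by $e^{-\gamma n}$, since $\Pr[A]=(1-\gamma)^n\le e^{-\gamma n}$. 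Substituting directly yields the stated $(\epsilon_s-\ln(1-e^{-\gamma n}),\,\delta_s+e^{-\gamma n})$-DP guarantee.

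The step that requires the most care (though it is really a verification rather than a genuine obstacle) is making sure that the conditional distribution used inside Lemma \ref{lem:relax} is the correct one for mixDUMP: conditioned on $\overline{A}$, the set of users generating dummy points is non-empty but its size is a random variable with a truncated-binomial distribution, not the deterministic ``all $n$ users send $s$ dummy points'' scenario analyzed in Theorem \ref{the:mixprivacy}. One must argue that Theorem \ref{the:mixprivacy}'s bound still applies in the worst case over the number of participating users — essentially because the bound is monotone: more dummy points only helps privacy, so the worst case is a single participating user, and the theorem's formula remains a valid upper bound on $\epsilon$. Once this monotonicity is observed, the proof of Theorem \ref{the:relaxmix} is a one-line substitution into Lemma \ref{lem:relax} and does not require revisiting the Chernoff-bound machinery used in Theorem \ref{the:mixprivacy}.
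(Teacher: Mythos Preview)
Your proposal is correct and matches the paper's approach exactly: the paper simply states ``According to Lemma~\ref{lem:relax}, we can get Theorem~\ref{the:relaxpure} and Theorem~\ref{the:relaxmix},'' i.e., it derives the result by direct substitution of the mixDUMP parameters from Theorem~\ref{the:mixprivacy} into Lemma~\ref{lem:relax}. Your extra remark about the monotonicity of the bound in the number of dummy points (to justify the conditional guarantee on $\overline{A}$) is a point the paper does not spell out, so you are if anything being more careful than the original.
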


Next, we analyze the impact of weak assumptions on the accuracy of the DUMP protocols.
Same as before, we first show that the estimation results of flexible pureDUMP and mixDUMP are unbiased in Lemma \ref{lem:relax_pureDUMP} and Lemma \ref{lem:relax_mixDUMP}.
Both the estimation functions and proofs are deferred to Appendix \ref{app:relax_pure_unbias} and Appendix \ref{app:relax_mix_unbias}.

\begin{lemma}
	\label{lem:relax_pureDUMP}		
	The estimation result $\tilde f_v$ for any $v\in\mathbb {D}$ in  flexible pureDUMP is an unbiased estimation of $f_v$.
\end{lemma}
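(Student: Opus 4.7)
The plan is to mirror the argument of Lemma \ref{lem:pureuti}, while carefully accounting for the Bernoulli decision each user makes about whether or not to generate dummy points. Concretely, in flexible pureDUMP each user $i$ always transmits the true value $x_i$ and, independently with probability $\gamma$, additionally transmits $s$ dummy points drawn uniformly from $\mathbb{D}$. So for $v \in \mathbb{D}$, if $N_v$ denotes the total count of the symbol $v$ among the messages received by the analyst, then $N_v$ decomposes into a deterministic-in-$x$ part contributed by the true values and a random part contributed by the dummy points. The natural debiased estimator, to be stated explicitly in the appendix, is of the form
\[
\tilde f_v \;=\; \frac{1}{n}\!\left( N_v \;-\; \frac{n\gamma s}{k}\right),
\]
obtained by subtracting the expected dummy-point mass at each symbol.

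The first step of the proof is to write $N_v = \sum_{i=1}^n \mathbbm{1}_{x_i = v} + \sum_{i=1}^n B_i Z_{i,v}$, where $B_i \sim \mathsf{Ber}(\gamma)$ is the indicator that user $i$ generates dummy points, and $Z_{i,v}$ is the number of dummy points from user $i$ that equal $v$. Conditioned on $B_i = 1$, $Z_{i,v}$ is a sum of $s$ i.i.d.\ $\mathsf{Ber}(1/k)$ indicators, so $E[Z_{i,v} \mid B_i = 1] = s/k$. The second step is to apply the tower property and linearity of expectation:
\[
E[N_v] \;=\; \sum_{i=1}^n \mathbbm{1}_{x_i = v} \;+\; \sum_{i=1}^n \gamma \cdot \frac{s}{k} \;=\; n f_v \;+\; \frac{n\gamma s}{k}.
\]

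The third step substitutes this into the estimator: $E[\tilde f_v] = \frac{1}{n}\bigl(n f_v + n\gamma s/k - n\gamma s/k\bigr) = f_v$, proving unbiasedness. The only real obstacle is making sure the estimator used in the appendix matches the above form (that is, that the debias term is $n\gamma s / k$ rather than $n s / k$ as in Algorithm \ref{alg:pure_a}); once the estimator is written correctly, the expectation calculation is routine and exactly parallels the proof of Lemma \ref{lem:pureuti}.
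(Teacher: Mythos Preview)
Your proposal is correct and matches the paper's own proof essentially line for line: the paper states the estimator as $\tilde f_v=\frac{1}{n}\bigl(\sum_{i,j}\mathbbm{1}_{\{v=y_{i,j}\}}-\frac{\gamma ns}{k}\bigr)$, computes $E\bigl[\sum_{i,j}\mathbbm{1}_{\{v=y_{i,j}\}}\bigr]=nf_v+\frac{\gamma ns}{k}$, and concludes $E[\tilde f_v]=f_v$. Your explicit introduction of the Bernoulli indicator $B_i$ and use of the tower property is a slightly more careful packaging of exactly the same expectation calculation.
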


\begin{lemma}
	\label{lem:relax_mixDUMP}		
	The estimation result $\tilde f_v$ for any $v\in\mathbb {D}$ in  flexible mixDUMP is an unbiased estimation of $f_v$.
\end{lemma}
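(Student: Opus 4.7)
\textbf{Proof plan for Lemma \ref{lem:relax_mixDUMP}.}
The plan is to exhibit an explicit debiased estimator $\tilde f_v$ for flexible mixDUMP and then compute $\mathbb{E}[\tilde f_v]$ term by term, mirroring the pattern used for Lemma \ref{lem:pureuti} and Lemma \ref{lem:mixDUMP}. First, I would fix the convention that the $i$-th user independently draws $B_i \sim \mathsf{Ber}(\gamma)$; conditional on $B_i=1$ they transmit the GRR-randomized value $R(x_i)$ together with $s$ i.i.d. $\mathsf{Unif}([k])$ dummy points, while conditional on $B_i=0$ they transmit only $R(x_i)$. Thus the total number of messages is a random variable with expectation $n(1+\gamma s)$, and the expected number of dummy points is $n\gamma s$.

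Next, for a fixed target value $v\in\mathbb{D}$, I would decompose the total count $C_v := \sum_{i,j}\mathbbm{1}_{v=y_{i,j}}$ into two pieces: a randomizer piece $C_v^{R}$ from the $n$ GRR outputs and a dummy piece $C_v^{S}$. By Algorithm \ref{alg:mix_r} and the decomposition $\Pr[R(x)=v] = (1-\lambda)\mathbbm{1}_{x=v} + \lambda/k$, linearity of expectation gives $\mathbb{E}[C_v^{R}] = n(1-\lambda)f_v + n\lambda/k$. For the dummy piece, each user independently contributes $B_i \cdot s$ uniform samples, so $\mathbb{E}[C_v^{S}] = n\gamma s/k$ by a straightforward conditioning on $B_i$. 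Adding these yields $\mathbb{E}[C_v] = n(1-\lambda)f_v + n\lambda/k + n\gamma s/k$.

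Now I would define the natural debiased estimator
\[
\tilde f_v \; = \; \frac{1}{n(1-\lambda)}\Bigl(C_v - \frac{n\lambda}{k} - \frac{n\gamma s}{k}\Bigr),
\]
which is exactly the analogue of the analyst's rule in Algorithm \ref{alg:mix_a} but with the dummy-point correction rescaled by $\gamma$ to account for the fact that, in expectation, only a $\gamma$ fraction of users contributes $s$ dummy points. Taking expectations and plugging in $\mathbb{E}[C_v]$ immediately collapses the constant correction terms and leaves $\mathbb{E}[\tilde f_v]=f_v$, as desired.

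I expect the only subtle step is stating the estimator correctly: the appendix must specify the same form of $\tilde f_v$ above, since the estimator used in Algorithm \ref{alg:mix_a} (which implicitly assumes every user sends $s$ dummy points) would otherwise pick up a bias of $-\frac{s(1-\gamma)}{k(1-\lambda)}$. Once the $\gamma$-corrected dummy term is in place, the unbiasedness proof is a one-line expectation calculation using independence of $B_i$, the GRR coin, and the uniform draws, with no concentration or tail-bound arguments required.
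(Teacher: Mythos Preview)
Your proposal is correct and takes essentially the same approach as the paper: both specify the $\gamma$-corrected estimator (replacing the dummy correction $ns/k$ in Algorithm~\ref{alg:mix_a} by $\gamma ns/k$) and then verify unbiasedness by a direct linearity-of-expectation computation. The only cosmetic difference is that the paper carries out the calculation in the $(p,q)$ parametrization of GRR rather than your $(1-\lambda,\lambda/k)$ mixture decomposition, but the two are algebraically equivalent and your explicit split $C_v = C_v^{R}+C_v^{S}$ is arguably cleaner.
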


As the weak assumption reduces the strength of privacy protection, the fewer dummy points reduce the MSE of the estimation results.
We show the accuracy of the flexible DUMP protocols in Theorem \ref{the:relax_pureuti} and Theorem \ref{the:relax_mixuti}, and proofs are deferred to Appendix \ref{app:relax_pureuti} and Appendix \ref{app:relax_mixuti}.
\begin{theorem}
	\label{the:relax_pureuti}
	The MSE of frequency estimation in flexible pureDUMP is     
	\[\frac{s\gamma(k-\gamma)}{nk^2}\]
	
\end{theorem}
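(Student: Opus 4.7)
The plan is to mirror the variance calculation of Theorem~\ref{the:pureuti}, accounting for the flexible coin-flip $\gamma$. My first move is to invoke Lemma~\ref{lem:relax_pureDUMP}, which guarantees the flexible-pureDUMP estimator $\tilde f_v$ is unbiased; combined with Equation~(\ref{eq:mse}) this reduces the MSE to $\frac{1}{k}\sum_{v\in\mathbb{D}} Var[\tilde f_v]$. Because the dummy-generation procedure treats all elements of $\mathbb{D}$ symmetrically, $Var[\tilde f_v]$ will not depend on $v$, so the task collapses to computing this single variance.

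Next I would decompose the raw count of value $v$ in the shuffled output as $\sum_{i=1}^n(\mathbbm{1}_{x_i=v} + D_{i,v})$, where $D_{i,v}$ is the number of dummy points contributed by user $i$ that equal $v$. Treating the inputs $x_i$ as fixed and using independence of the $D_{i,v}$ across users, I get $Var[\tilde f_v] = \frac{1}{n^2}\sum_i Var[D_{i,v}] = \frac{Var[D_{1,v}]}{n}$, so the entire question boils down to the per-user dummy-count variance.

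The key step is evaluating $Var[D_{1,v}]$ under the flexible mechanism. I would model each of user $1$'s $s$ potential dummy slots as being independently populated with probability $\gamma$ and, when populated, assigned a uniform label in $\mathbb{D}$; each slot then equals $v$ with probability $\gamma/k$ independently, so $D_{1,v}\sim \text{Bin}(s,\gamma/k)$ and $Var[D_{1,v}] = s\cdot(\gamma/k)\cdot(1-\gamma/k) = s\gamma(k-\gamma)/k^2$. Substituting back and averaging trivially over $v$ yields $MSE = s\gamma(k-\gamma)/(nk^2)$, as claimed.

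The main subtlety I expect is pinning down the precise sampling semantics of the flexible dummy generation, since the phrase ``each user sends dummy points with the probability of $\gamma$'' could be read either as a single per-user coin flip (sending all $s$ or none) or as per-slot Bernoulli-$\gamma$ inclusion. The per-slot Bernoulli-$\gamma$ reading is the one that produces the stated formula, and I would justify it by appealing to consistency with the bias-correction already used in Lemma~\ref{lem:relax_pureDUMP} and by verifying the sanity checks at the endpoints: at $\gamma=1$ the MSE becomes $s(k-1)/(nk^2)$, recovering Theorem~\ref{the:pureuti}, and at $\gamma=0$ it vanishes as expected when no dummies enter the system.
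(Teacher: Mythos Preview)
Your proposal is correct and follows essentially the same route as the paper: invoke Lemma~\ref{lem:relax_pureDUMP} for unbiasedness, reduce the MSE to $\frac{1}{k}\sum_{v}Var[\tilde f_v]$ via Equation~(\ref{eq:mse}), and model the per-value dummy count as binomial with success probability $\gamma/k$ to obtain variance $ns\cdot(\gamma/k)(1-\gamma/k)$. Your caveat about the sampling semantics is well placed---the paper's prose in Section~\ref{sub:flex} describes a single per-user coin flip, but its own proof (like yours) silently treats the aggregate count as $\mathrm{Bin}(ns,\gamma/k)$, which is exactly the per-slot reading that produces the stated formula.
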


\begin{theorem}
	\label{the:relax_mixuti}
	The MSE of frequency estimation in flexible mixDUMP is 		
	\[\frac{1}{n}\cdot \frac{e^{\epsilon_l}+k-2}{(e^{\epsilon_l}-1)^2}+\frac{s\gamma(k-\gamma)}{n k^2}\cdot (\frac{e^{\epsilon_l}+k-1}{e^{\epsilon_l}-1})^2\]
	
	where $\epsilon_l$ is the privacy budget of GRR in mixDUMP.
\end{theorem}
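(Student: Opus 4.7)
The plan is to parallel the proof of Theorem \ref{the:mixuti}, replacing only the dummy-point contribution to account for the per-user sampling probability $\gamma$. By Lemma \ref{lem:relax_mixDUMP} the estimator $\tilde f_v$ is unbiased, so Equation (\ref{eq:mse}) gives $MSE = \frac{1}{k}\sum_{v\in\mathbb{D}} Var[\tilde f_v]$, and the task reduces to computing $Var[\tilde f_v]$ for an arbitrary $v$. As in the analysis of Algorithm \ref{alg:mix_a}, the flexible estimator has the form $\tilde f_v = \frac{1}{n(1-\lambda)}\bigl(\sum_{i,j}\mathbbm{1}_{v=y_{i,j}} - \mu(\gamma,s,\lambda,k)\bigr)$ for a deterministic debiasing constant $\mu$, so by independence across users, $Var[\tilde f_v] = \frac{1}{n^2(1-\lambda)^2}\sum_{i=1}^{n} Var[C_i(v)]$, where $C_i(v)$ denotes the number of $y_{i,j}$ equal to $v$. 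Since each user's randomized datum $R(x_i)$ is drawn independently of her dummy sample, I would split $C_i(v)$ as the sum of these two independent contributions and bound their variances separately.

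For the GRR part I would lift the computation from Theorem \ref{the:mixuti} verbatim: with $p=e^{\epsilon_l}/(e^{\epsilon_l}+k-1)$ and $q=1/(e^{\epsilon_l}+k-1)$, each randomized datum contributes a Bernoulli variance $f_v p(1-p) + (1-f_v)q(1-q)$; summing over $n$ users and invoking the standard ``$f_v$ small on average'' approximation, then dividing by $n^2(p-q)^2 = n^2(1-\lambda)^2/k^2$, yields exactly the first term $\frac{1}{n}\cdot\frac{e^{\epsilon_l}+k-2}{(e^{\epsilon_l}-1)^2}$.

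For the dummy part I would invoke the per-user dummy variance already computed for flexible pureDUMP in Theorem \ref{the:relax_pureuti}, namely $s\gamma(k-\gamma)/k^2$ for the count of $v$-valued dummies from a single user. Summing over the $n$ users and dividing by $n^2(1-\lambda)^2$, and using the identity $\frac{1}{1-\lambda}=\frac{e^{\epsilon_l}+k-1}{e^{\epsilon_l}-1}$, produces the second term $\frac{s\gamma(k-\gamma)}{nk^2}\bigl(\frac{e^{\epsilon_l}+k-1}{e^{\epsilon_l}-1}\bigr)^2$. Since neither piece depends on $v$, summing over $\mathbb{D}$ and dividing by $k$ gives exactly the stated MSE.

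The main obstacle I foresee is not the combined algebra but pinning down the probabilistic model of the flexible mechanism consistently with the flexible pureDUMP analysis: the form $s\gamma(k-\gamma)/k^2$ implicitly encodes how $\gamma$ couples with each of the $s$ dummy draws, and one must verify that the same coupling is used inside mixDUMP so that inheriting the variance is legitimate. After that reconciliation, the only delicate step is the ``$f_v$ small on average'' approximation for the GRR contribution, which is the same approximation already used (and justified) in Theorem \ref{the:mixuti}; the remaining manipulations are purely routine.
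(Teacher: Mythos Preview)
Your proposal is correct and follows essentially the same approach as the paper. The paper's proof also invokes Lemma \ref{lem:relax_mixDUMP} to reduce MSE to variance, writes the estimator with denominator $n(p-q)=n(1-\lambda)$, and then computes $Var\bigl[\sum_{i,j}\mathbbm{1}_{\{v=y_{i,j}\}}\bigr]\simeq nq(1-q)+ns\cdot\frac{\gamma}{k}\cdot\frac{k-\gamma}{k}$ directly, using the same ``$f_v$ small on average'' approximation for the GRR part and the same $\mathrm{Bin}(ns,\gamma/k)$ modeling of the dummy contribution that underlies Theorem \ref{the:relax_pureuti}; your per-user decomposition is just a slightly more explicit rephrasing of that same calculation, and your caveat about the $\gamma$-coupling is exactly the point where the paper, too, silently adopts the $\mathrm{Bin}(ns,\gamma/k)$ convention.
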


\section{Evaluation}

\subsection{Setup}
We design experiments to evaluate the performance of the proposed protocols (pureDUMP and mixDUMP) and compare them with the existing protocols
(truncation-based \cite{balcer2019separating}, private-coin\cite{ghazi2019power}, public-coin \cite{ghazi2019power}, SOLH \cite{wang2020improving}, correlated-noise \cite{ghazi2020private2}).
Note that although the source codes of these existing protocols except SOLH are not formally released, we still implemented them with our best effort and included them for comparison purposes.

In experiments, we measure 
(1) the accuracy of the proposed protocols, i.e., how much accuracy they improve over existing shuffle protocols; 
(2) the communication overhead of the proposed protocols, i.e., how many messages does each user needs to send; and how do they compare with existing shuffle protocols.
(3) how key parameters would affect the accuracy of the proposed protocols, i.e., the number of users $n$ and the size of domain $k$;
Towards these goals, we conduct experiments over both synthetic and real-world datasets. 
The synthetic datasets allow us to adjust the key parameters of the datasets to observe the impact on the utility of protocols.
And the real-world datasets would show the utility of the proposed protocols in a more practical setting.

\begin{figure}[t]
	\centering
	\includegraphics[width=0.5\textwidth]{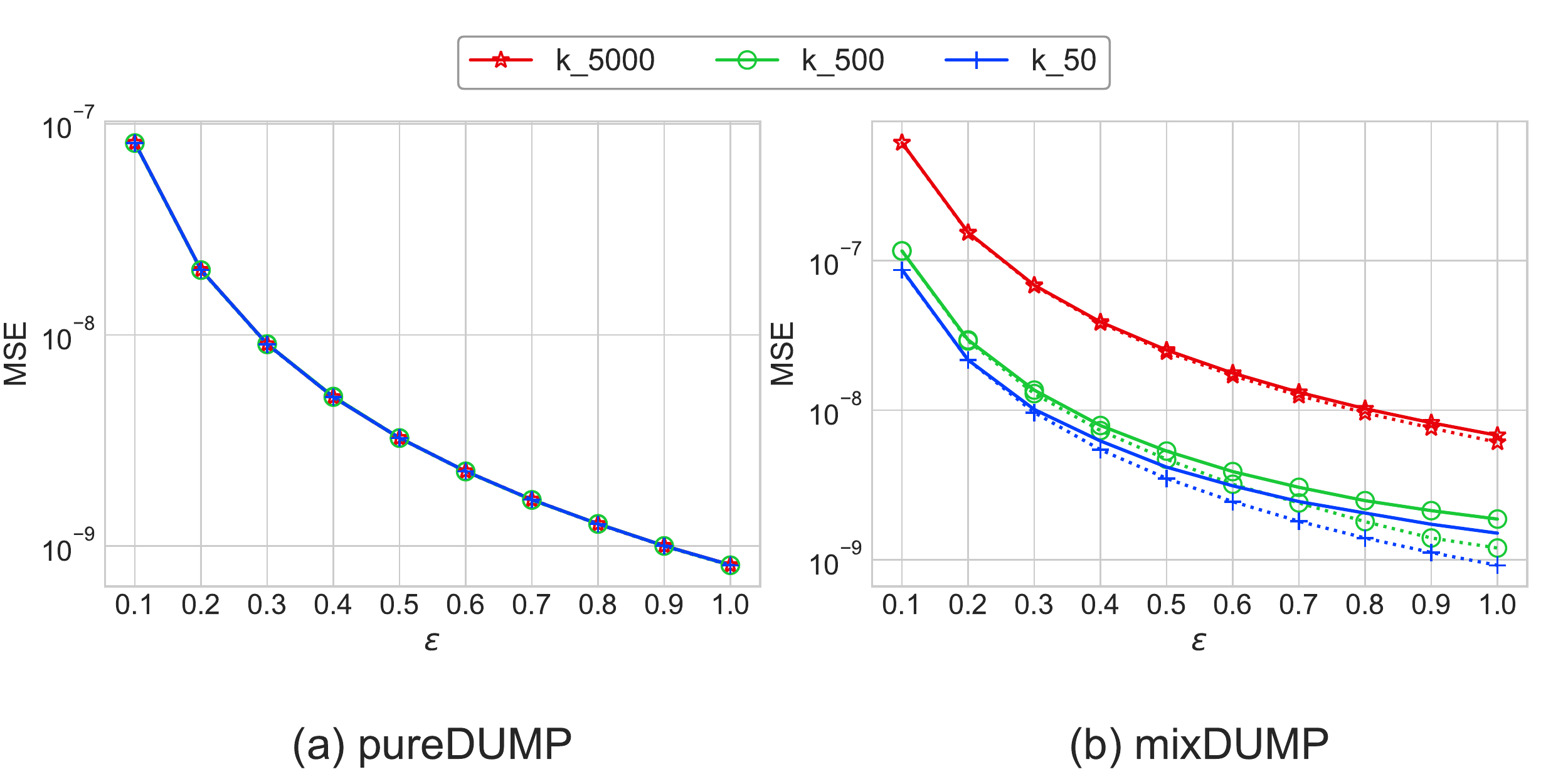}
	\vspace{-5ex}
	\setlength{\belowcaptionskip}{0.5ex}
	\caption{Experimental results of impact of size of data domain on pureDUMP and mixDUMP, where $n$ is fixed to 500,000, $\delta=10^{-6}$. Dashed lines are theoretical results.}
	\label{fig:impact_k}
\end{figure}

\begin{figure}[t]
	\centering
	\includegraphics[width=0.5\textwidth]{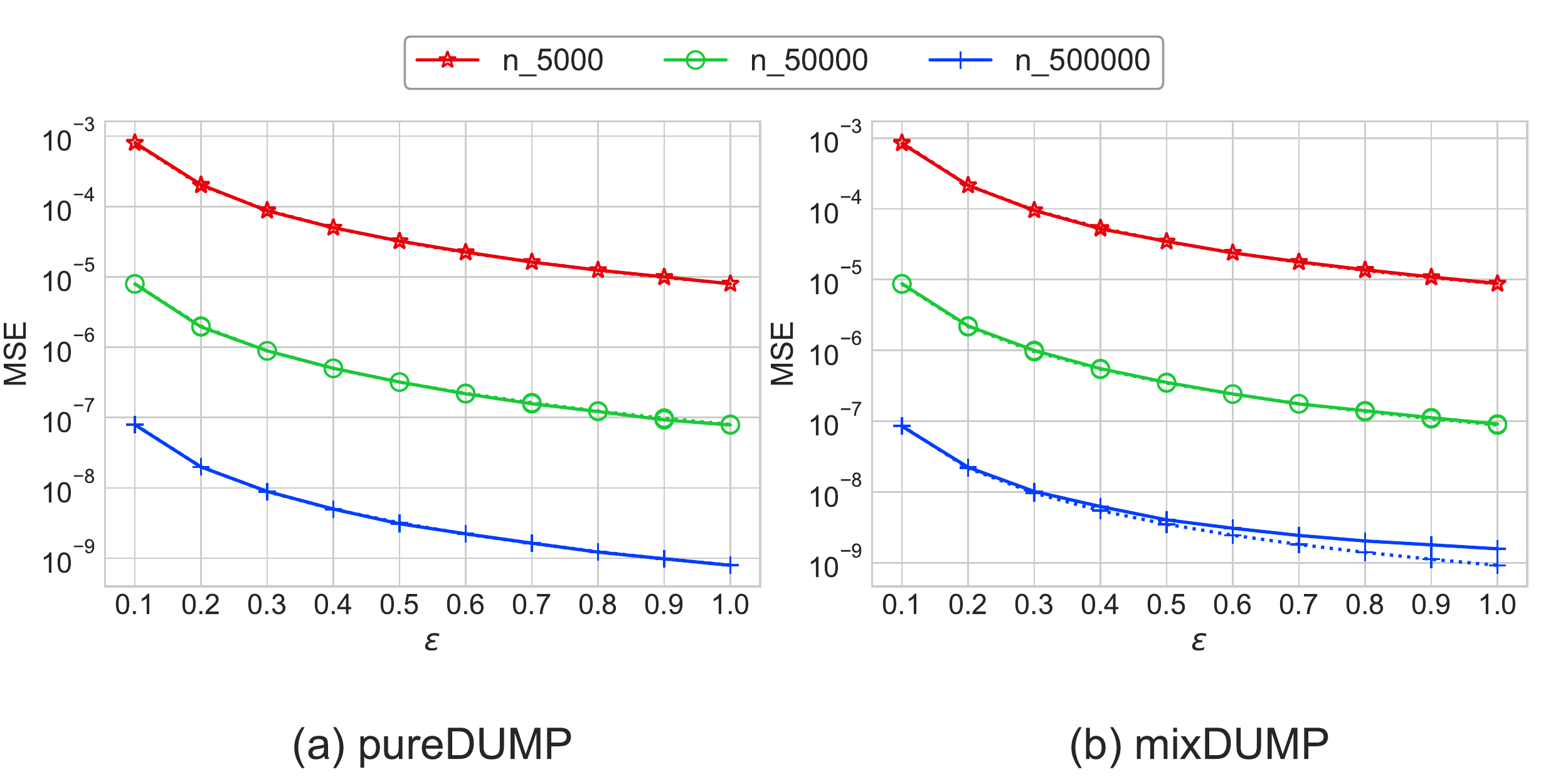}
	\vspace{-5ex}
	\setlength{\belowcaptionskip}{0.5ex}
	\caption{Experimental results of impact of number of users on pureDUMP and mixDUMP, where $k$ is fixed to 50, $\delta=10^{-6}$. Dashed lines are theoretical results.}
	\label{fig:impact_n}
\end{figure}

\noindent\textbf{Environment.}
All protocols are implemented in Python 3.7 with pyhton-xxhash 1.4.4 and numpy 1.18.1 libraries. 
All experiments are conducted on servers running Ubuntu 18.04.4 LTS 2.50GHz with 4*32GB memory.

\noindent\textbf{Synthetic Datasets.}
We generate six synthetic datasets with different sizes of the data domain $k$ and the number of users $n$.
In the first three datasets, we fix $n=500,000$ and generate different datasets with $k=50$, $k=500$, and $k=5,000$, respectively.
In the rest of datasets, we fix $k=50$ and generate different datasets with $n=5,000$, $n=50,000$, and $n=500,000$, respectively.
All datasets are following a uniform distribution.
We observe that the distribution of datasets has little effect on the performance of protocols, so we only show the experimental results on uniform distribution.

\noindent\textbf{Real-World Datasets.}
We conduct experiments on following two real-world datasets.
\begin{compactitem}
	\item IPUMS \cite{ipums-web}: We sample $1\%$ of the US Census data from the Los Angeles and Long Beach area for the year 1970. The dataset contains 70,187 users and 61 attributes. 
	We select the birthplace attribute for evaluation, which includes 900 different values.
	\item Ratings \cite{ratings-web}: This dataset contains 20 million ratings of the movies have been rated by 138,000 users.
	We sample 494,352 records containing 2,000 different movies.
\end{compactitem}

\begin{figure}[t]
	\centering
	\includegraphics[width=0.5\textwidth]{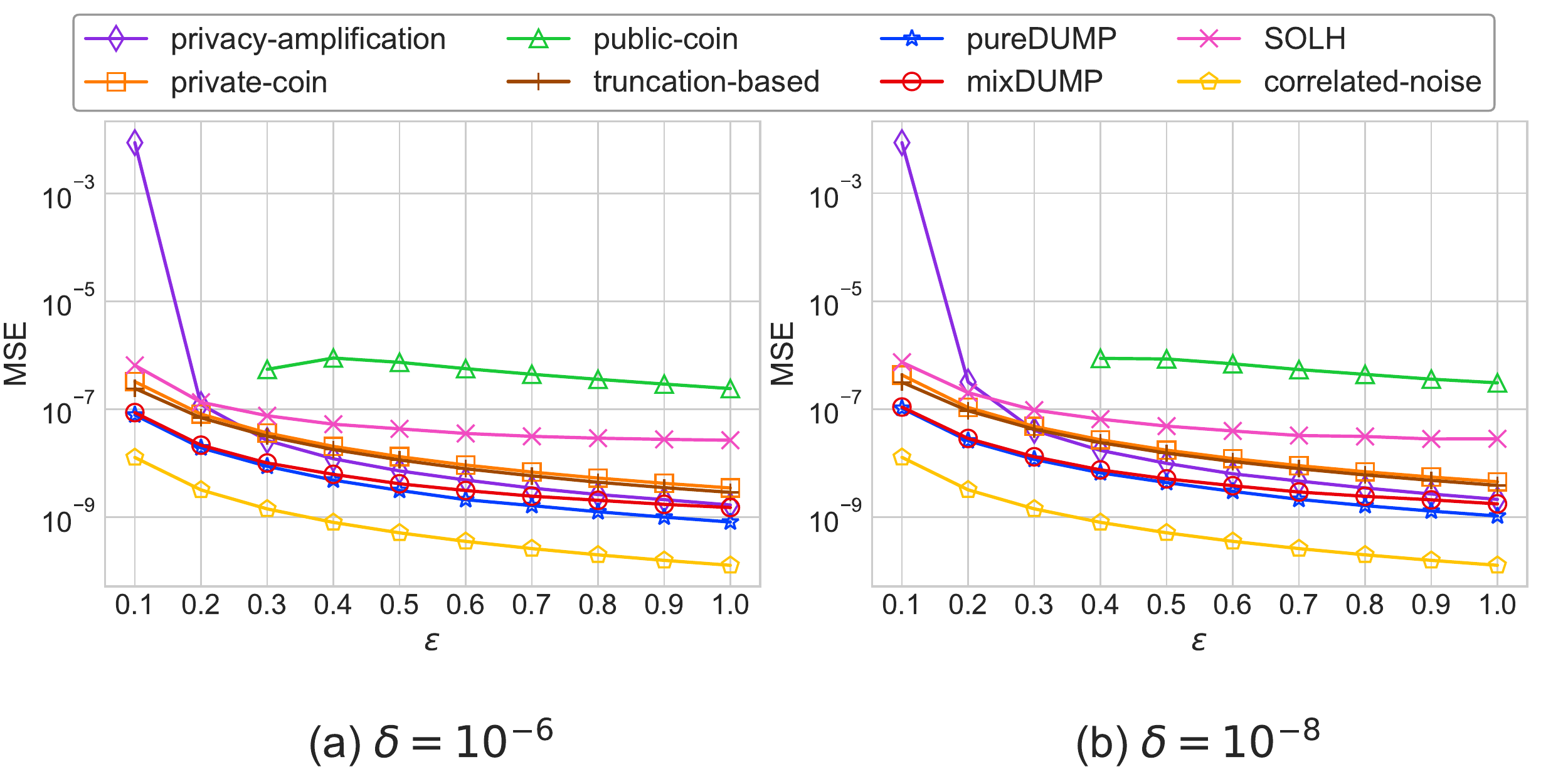}
	\vspace{-5ex}
	\setlength{\belowcaptionskip}{0.5ex}
	\caption{Results of MSE varying $\epsilon$ on the synthetic dataset, where the number of users $n=500,000$, the domain size $k=50$.}
	\label{fig:synk_50}
\end{figure}

\begin{figure}[t]
	\centering
	\includegraphics[width=0.5\textwidth]{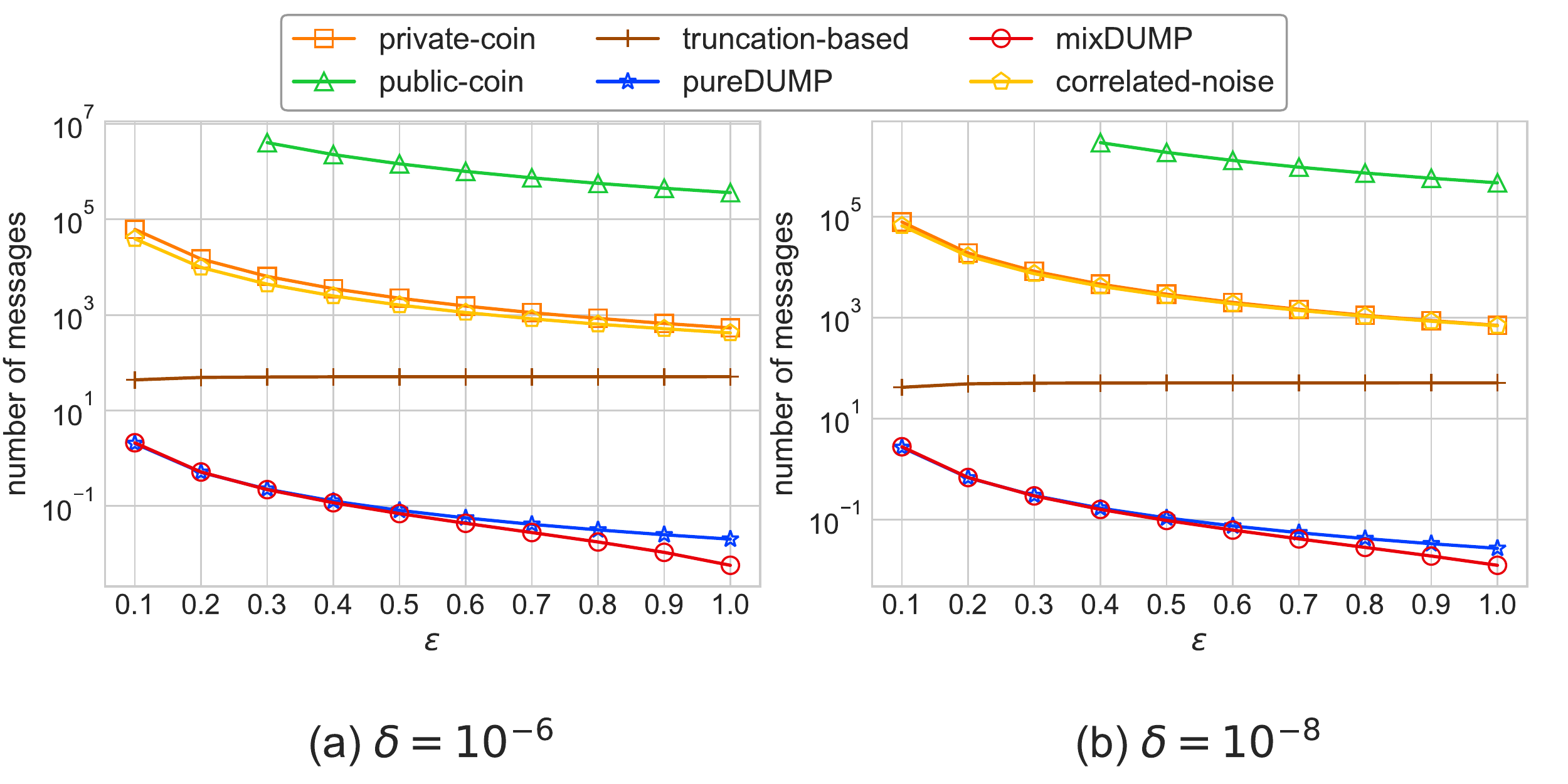}
	\vspace{-5ex}
	\setlength{\belowcaptionskip}{0.5ex}
	\caption{Number of messages sent by each user varying $\epsilon$ on the synthetic dataset, where the number of users $n=500,000$, the domain size $k=50$.}
	\label{fig:synk_50_com}
\end{figure}

\noindent\textbf{Protocols for Comparison.}
We compare pureDUMP and mixDUMP with following protocols.
\begin{compactitem}
	\item private-coin \cite{ghazi2019power}: A \emph{multi-message} protocol for histogram frequency estimation in the shuffle model. 
	It utilizes the Hadamard response as the data randomizer.
	
	\item public-coin \cite{ghazi2019power}: A \emph{multi-message} protocol for histogram frequency estimation in the shuffle model. 
	It is based on combining the Count Min data structure \cite{cormode2005improved} with GRR mechanism.
	It has $\epsilon>\tau\sqrt{90\ln(2\tau/\delta)/n}$ when randomization probability $\gamma$ is set to $90 \tau^2\ln(2\tau/\delta)/(n\epsilon^2)$, where $\tau=\ln(2k)$.
	
	\item SOLH \cite{wang2020improving}: A \emph{single-message} protocol for histogram frequency estimation in the shuffle model.
	It utilizes the hashing-based mechanism OLH as the data randomizer.
	
	\item truncation-based \cite{balcer2019separating}: A \emph{multi-message} protocol for histogram frequency estimation in the shuffle model.
	The protocol can provide privacy protection with $\epsilon>\sqrt{50\ln(2/\delta)/n}$.
	
	\item correlated-noise \cite{ghazi2020private2}: A \emph{multi-message} protocol for histogram frequency estimation in the shuffle model.
	It adds noises from negative binomial distribution to users' data. 
	
	\item privacy-amplification \cite{balle2019privacy}: A privacy amplification result in the shuffle model.
	It has amplification effect only when $\epsilon$ within $(\sqrt{14k\ln{(2/\delta)}/(n-1)},1]$.
\end{compactitem}

\noindent\textbf{Accuracy Metrics.}
In our experiments, we use mean squared error (MSE) as the accuracy metrics.
For each value $v$ in $\mathbb{D}$, we compute the real frequency $f_v$ on the sensitive dataset and estimate frequency $\tilde f_v$ on the noise dataset.
Then we calculate their squared difference for $k$ different values.
\[\setlength\abovedisplayskip{0.5ex}
\setlength\belowdisplayskip{0.5ex}
MSE=\frac{1}{k}\sum_{v\in \mathbb{D}}(\tilde f_v-f_v)^2\]
\noindent All MSE results in experiments are averaged with 50 repeats.

For the private-coin and correlated-noise, we calculate the expected mean squared error.
As $O(nk\log n)$ comparisons are required to match messages with Hadamard codewords on the analyst side of private-coin, which leads to excessive computational overhead, i.e., 1,450 hours for dataset with k=512 and n =70,000.
In the correlated-noise, we didn't find a way to sample the negative binomial distribution $NB(r, q)$ where $r$ is non-integer.
We hope in the further we can further improve this implementation.

\begin{figure}[t]
	\centering
	\includegraphics[width=0.5\textwidth]{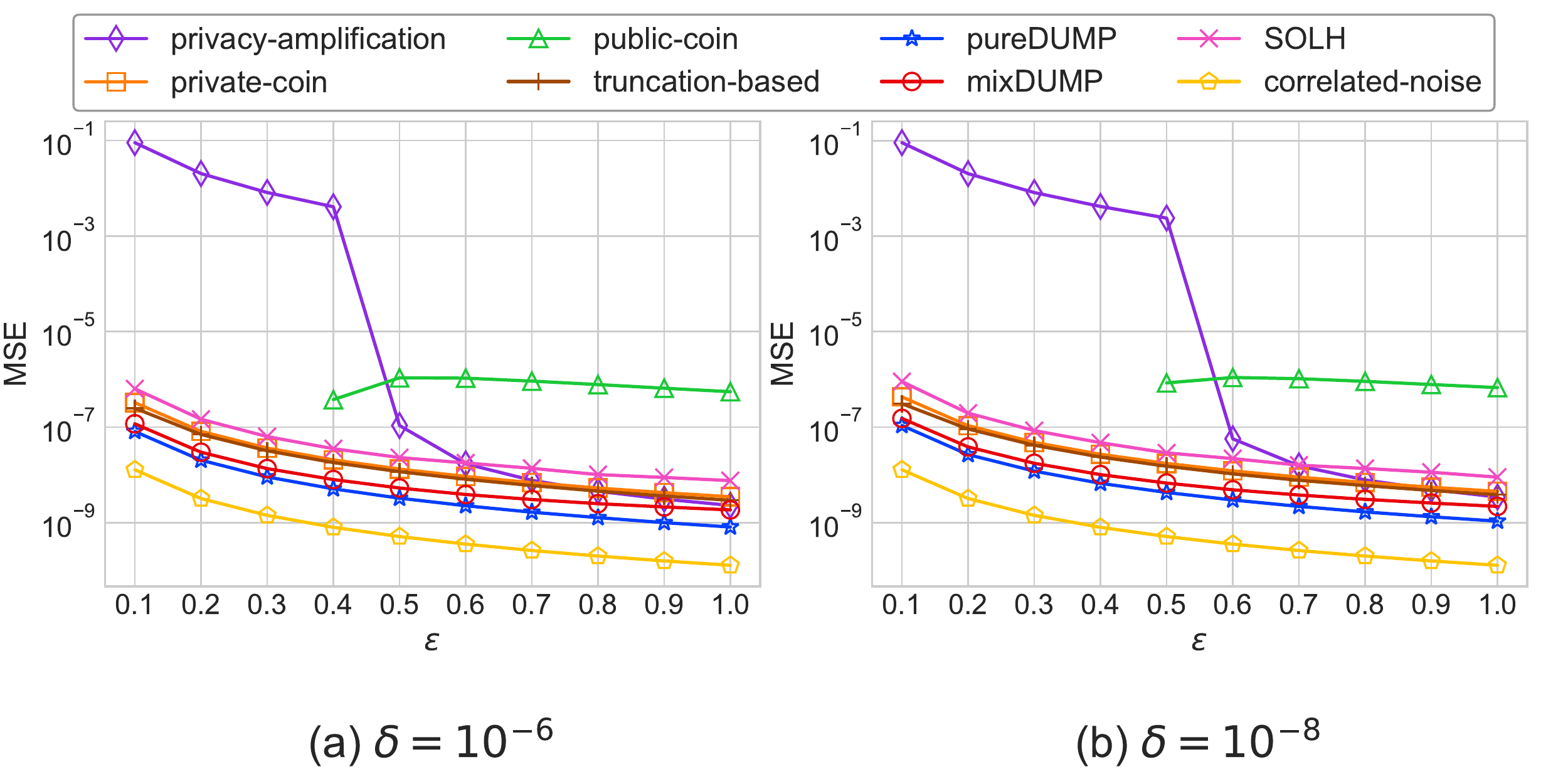}
	\vspace{-5ex}
	\setlength{\belowcaptionskip}{0.5ex}
	\caption{Results of MSE varying $\epsilon$ on the synthetic dataset, where the number of users $n=500,000$, the domain size $k=500$.}
	\label{fig:synk_500}
\end{figure}

\begin{figure}[t]
	\centering
	\includegraphics[width=0.5\textwidth]{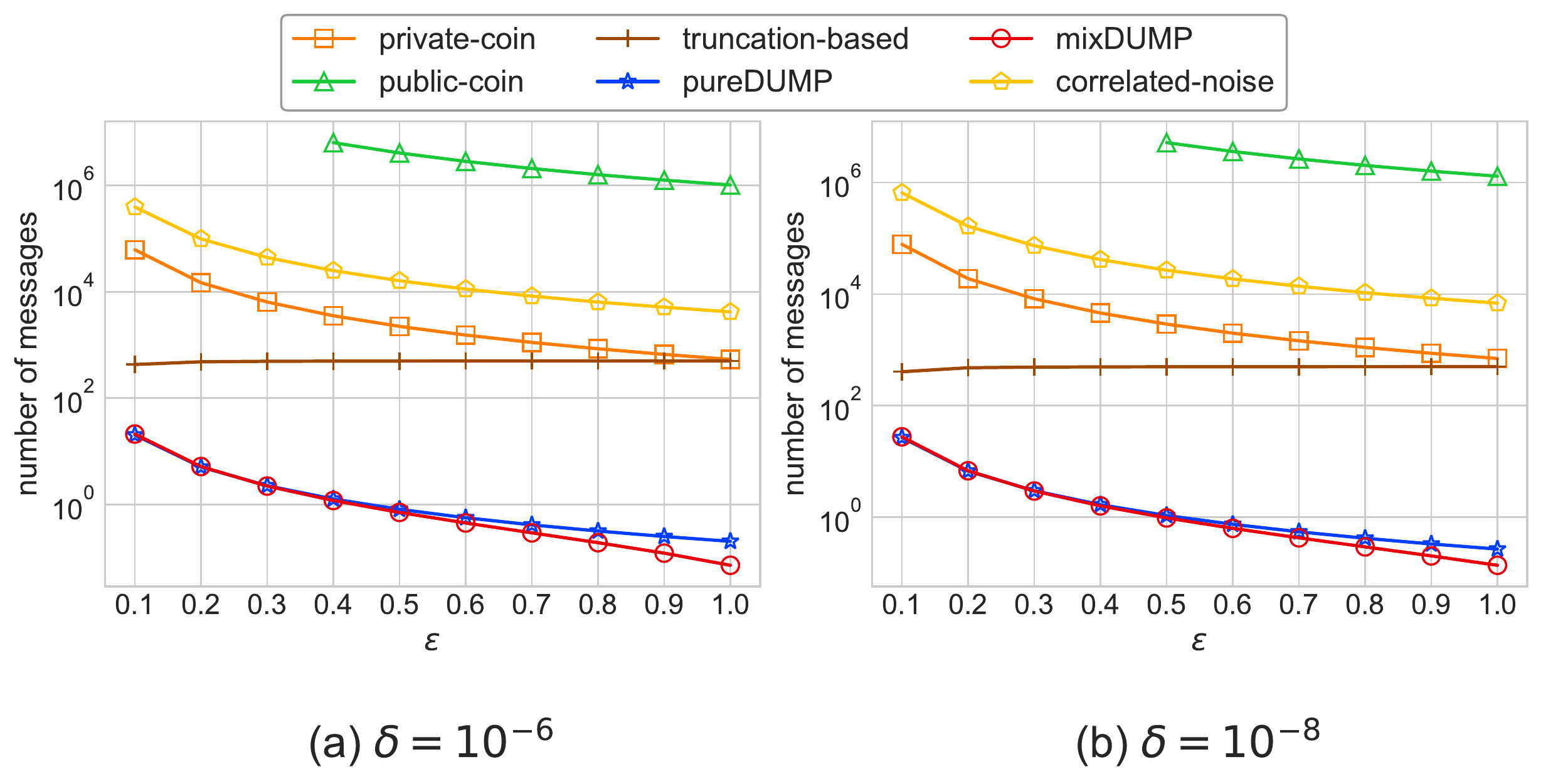}
	\vspace{-5ex}
	\setlength{\belowcaptionskip}{0.5ex}
	\caption{Number of messages sent by each user varying $\epsilon$ on the synthetic dataset, where the number of users $n=500,000$, the domain size $k=500$.}
	\label{fig:synk_500_com}
\end{figure}

\vspace{-1ex}
\subsection{Experiments with Synthetic Datasets}
\noindent\textbf{Overall Results.}
The influence of domain size $k$ on the MSE of pureDUMP and mixDUMP is shown in Figure \ref{fig:impact_k}, where the domain size has almost no influence on the MSE of pureDUMP.
The influence of number of users $n$ on the MSE of pureDUMP and mixDUMP is shown in Figure \ref{fig:impact_n}, where the MSE of pureDUMP and mixDUMP is in inverse proportion to the number of users.
Meanwhile, the small gap between the theoretical and empirical results validates the correctness of our theoretical error analysis. 

Figure \ref{fig:synk_50} and Figure \ref{fig:synk_500} show MSE comparisons of all protocols on two different synthetic datasets.
The proposed protocols, pureDUMP and mixDUMP, support any $\epsilon$ in the range of $(0,1]$ with higher accuracy.
Meanwhile, the numbers of messages sent by each user in all protocols except shuffle-based and SOLH (always one message) are depicted in Figure \ref{fig:synk_50_com} and Figure \ref{fig:synk_500_com}.
The proposed protocols, pureDUMP and mixDUMP, support any $\epsilon$ in the range of $(0,1]$ with the best accuracy compared with other protocols, while the only exception correlated-noise, achieving near-optimal accuracy, but has much higher communication overhead 
\footnote{Since the coefficient $500e^{\epsilon^2/\log (1/\delta)}$ is ignored, the declared asymptotic upper bound is smaller than eMSE.}.
Meanwhile, we observe that the numbers of messages sent by each user in pureDUMP and mixDUMP are always the smallest in all protocols.
Although the MSE of truncation-based is performing similar to pureDUMP and mixDUMP, the communication overhead of private-shuffle is at least one order of magnitude higher than pureDUMP and mixDUMP.
Since sufficient privacy guarantee can be provided by privacy amplification when hundreds of thousands of users are involved, we set $\epsilon_l$ of mixDUMP equals to 8 throughout the experiment to adapt large $\epsilon$ in the range of $(0,1]$.

\begin{figure}[t]
	\centering
	\includegraphics[width=0.5\textwidth]{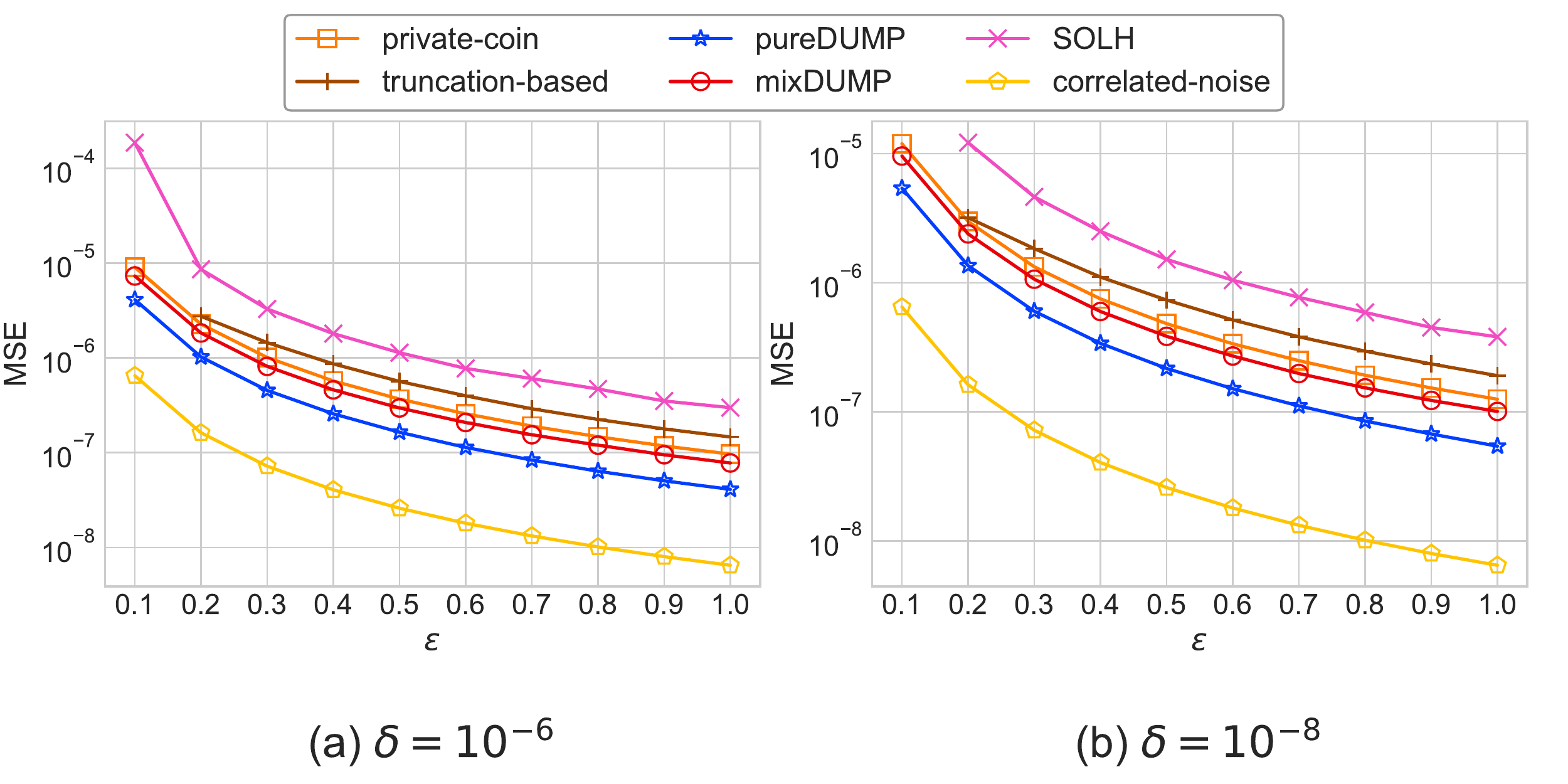}
	\vspace{-5ex}
	\setlength{\belowcaptionskip}{0.5ex}
	\caption{Results of MSE varying $\epsilon$ on the IPUMS dataset.}
	\label{fig:ipums}
\end{figure}

\begin{figure}[t]
	\centering
	\includegraphics[width=0.5\textwidth]{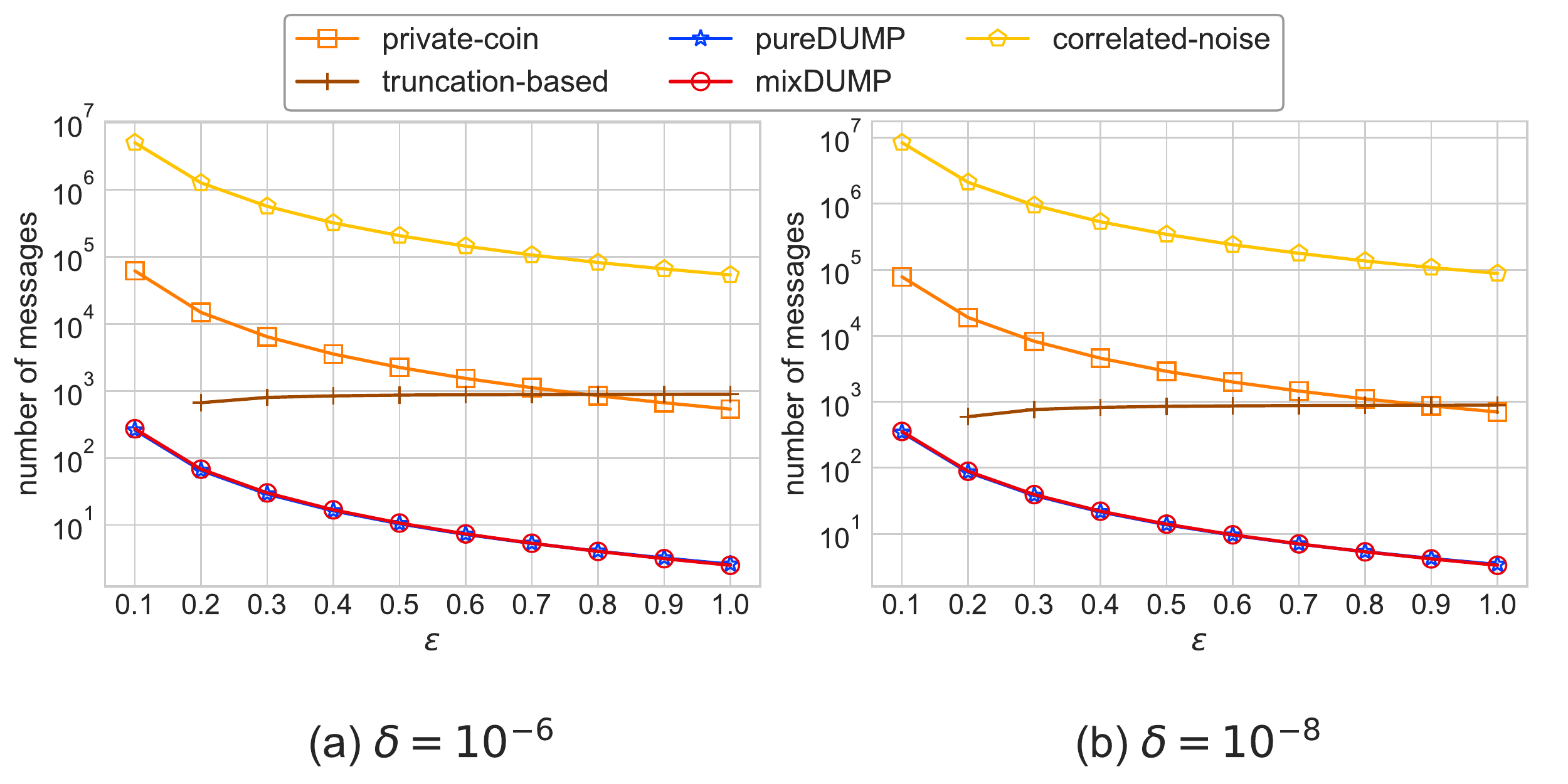}
	\vspace{-5ex}
	\setlength{\belowcaptionskip}{0.5ex}
	\caption{Number of messages sent by each user varying $\epsilon$ on the IPUMS dataset.}
	\label{fig:ipums_com}
\end{figure}

\noindent\textbf{Influence of Domain Size.}
We compare the MSE of pureDUMP on three synthetic datasets with different sizes of data domain in Figure \ref{fig:impact_k}a.
The result shows that the size of the data domain has little influence on the accuracy of pureDUMP.
Because according to our theoretical result in Subsection \ref{subsec:pureDUMP}, the MSE of pureDUMP can be approximately simplified to $14\ln(2/\delta)/(n^2\epsilon^2)$ when $k$ is large.
Figure \ref{fig:impact_k}b shows that the MSE of mixDUMP is proportional to the domain size $k$ because the large size of the data domain increases the error introduced by the data randomizer in mixDUMP.

\noindent\textbf{Influence of Users Number.}
Figure \ref{fig:impact_n}a and Figure \ref{fig:impact_n}b show the influence of the number of users on the MSE of pureDUMP and mixDUMP, respectively.
We can observe that the MSE in inverse proportion to the number of users $n$.
The MSE decreases by two orders of magnitude when the number of users increases by an order of magnitude, which is consistent with our efficiency analysis in Subsection \ref{subsec:pureDUMP} and Subsection \ref{subsec:mixDUMP}.

\begin{figure}[t]
	\centering
	\includegraphics[width=0.5\textwidth]{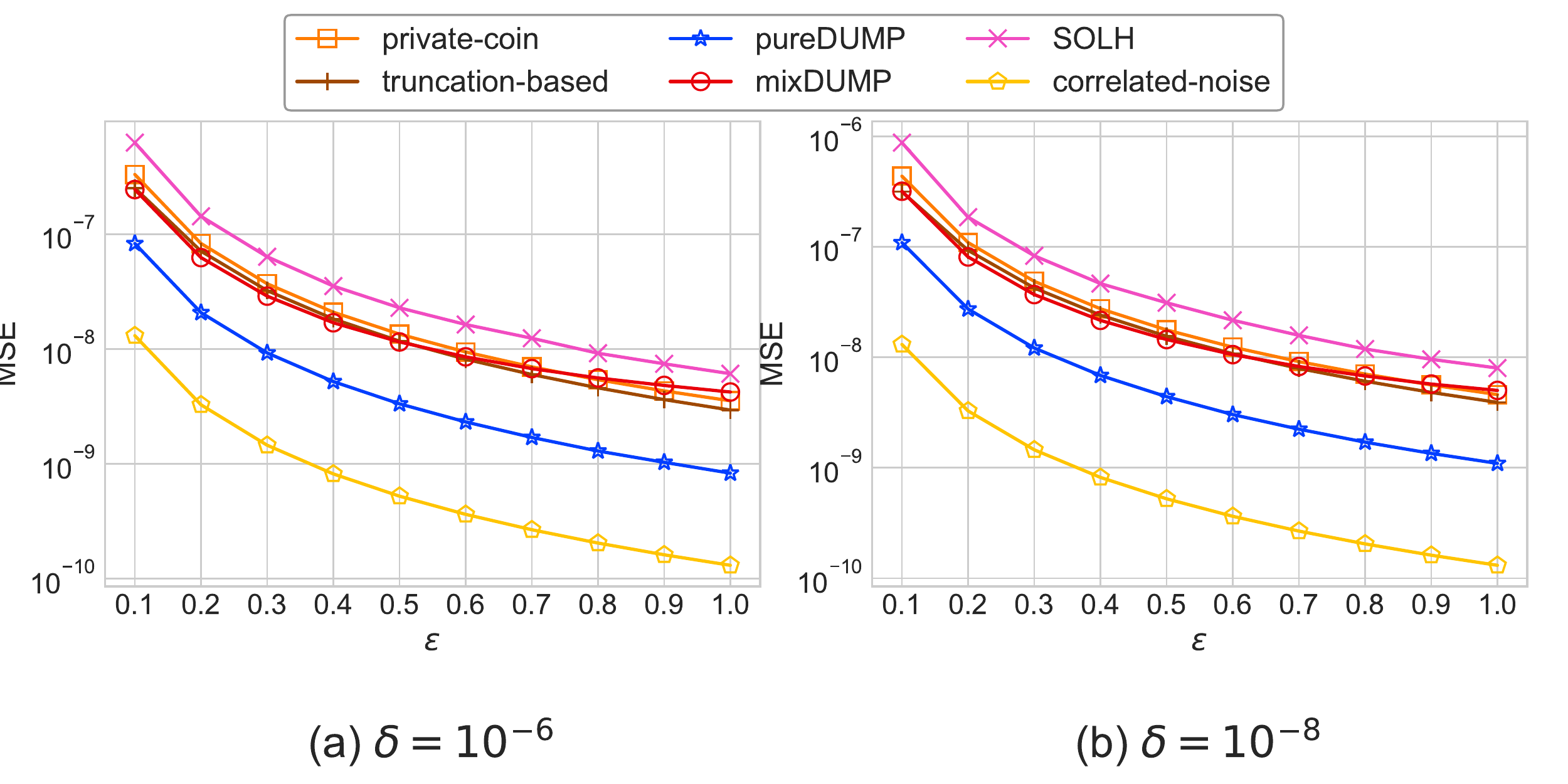}
	\vspace{-5ex}
	\setlength{\belowcaptionskip}{0.5ex}
	\caption{Results of MSE varying $\epsilon$ on the Ratings dataset.}
	\label{fig:ratings}
\end{figure}

\begin{figure}[t]
	\centering
	\includegraphics[width=0.5\textwidth]{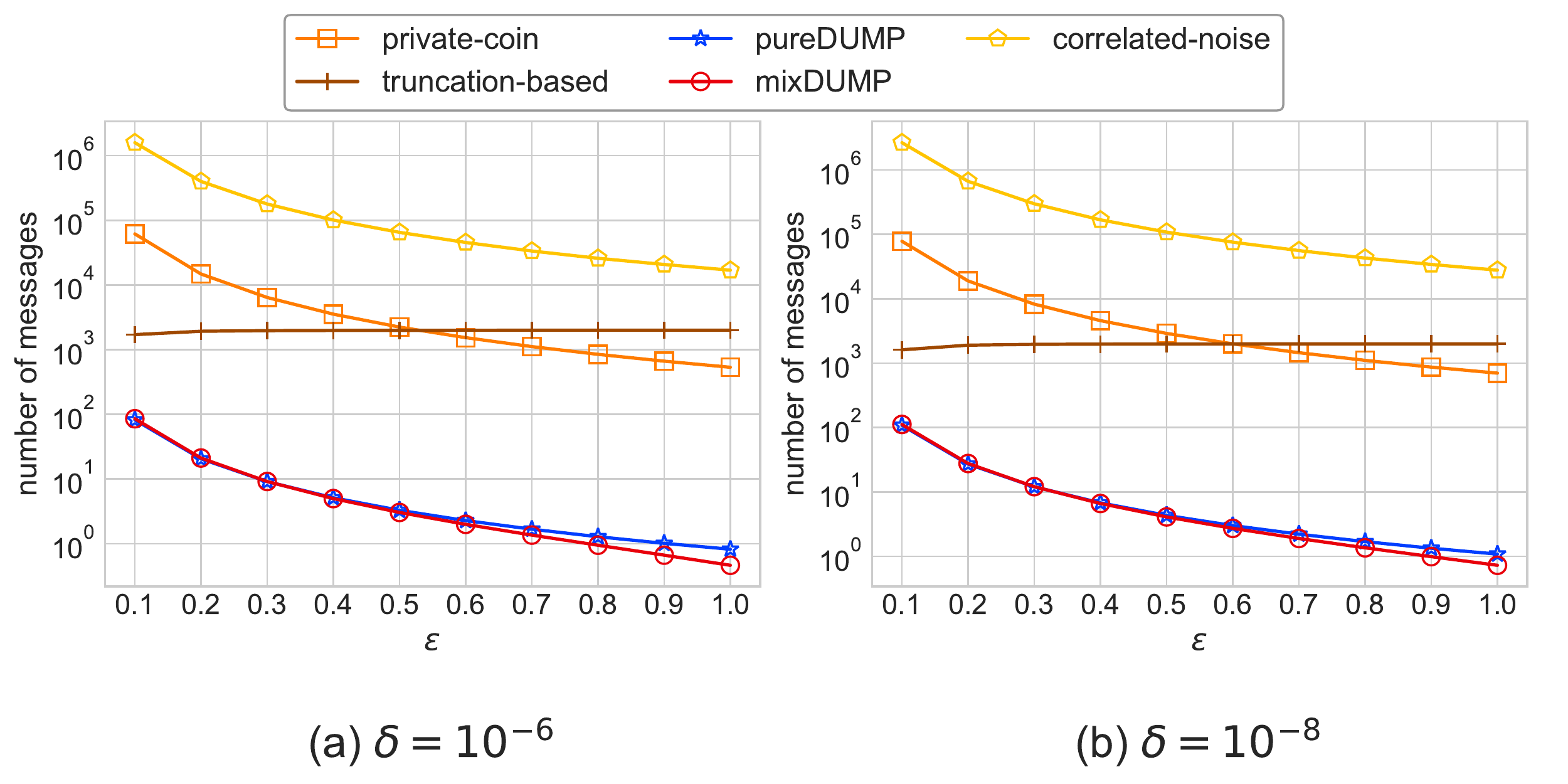}
	\vspace{-5ex}
	\setlength{\belowcaptionskip}{0.5ex}
	\caption{Number of messages sent by each user varying $\epsilon$ on the Ratings dataset.}
	\label{fig:ratings_com}
\end{figure}

\noindent\textbf{Accuracy Comparisons of Existing Protocols.}
We choose two synthetic datasets that enable as many protocols as possible can provide privacy protection with $\epsilon$ within the range of $(0,1]$, and MSE values of some $\epsilon$ out of the valid range of protocols are omitted.
The comparison results is shown in Figure \ref{fig:synk_50}.
We can observe that pureDUMP and mixDUMP always enjoy privacy amplification and have smaller MSE than the other ones except correlated-noise.
Meanwhile, we can also observe that pureDUMP has higher accuracy than mixDUMP.
It means that dummy points can introduce less error than the data randomizer when providing the same strength of privacy protection. 
Besides, the privacy-amplification has no privacy amplification on some small $\epsilon$ values, which causes poor MSE results.
Moreover, we can also observe that the lines of public-coin always slightly increase at the beginning.
Because the noise obeys binomial distribution, i.e., $Bin(n, \gamma)$, where $\gamma$ decreases with the increase of $\epsilon$, the variance increases before $\gamma$ drops to $0.5$.

\noindent\textbf{Communication Comparisons of Existing Protocols.}
The communication overhead is compared by computing the number of messages sent by each user in protocols.
Since the number of messages sent by each user in private-coin and public-coin is always larger than the other protocols and only these two protocols has length of message larger than $O(\log k)$ for datasets with $n>k$ (see Table \ref{tab:compare_pro}), comparing the number of messages does not affect the fairness of the results.
Moreover, as the number of messages sent by each user in privacy-amplification and SOLH is always $1$, we omit them in our comparisons.
We also omit the $\epsilon$ values which are out of the valid range of the protocol.
The comparison results are shown in Figure \ref{fig:synk_50_com}.
We can observe that each user always sends fewer messages than others in pureDUMP and mixDUMP, and the mixDUMP requires the least number of messages per user because the data randomizer provides partial privacy protection so that fewer dummy points need to be generated.
Note that the number of messages that each user sends in DUMP protocols is less than $1$ when a large $\epsilon$ is required.
It means that fewer than $n$ dummy points are needed to satisfy a large $\epsilon$ when there is a sufficient number of users.
According to the theoretical results of flexible DUMP protocols in Subsection \ref{sub:flex}, we provide their parameter settings for different $\epsilon$ on these two synthetic datasets in Appendix \ref{app:relax_s}.
We can also observe that the number of messages sent by each user increases as $\epsilon$ increases in truncation-based protocol.
Because each user sends additional $1$ with probability of $p=1-50\ln(2/\delta)/(\epsilon^2n)$ where $p\in(1/2,1)$.
Hence the probability $p$ closes to 1 when $\epsilon$ increases and a larger number of 1s are sent by each user.
\vspace{-1.5ex}
\subsection{Experiments with Real-World Datasets}
We also conduct experiments on two real-world datasets: IPUMS and Ratings.
The MSE comparisons of protocols are shown in Figure \ref{fig:ipums} and Figure \ref{fig:ratings}, and their corresponding communication overhead are shown in Figure \ref{fig:ipums_com} and Figure \ref{fig:ratings_com}.
The privacy-amplification has no privacy amplification on both IPUMS and Ratings datasets, so it's not involved in the comparison.
Besides, public-coin is also not involved in the comparison because most of $\epsilon\in(0, 1]$ are not in the valid range of public-coin on both datasets.
We can observe that the pureDUMP and mixDUMP still achieve higher accuracy among all protocols except the correlated-noise under the large domain size in real-world datasets.
Over the Ratings dataset with $\epsilon=1$ and $\delta=10^{-6}$, pureDUMP (and mixDUMP) only require around $0.8$ (and $0.5$) expected extra messages per user, whiletruncation-based, private-coin, and correlated-noise need around $10^2$, $10^3$ and $10^4$ extra messages respectively. On the other hand, single-message protocols privacy-amplification and SOLH are less competitive in accuracy and/or subjected to upper-bounds on privacy guarantee (namely, the lower-bounds on $\epsilon$).

\section{Related Works}

\noindent\textbf{Anonymous Data Collection.}
Many works focus on anonymizing users' messages.
Some mechanisms are based on the cryptographic idea, including \cite{syverson2004tor, lazar2018karaoke, tyagi2017stadium, van2015vuvuzela}.
Recently, Chowdhury et al. \cite{roy2020crypt} propose a system \textsf{Crypt$\epsilon$} that achieves the accuracy guarantee of CDP with no trusted data collector required like LDP.
In this approach, users encrypt their messages with homomorphic encryption and then they send messages to an auxiliary server to aggregate and add random noise.
After that, the encrypted aggregated result is sent to the analyst, and the analyst decrypts to get the noisy result that satisfies CDP.
Other mechanisms are based on the shuffling idea. 
This idea is originally proposed by Bittau et al. \cite{bittau2017prochlo}.
They design \textsf{Encode}, \textsf{Shuffle}, \textsf{Analyze} (ESA) architecture.
The ESA architecture eliminates the link between the message and the user identity by introducing a shuffler between users and the analyst.
A lot of following works analyze the privacy benefits that come from the shuffle model \cite{erlingsson2019amplification, cheu2019distributed, balle2019privacy, feldman2020hiding}.
Among them, Feldman et al. \cite{feldman2020hiding} provide the tightest upper bound that $\epsilon$ can be reduced to $(1-e^{-\epsilon})\sqrt{e^{\epsilon}\log (1/\delta)}/\sqrt{n}$.

\noindent\textbf{Protocols in the shuffle model.}
Recent shuffle-related works seek to design more accurate communication-efficient protocols.
The studies focus on frequency estimation in the shuffle model are shown in Section \ref{sub:relta histogram_estimation}.
We show some other related works as a supplement.
Some studies focus on binary data summation.
Cheu et al. \cite{cheu2019distributed} design a \emph{single-message} protocol for binary summation.
Each user only sends one-bit message to the analyst and has expected error at most $O(\sqrt{\log(1/\delta)}/\epsilon)$.
Ghazi et al. \cite{ghazi2020pure} design a \emph{multi-message} protocol for binary summation where each user sends $O(\log n/\epsilon)$ one-bit messages to the analyst and has expected error at most $O(\sqrt{\log(1/\epsilon)/\epsilon^3})$.
Balcer et al. \cite{balcer2019separating} propose a \emph{two-message} protocol for binary summation with expected error at most $O(\log(1/\delta)/\epsilon^2)$.
There are also a lot of studies on real data summation and many excellent protocols are proposed \cite{cheu2019distributed, balle2019privacy, ghazi2019scalable, balle2019differentially, ghazi2020private}.
\section{Conclusions}
In this paper, a novel framework called DUMP is proposed for privacy-preserving histogram estimation in the shuffle model.
Two types of protection mechanisms, data randomization and dummy-point generator, are contained on the user side of DUMP.
The analysis results show that the dummy points can get more privacy benefits from shuffling operation.
To further explore the advantages of dummy points, two protocols: pureDUMP and mixDUMP are proposed.
The proposed protocols achieve better trade-offs between privacy guarantee, statistics accuracy, and communication overhead than existing protocols.
Besides, the proposed protocols are also analyzed under a more realistic assumption for practical application, we prove that the privacy and utility of the proposed protocols still hold.
Finally, we perform experiments to compare different protocols and demonstrate the advantage of the proposed protocols on both synthetic and real-world datasets.
For future work, we will study different distributions of dummy points and apply DUMP to more statistical types such as real number summation, range query, and quantile query.


\bibliographystyle{ACM-Reference-Format}
\bibliography{sample}

\begin{appendix}
\section{Parameter setting in flexible pureDUMP and mixDUMP}
\label{app:relax_s}

Table \ref{tab:relax_pure} and Table \ref{tab:relax_mix} show the number of dummy points need to be sent by a user in flexible pureDUMP and mixDUMP, respectively.
The number of users is $500,000$, we show results on two different $k$ (size of data domain) and $\gamma$ (probability of a user sending dummy points).

\begin{table}[h]
      \centering
      \begin{tabular}{|c|c|c|c|c|}
            \hline
            \multirow{2}*{\diagbox{$\epsilon$}{$k$}}& \multicolumn{2}{|c|}{50}& \multicolumn{2}{|c|}{500}\\
            \cline{2-5}
            & $\gamma=0.01$& $\gamma=0.001$& $\gamma=0.01$& $\gamma=0.001$\\
            \hline
            0.4&13&127&127&1270\\
            \hline
            0.6&6&57&57&565\\
            \hline
            0.8&4&32&32&318\\
            \hline
            1.0&3&21&21&204\\
            \hline
      \end{tabular}
      \setlength{\abovecaptionskip}{1ex} 
      \setlength{\belowcaptionskip}{0.5ex}
      \caption{Number of dummy points sent by a user in flexible pureDUMP. We fix $n=500,000$. All decimals are rounded up.}
      \label{tab:relax_pure}
\end{table}

\begin{table}[h]
      \centering
      \begin{tabular}{|c|c|c|c|c|}
            \hline
            \multirow{2}*{\diagbox{$\epsilon$}{$k$}}& \multicolumn{2}{|c|}{50}& \multicolumn{2}{|c|}{500}\\
            \cline{2-5}
            & $\gamma=0.01$& $\gamma=0.001$& $\gamma=0.01$& $\gamma=0.001$\\
            \hline
            0.4&12&118&119&1190\\
            \hline
            0.6&5&44&46&451\\
            \hline
            0.8&2&18&20&192\\
            \hline
            1.0&1&6&8&72\\
            \hline
      \end{tabular}
      \setlength{\abovecaptionskip}{1ex} 
      \setlength{\belowcaptionskip}{0.5ex}
      \caption{Number of dummy points sent by a user in flexible mixDUMP. We fix $n=500,000$ and $\epsilon_l=8$.All decimals are rounded up.}
      \label{tab:relax_mix}
\end{table}

\section{Proofs}
\subsection{Proof of Lemma \ref{lem:assumption}}
\label{app:assumption}
\begin{proof}
Denote $a_1,a_2,...,a_m$ as $m$ random variables that follow the binomial distribution, where $a_1+a_2+...a_m=N$. 
Let $a_i$ be a constant value $c$, then the conditional probability distribution $Pr[a_j=x|a_i=c]$ ($i\neq j$) can be expressed as
\[\setlength\abovedisplayskip{0.5ex}
\setlength\belowdisplayskip{0.5ex}
\begin{aligned}
    Pr[a_j=x|a_i=c]&=C_{N-c}^x(\frac{1}{m-1})^{x}(\frac{m-2}{m-1})^{N-c-x}\\
    &=\frac{\frac{(N-c)!}{x!(N-c-x)!}(m-2)^{N-c-x}}{(m-1)^{N-c}}\\
\end{aligned}\]
The marginal distribution  $Pr[a_j=x]$ can be expressed as
\[\setlength\abovedisplayskip{0.5ex}
    \setlength\belowdisplayskip{0.5ex}
\begin{aligned}
    Pr[a_j=x]&=C_N^x(\frac{1}{m})^x\cdot(\frac{m-1}{m})^{N-x}\\
    &=\frac{\frac{N!}{x!(N-x)!}\cdot (m-1)^{N-x}}{m^N}\\
\end{aligned}\]
Then we calculate the upper bound of the ratio of these two probability distributions.
\[\setlength\abovedisplayskip{0.5ex}
    \setlength\belowdisplayskip{0.5ex}
\begin{aligned}
    &\frac{Pr[a_j=x|a_i=c]}{Pr[a_j=x]}\\
    &=\frac{\frac{(N-c)!}{x!(N-c-x)!}(m-2)^{N-c-x}}{(m-1)^{N-c}}\cdot \frac{m^N}{\frac{N!}{x!(N-x)!}\cdot (m-1)^{N-x}}\\
    &=\frac{(m-2)^{N-c-x}m^N}{(m-1)^{2N-x-c}}\cdot\frac{(N-c)!(N-x)!}{N!(N-c-x)!}\\
    &=\frac{(m-2)^{N-c-x}}{(m-1)^{N-x-c}}\cdot \frac{m^N}{(m-1)^{N}}\cdot\frac{(N-c)!(N-x)!}{N!(N-c-x)!}\\
    &=(1-\frac{1}{m-1})^{N-c-x}\cdot(1+\frac{1}{m-1})^N\cdot\frac{(N-c)!(N-x)!}{N!(N-c-x)!}\\
    &=(1-(\frac{1}{m-1})^2)^{N-c-x}\cdot(1+\frac{1}{m-1})^{c+x}\cdot\frac{(N-c)!(N-x)!}{N!(N-c-x)!}\\
    &\le (\frac{m}{m-1})^{c+x}\cdot \frac{(N-x)\cdot(N-x-1)\cdot...\cdot(N-x-c+1)}{N\cdot(N-1)\cdot ...\cdot(N-c+1)}\\
    &\le(\frac{m}{m-1})^{c+x}\\
\end{aligned}\]
The lower bound of the ratio of these two probability distributions can be calculated as
\[\setlength\abovedisplayskip{0.5ex}
     \setlength\belowdisplayskip{0.5ex}
\begin{aligned}
    &\frac{Pr[a_j=x|a_i=c]}{Pr[a_j=x]}\\
    &=\frac{(m-2)^{N-c-x}m^N}{(m-1)^{2N-x-c}}\cdot\frac{(N-c)!(N-x)!}{N!(N-c-x)!}\\
    &=\frac{(m-2)^{N-c-x}}{(m-1)^{N-x-c}}\cdot \frac{m^N}{(m-1)^{N}}\cdot\frac{(N-c)!(N-x)!}{N!(N-c-x)!}\\
    &=(1-\frac{1}{m-1})^{N-c-x}\cdot(1+\frac{1}{m-1})^N\cdot \frac{(N-c)!(N-x)!}{N!(N-c-x)!}\\
    &\ge (\frac{m-2}{m-1})^{N-c-x}\cdot \frac{(N-x)\cdot(N-x-1)\cdot...\cdot(N-x-c+1)}{N\cdot(N-1)\cdot ...\cdot(N-c+1)}\\
    &\ge (\frac{m-2}{m-1})^{N-x-c}\cdot(\frac{N-x-c+1}{N-c+1})^c\\
\end{aligned}\]
Since $a_1, a_2,...,a_m$ are assumed to be much smaller than $N$, $m$ can be inferred to be a large value.
Hence both  $\frac{m}{m-1}$ and $\frac{m-2}{m-1}$ are close to $1$. 
Besides, $\frac{N-x-c+1}{N-c+1}$ is also close to $1$ since $x$ is a value much smaller than $N$. 
Therefore, both the upper bound and the lower bound of $\frac{Pr[a_j=x|a_i=c]}{Pr[a_j=x]}$ are close to $1$. 
We have
\[\setlength\abovedisplayskip{0.5ex}
    \setlength\belowdisplayskip{0.5ex}
\begin{aligned}
    Pr[a_j=x, a_i=c]&=Pr[a_j=x|a_i=c]Pr[a_i=c]\\
    &\simeq Pr[a_j=x]Pr[a_i=c]
\end{aligned}\]
We can approximately consider that the distributions of $a_i$ and $a_j$ are independent based on the above proof.

\end{proof}

\subsection{Proof of Theorem \ref{the:pureprivacy}}
\label{app:pureprivacy}
\begin{proof}
    For any two neighboring datasets $\mathcal{D}$ and $\mathcal{D'}$, w.l.o.g., we assume they differ in the $n^\text{th}$ value, and $x_n=1$ in $\mathcal{D}$, $x'_n=2$ in $\mathcal{D'}$.
	Let $\mathcal{S}$ and $\mathcal{S'}$ be two dummy datasets with $|\mathcal{S}|=|\mathcal{S}'|=ns$ dummy points sent by users.
	What the analyst receives is a mixed dataset $\mathcal{V}$ that is a mixture of user dataset $\mathcal{D}$(or $\mathcal{D'})$ and the dummy dataset $\mathcal{S}$(or $\mathcal{S'})$. 
	We denote $\mathcal{V}=\mathcal{D}\cup\mathcal{S}$(or $\mathcal{V}=\mathcal{D'}\cup\mathcal{S'}$).
	Denote $s_j$ as the number of value $j$ in the dummy dataset.
	To get the same $\mathcal{V}$, there should be a dummy point equals to 2 in $\mathcal{S}$ but equals to 1 in $\mathcal{S'}$.
	Thus the numbers of $k$ values in other $|\mathcal{S}|-1$ dummy points are $[s_1-1,s_2,\ldots,s_k]$ in $\mathcal{S}$ and $[s_1,s_2-1,\ldots,s_k]$ in $\mathcal{S'}$.
	Then we have

     \[\setlength\abovedisplayskip{0.5ex}
     \setlength\belowdisplayskip{0.5ex}
     \begin{aligned}
     \frac{\Pr [\mathcal{D}\cup\mathcal{S}=\mathcal{V}]}{\Pr [\mathcal{D'}\cup\mathcal{S'}=\mathcal{V}]}&=\frac{(_{s_1-1,s_2,\ldots,s_k}^{\ \ \ \ \ |\mathcal{S}|-1})}{(_{s_1,s_2-1,\ldots,s_k}^{\ \ \ \ \ |\mathcal{S}|-1})}
     =\frac{C_{s_1+s_2-1}^{s_2}C_{s_1-1}^{s_1-1}}{C_{s_1+s_2-1}^{s_2-1}C_{s_1}^{s_1}}\\
     & =\frac{s_1!(s_2-1)!}{(s_1-1)!s_2!}=\frac{s_1}{s_2}\\
     \end{aligned}\]

    As dummy points are all uniformly random sampled from $\mathbb{D}$, both $s_1$ and $s_2$ follow binomial distributions, where  $s_1$ follows $Bin(|\mathcal{S}|-1,\frac{1}{k})+1$ and $s_2$ follows $Bin(|\mathcal{S}|-1,\frac{1}{k})$.
	Denote $S_1$ as $Bin(|\mathcal{S}|-1,\frac{1}{k})+1$ and $S_2$ as $Bin(|\mathcal{S}|-1,\frac{1}{k})$.
	Here, $s_1$ and $s_2$ can be considered as mutually independent variables according to Lemma \ref{lem:assumption}.
	Then the probability that pureDUMP does not satisfy $\epsilon_d$-DP is
	\[\setlength\abovedisplayskip{0.5ex}
      \setlength\belowdisplayskip{0.5ex}
      \begin{aligned}
            \Pr [\frac{\Pr [\mathcal{D}\cup\mathcal{S} = \mathcal{V}]}{\Pr [\mathcal{D'}\cup\mathcal{S'} = \mathcal{V}]}\ge e^{\epsilon_d}]=\Pr [\frac{S_1}{S_2} \geq e^{\epsilon_d}] \leq \delta_d
      \end{aligned}\]

     Let $c=E(S_2)=\frac{|\mathcal{S}|-1}{k}$, $\Pr [\frac{S_1}{S_2}\ge e^{\epsilon_d}]$ can be divided into two cases with overlap that $S_1\ge ce^{\epsilon_d/2}$ or $S_2\le ce^{-\epsilon_d/2}$ so we have
     \[\setlength\abovedisplayskip{0.5ex}
     \setlength\belowdisplayskip{0.5ex}
     \begin{aligned}
     & \Pr [\frac{S_1}{S_2}\ge e^{\epsilon_d}]\\
     & \le \Pr [S_1\ge ce^{\epsilon_d/2}]+\Pr [S_2\le ce^{-\epsilon_d/2}]\\
     & =\Pr [S_2\ge ce^{\epsilon_d/2}-1]+\Pr [S_2\le ce^{-\epsilon_d/2}]\\
     & =\Pr [S_2-c\ge ce^{\epsilon_d/2}-c-1]+\Pr [S_2-c\le ce^{-\epsilon_d/2}-c]\\
     & =\Pr [S_2-c\ge c(e^{\epsilon_d/2}-1-1/c)]+\Pr [S_2-c\le c(e^{-\epsilon_d/2}-1)]\\
     \end{aligned}\]

     According to multiplicative Chernoff bound, we have
     \[\Pr [\frac{S_1}{S_2}\ge e^{\epsilon_d}]\le e^{-\frac{c}{3}(e^{\epsilon_d}-1-1/c)^2}+e^{-\frac{c}{2}(1-e^{-\epsilon_d/2})^2}\]
   
     To ensure that $\Pr [\frac{S_1}{S_2}\ge e^{\epsilon_d}]\le \delta_d$, we need
     \[e^{-\frac{c}{3}(e^{\epsilon_d}-1-1/c)^2}\le \frac{\delta_d}{2}\ and\ e^{-\frac{c}{2}(1-e^{-\epsilon_d/2})^2}\le \frac{\delta_d}{2}\]

     For $e^{-\frac{c}{2}(1-e^{-\epsilon_d/2})^2}\le \frac{\delta_d}{2}$, $1-e^{-\epsilon_d/2}\ge (1-e^{-1/2})\epsilon_d\ge\epsilon_d/\sqrt{7}$, where $\epsilon_d\le 1$.
     So we can get $c\ge \frac{14\ln{(2/\delta_d)}}{\epsilon_d^2}$.
     For $e^{-\frac{c}{3}(e^{\epsilon_d}-1-1/c)^2}\le \frac{\delta_d}{2}$, $e^{\epsilon_d/2}-1\ge \frac{\epsilon_d}{2}$, so $e^{\epsilon_d/2}-1-1/c \ge \frac{25}{54}\epsilon_d$, where $c\ge 27/\epsilon_d$.
     Therefore, we can get $c\ge \frac{3\cdot 54^2\ln{(2/\delta_d)}}{25^2\epsilon_d^2}$. 
     Note that $\frac{3\cdot 54^2}{25^2}\le 14$.

     In summary, we can get $c\ge max\{\frac{14\ln{(2/\delta_d)}}{\epsilon_d^2},\frac{27}{\epsilon_d}\}$.
     Because $\epsilon_d\le 1$, when $\delta_d\le 0.2907$, $\epsilon_d\le \frac{14\ln{(2/\delta_d)}}{27}$ can be held all the time.
     So $c\ge \frac{14\ln{(2/\delta_d)}}{\epsilon_d^2}$. Therefore, we have $\frac{S-1}{k}\ge \frac{14\ln{(2/\delta_d)}}{\epsilon_d^2}$, 
     is that $\epsilon_d\le \sqrt{\frac{14k\ln{(2/\delta_d)}}{S-1}}$.
\end{proof}

\subsection{Proof of Theorem \ref{the:mixprivacy}}
\label{app:mixprivacy}
\begin{proof}
      The assumptions are same to the proof of Theorem \ref{the:pureprivacy}.
      The $n^\text{th}$ value $x_n=1$ in $\mathcal{D}$ and $x'_n=2$ in its neighboring dataset $\mathcal{D'}$.
      If the $n^{\text{th}}$ user randomizes its data, then we can easily get
     \[\frac{\Pr [R(\mathcal{D})\cup \mathcal{S}=\mathcal{V}]}{\Pr [R(\mathcal{D})\cup \mathcal{S}'=\mathcal{V}]}=1\]
     So, we focus on the case that the $n^{\text{th}}$ user does not randomize its data.
	Denote $\mathcal{H}$ as the set of users participating in randomization in Algorithm \ref{alg:mix_r}, and the size of $\mathcal{H}$ is $|\mathcal{H}|$.
	Since the randomized users' data and dummy points are all follows the uniform distribution, mixDUMP can be regarded as pureDUMP with dummy dataset $\mathcal{H}\cup\mathcal{S}$ is introduced.
	From Theorem \ref{the:pureprivacy}, mixDUMP satisfies $(\epsilon_H,\frac{\delta_s}{2})$-DP, we have
     \[\setlength\abovedisplayskip{0.5ex}
     \setlength\belowdisplayskip{0.5ex}
     \begin{aligned}
           \Pr [R(\mathcal{D})\cup \mathcal{S}=\mathcal{V}]\le e^{\epsilon_H}\Pr [R(\mathcal{D'})\cup \mathcal{S'}=\mathcal{V}]+\frac{\delta_s}{2}
     \end{aligned}\]
     where  $\epsilon_H=\sqrt{\frac{14k\ln{(4/\delta_s)}}{|\mathcal{H}|+|\mathcal{S}|-1}}$, $\delta_s\le 0.5814$.
     Since $\lambda$ is the probability that each user randomizes its data, the number of users participating in randomization $|\mathcal{H}|$ follows $Bin(n-1,\lambda)$. 
     Define $\mu=(n-1)\lambda$, and $\gamma=\sqrt{\frac{2\ln{(2/\delta_s)}}{(n-1)\lambda}}$. 
     According to Chernoff bound, we can get
     \[\Pr [|\mathcal{H}|<(1-\gamma)\mu]<e^{-\mu\gamma^2/2}=\frac{\delta_s}{2}\]
     Then we have
     \[\begin{aligned}
          &\Pr [R(\mathcal{D})\cup \mathcal{S}=\mathcal{V}]\\
          &=\Pr [R(\mathcal{D})\cup \mathcal{S}=\mathcal{V} \cap |\mathcal{H}|\ge (n-1)\lambda-\sqrt{2(n-1)\lambda \ln{(2/\delta_s)}}]+\\
          & \Pr [R(\mathcal{D})\cup \mathcal{S}=\mathcal{V} \cap |\mathcal{H}| < (n-1)\lambda-\sqrt{2(n-1)\lambda \ln{(2/\delta_s)}}]\\
          & \le \Pr [R(\mathcal{D})\cup \mathcal{S}=\mathcal{V} \cap |\mathcal{H}|\ge (n-1)\lambda-\sqrt{2(n-1)\lambda \ln{(2/\delta_s)}}]+\frac{\delta_s}{2}\\
          & Let\ t=(n-1)\lambda-\sqrt{2(n-1)\lambda ln(2/\delta_s)}\\
          & \Pr [R(\mathcal{D})\cup \mathcal{S}=\mathcal{V}]\le(\sum_{h\ge t} \Pr [R(\mathcal{D})\cup \mathcal{S}=\mathcal{V}]\Pr [|\mathcal{H}|=h])+\frac{\delta_s}{2}\\
          & \le(\sum_{h\ge t} (e^{\epsilon_H}\Pr [R(\mathcal{D}')\cup \mathcal{S}'=\mathcal{V}]\Pr [|\mathcal{H}|=h]+\frac{\delta_s}{2}\Pr [|\mathcal{H}|=h]))+\frac{\delta_s}{2}\\
          & \le(\sum_{h\ge t} (e^{\epsilon_H}\Pr [R(\mathcal{D}')\cup \mathcal{S}'=\mathcal{V}]\Pr [|\mathcal{H}|=h])+\delta_s\\
          & \le max_{h\ge t}e^{\epsilon_H}\Pr [R(\mathcal{D}')\cup \mathcal{S}'=\mathcal{V}]+\delta_s\\
          & \le e^{\sqrt{\frac{14k\ln{(4/\delta_s)}}{t+S-1}}}\Pr [R(\mathcal{D}')\cup \mathcal{S}'=\mathcal{V}]+\delta_s\\
          & =e^{\sqrt{\frac{14k\ln{(4/\delta_s)}}{(n-1)\lambda-\sqrt{2(n-1)\lambda \ln{(2/\delta_s)}}+S-1}}}\Pr [R(\mathcal{D}')\cup \mathcal{S}'=\mathcal{V}]+\delta_s\\
     \end{aligned}\]
\end{proof}

\subsection{Proof of Lemma \ref{lem:mixDUMP}}
\label{app:mix_unbias}
\begin{proof}
$f_v$ is the true frequency of element $v$ appears in users' dataset. 
And $\tilde f_v$ is the estimation of $f_v$. 
We prove $\tilde f_v$ in mixDUMP is unbiased.

In GRR mechanism, the probability that each value keeps the true value with the probability $p$, and changes to other values in the data domain with the probability of $q$. 
We give the estimation function of mixDUMP in Algorithm \ref{alg:mix_a}. The formula is 
\[\tilde f_v=\frac{\sum_{i\in[n],j\in [s+1]}\mathbbm{1}_{\{v=y_{i,j}\}}-n\lambda(1-\frac{1}{k})-\frac{ns}{k}}{n(1-2\lambda(1-\frac{1}{k}))}\]
where $\lambda=\frac{k}{k-1}q$. To simplify the derivation, we replace $\lambda$ with $q$, then we have estimation of $\tilde f_v$
\[\begin{aligned}
E[\tilde f_v]&=E[\frac{1}{n}\cdot \frac{(\sum_{i\in[n],j\in [s+1]}\mathbbm{1}_{\{v=y_{i,j}\}}-nq-\frac{ns}{k})}{(1-2q)}]\\
& =\frac{1}{n}\cdot E[\frac{(\sum_{i\in[n],j\in [s+1]}\mathbbm{1}_{\{v=y_{i,j}\}}-nq-\frac{ns}{k})}{(1-2q)}]\\
& =\frac{1}{n}\cdot \frac{1}{1-2q}\cdot E[\sum_{i\in[n],j\in [s+1]}\mathbbm{1}_{\{v=y_{i,j}\}}-nq-\frac{ns}{k}]\\
& =\frac{1}{n}\cdot \frac{1}{1-2q}\cdot (E[\sum_{i\in[n],j\in [s+1]}\mathbbm{1}_{\{v=y_{i,j}\}}]-nq-\frac{ns}{k})\\
& =\frac{1}{n}\cdot \frac{1}{1-2q}\cdot (n f_v(1-q)+n(1-f_v)q+\frac{ns}{k}\\
& -nq-\frac{ns}{k})\\
& =\frac{1}{n}\cdot \frac{1}{1-2q}\cdot (1-2q)\cdot n f_v\\
& =f_v\\
\end{aligned}\]
\end{proof}

\subsection{Proof of Theorem \ref{the:shuffle_privacy}}
\label{app:shuffle_privacy}
\begin{proof}
    Denote $[n_1,\ldots,n_k]$ as numbers of $k$ values in the $i^{\text{th}}$ user's reports set $\mathcal{V}_i$ where $\sum_{i=1}^k n_i=s+1$.
	Denote $\mathcal{S}_i$ (or $\mathcal{S}_i'$) as the dummy dataset generated by the $i^{\text{th}}$ user.
	Let $v$ be the true value held by the $i^{\text{th}}$ user.
	According to the definition of LDP, we compare the probability of shuffler receiving the same set $\mathcal{V}_i$ with numbers of k values are $[n_1,\ldots,n_k]$ when $v=1$ and $v=2$.
	We prove the following formula
      \[\setlength\abovedisplayskip{0.5ex}
      \setlength\belowdisplayskip{0.5ex}
      \Pr [R(v=1)\cup \mathcal{S}_{i}=\mathcal{V}_{i}]\le e^{\epsilon_r}\Pr [R(v=2)\cup \mathcal{S}_{i}'=\mathcal{V}_{i}]+\delta_r\]

      According to Equation (\ref{eq:rr}) of GRR mechanism, we have
      \[\setlength\abovedisplayskip{0.5ex}
      \setlength\belowdisplayskip{0.5ex}
      \begin{aligned}
            &\frac{\Pr [R(v=1)\cup \mathcal{S}_{i}=\mathcal{V}_{i}]}{\Pr [R(v=2)\cup \mathcal{S}_{i}'=\mathcal{V}_{i}]}\\
            &=\frac{e^{\epsilon_l}(_{n_1-1,n_2,\ldots,n_k}^{\ \ \ \ \ \ \ \ s})+(_{n_1,n_2-1,\ldots,n_k}^{\ \ \ \ \ \ \ \ s})+\ldots+(_{n_1,n_2,\ldots,n_k-1}^{\ \ \ \ \ \ \ \ s})}{e^{\epsilon_l}(_{n_1,n_2-1,\ldots,n_k}^{\ \ \ \ \ \ \ \ s})+(_{n_1-1,n_2,\ldots,n_k}^{\ \ \ \ \ \ \ \ s})+\ldots+(_{n_1,n_2,\ldots,n_k-1}^{\ \ \ \ \ \ \ \ s})}\\     
            &\le \frac{e^{\epsilon_l}(_{n_1-1,n_2,\ldots,n_k}^{\ \ \ \ \ \ \ \ s})+(_{n_1,n_2-1,\ldots,n_k}^{\ \ \ \ \ \ \ \ s})}{e^{\epsilon_l}(_{n_1,n_2-1,\ldots,n_k}^{\ \ \ \ \ \ \ \ s})+(_{n_1-1,n_2,\ldots,n_k}^{\ \ \ \ \ \ \ \ s})}
            =\frac{e^{\epsilon_l}n_1+n_2}{e^{\epsilon_l}n_2+n_1}\\
      \end{aligned}\]
      Since dummy points are uniformly random sampled from $\mathbb{D}$, both $n_1$ and $n_2$ follow binomial distribution.
      $n_1$ follows $Bin(s,\frac{1}{k})+1$ and $n_2$ follows $Bin(s,\frac{1}{k})$.
      Based on Lemma \ref{lem:assumption}, $n_1$ and $n_2$ are considered as mutually independent variables.
      We have 
      \[\setlength\abovedisplayskip{0.5ex}
      \setlength\belowdisplayskip{0.5ex}
      \begin{aligned}
        \frac{\Pr [R(v=1)\cup \mathcal{S}_{i}=\mathcal{V}_{i}]}{\Pr [R(v=2)\cup \mathcal{S}_{i}'=\mathcal{V}_{i}]}&\le\frac{e^{\epsilon_l}Bin(s,\frac{1}{k})+e^{\epsilon_l}+Bin(s,\frac{1}{k})}{e^{\epsilon_l}Bin(s,\frac{1}{k})+Bin(s,\frac{1}{k})+1}\\
        &\le\frac{Bin(s,\frac{1}{k})(e^{\epsilon_l}+1)+e^{\epsilon_l}}{Bin(s,\frac{1}{k})(e^{\epsilon_l}+1)}\\
        &=\frac{Bin(s,\frac{1}{k})}{Bin(s,\frac{1}{k})}+\frac{e^{\epsilon_l}}{Bin(s,\frac{1}{k})(e^{\epsilon_l}+1)}\\
      \end{aligned}\]
      
      According to the Chernoff bound $\Pr [Bin(s,\frac{1}{k})<(1-\gamma)\mu]<\delta$, where $\gamma=\sqrt{\frac{2k\ln(1/\delta)}{s}}$ and $\mu=\frac{s}{k}$.
      That is $\Pr [Bin(s,\frac{1}{k})\ge\frac{s}{k}(1-\sqrt{\frac{2k\ln(1/\delta)}{s}})]\ge 1-\delta$.
      Note that $1-\sqrt{\frac{2k\ln(1/\delta)}{s}}\ge 0$, so $\frac{k}{s}\le\frac{1}{2\ln(1/\delta)}$.
      As we prove that $Bin(s,\frac{1}{k})\in[\frac{s}{k}(1-\sqrt{\frac{2k\ln(1/\delta)}{s}},s]$ with the probability larger than $1-\delta$, then we have
      \[\begin{aligned}
        &\frac{\Pr [R(v=1)\cup \mathcal{S}_{i}=\mathcal{V}_{i}]}{\Pr [R(v=2)\cup \mathcal{S}_{i}'=\mathcal{V}_{i}]}\\
        &\le\frac{s}{\frac{s}{k}(1-\sqrt{\frac{2k\ln(1/\delta)}{s}})}+\frac{e^{\epsilon_l}}{\frac{s}{k}(1-\sqrt{\frac{2k\ln(1/\delta)}{s}})(e^{\epsilon_l}+1)}\\
        &=\frac{k}{1-\sqrt{\frac{2k\ln(1/\delta)}{s}}}(1+\frac{e^{\epsilon_l}}{s(e^{\epsilon_l}+1)})\\
      \end{aligned}\]
      when $\frac{k}{s}\le(\sqrt{c^2+\frac{4c}{\sqrt{2\ln(1/\delta)}}}-c)^2)$, it has $\frac{k}{1-\sqrt{\frac{2k\ln(1/\delta)}{s}}}(1+\frac{e^{\epsilon_l}}{s(e^{\epsilon_l}+1)})\le e^{\epsilon_l}$, where $c=\sqrt{2\ln(1/\delta)}\cdot\frac{e^\epsilon_l(e^{\epsilon_l}+1)}{s(e^{\epsilon_l}+1)+e^{\epsilon_l}}$.
\end{proof}

\subsection{Proof of Lemma \ref{lem:relax_pureDUMP}}
\label{app:relax_pure_unbias}
\begin{proof}
      $f_v$ is the true frequency of element $v$ appears in users' dataset. 
      $\tilde f_v$ is the estimation of $f_v$. 
      We prove $\tilde f_v$ in flexible pureDUMP is unbiased.
      
      As each user generates dummy points with probability of $\gamma$, and with probability of $1-\gamma$ does not generate any dummy point, the estimation function is
      \[\tilde f_v=\frac{1}{n}\cdot (\sum_{i\in[n],j\in [s+1]}\mathbbm{1}_{\{v=y_{i,j}\}}-\frac{\gamma ns}{k})\]
      
      Then we have
      \[\begin{aligned}
      E[\tilde f_v]&=E[\frac{1}{n}\cdot (\sum_{i\in[n],j\in [s+1]}\mathbbm{1}_{\{v=y_{i,j}\}}-\frac{\gamma ns}{k})]\\
      & =\frac{1}{n}\cdot E[\sum_{i\in[n],j\in [s+1]}\mathbbm{1}_{\{v=y_{i,j}\}}]-\frac{\gamma s}{k}\\
      & =\frac{1}{n}\cdot(nf_v+\frac{\gamma ns}{k})-\frac{\gamma s}{k}\\
      & =f_v\\
      \end{aligned}\]
\end{proof}

\subsection{Proof of Lemma \ref{lem:relax_mixDUMP}}
\label{app:relax_mix_unbias}
\begin{proof}
      $f_v$ is the true frequency of element $v$ appears in users' dataset. 
      And $\tilde f_v$ is the estimation of $f_v$. 
      We prove $\tilde f_v$ of flexible mixDUMP is unbiased.
      
      In GRR mechanism, the probability that each value keeps unchanged with the probability $p$, and changes to other values in the data domain with the probability of $q$.
      As each user generates dummy points with probability of $\gamma$, and with probability of $1-\gamma$ does not generate any dummy point, the estimation function is 
      \[\tilde f_v=\frac{\sum_{i\in[n],j\in [s+1]}\mathbbm{1}_{\{v=y_{i,j}\}}-n\lambda(1-\frac{1}{k})-\frac{\gamma ns}{k}}{n(1-2\lambda(1-\frac{1}{k}))}\]
      where $\lambda=\frac{k}{k-1}q$. 
      To simplify the derivation, we replace $\lambda$ with $q$, then we have estimation of $\tilde f_v$     
      \[\begin{aligned}
      E[\tilde f_v]&=E[\frac{1}{n}\cdot \frac{(\sum_{i\in[n],j\in [s+1]}\mathbbm{1}_{\{v=y_{i,j}\}}-nq-\frac{\gamma ns}{k})}{(1-2q)}]\\
      & =\frac{1}{n}\cdot E[\frac{(\sum_{i\in[n],j\in [s+1]}\mathbbm{1}_{\{v=y_{i,j}\}}-nq-\frac{\gamma ns}{k})}{(1-2q)}]\\
      & =\frac{1}{n}\cdot \frac{1}{1-2q}\cdot E[\sum_{i\in[n],j\in [s+1]}\mathbbm{1}_{\{v=y_{i,j}\}}-nq-\frac{\gamma ns}{k}]\\
      & =\frac{1}{n}\cdot \frac{1}{1-2q}\cdot (E[\sum_{i\in[n],j\in [s+1]}\mathbbm{1}_{\{v=y_{i,j}\}}]-nq-\frac{\gamma ns}{k})\\
      & =\frac{1}{n}\cdot \frac{1}{1-2q}\cdot (n f_v(1-q)+n(1-f_v)q+\frac{\gamma ns}{k}\\
      & -nq-\frac{\gamma ns}{k})\\
      & =\frac{1}{n}\cdot \frac{1}{1-2q}\cdot (1-2q)\cdot n f_v\\
      & =f_v\\
      \end{aligned}\]
\end{proof}

\subsection{Proof of Theorem \ref{the:relax_pureuti}}
\label{app:relax_pureuti}
\begin{proof}
      As each user sends dummy points with the probability of $\gamma$, the distribution of number of any distinct value in dummy points is $Bin(ns,\frac{\gamma}{k})$.
      Thus, the estimation function of flexible pureDUMP is
      \[\tilde f_v=\frac{1}{n}\cdot(\sum_{i\in[n],j\in [s+1]}\mathbbm{1}_{\{v=y_{i,j}\}}-\frac{\gamma ns}{k})\]

      Then, we have
      \[\begin{aligned}
            Var[\tilde f_v]&=Var[\frac{1}{n}\cdot(\sum_{i\in[n],j\in [s+1]}\mathbbm{1}_{\{v=y_{i,j}\}}-\frac{\gamma ns}{k})]\\
            & =\frac{1}{n^2}\cdot Var[\sum_{i\in[n],j\in [s+1]}\mathbbm{1}_{\{v=y_{i,j}\}}]\\
            & =\frac{1}{n^2}\cdot ns\cdot\frac{\gamma}{k}\cdot{1-\frac{\gamma}{k}}\\
            & =\frac{s\gamma(k-\gamma)}{nk^2}\\
      \end{aligned}\]

      Lemma \ref{lem:relax_pureDUMP} proves that $\tilde f_v$ is unbiased in flexible pureDUMP, thus the MSE of flexible pureDUMP is\\
      \[\setlength\abovedisplayskip{0.5ex}
      \setlength\belowdisplayskip{0.5ex}
      \begin{aligned}
            MSE&= \frac{1}{k}\sum_{v\in \mathbb{D}}E[(\tilde f_v-f_v)^2]=\frac{1}{k}\sum_{v\in \mathbb{D}}Var(\tilde f_v)\\
            & =\frac{s\gamma(k-\gamma)}{nk^2}\\
      \end{aligned}\]
\end{proof}

\subsection{Proof of Theorem \ref{the:relax_mixuti}}
\label{app:relax_mixuti}
\begin{proof}
As each user generates dummy points in probability of $\gamma$ in flexible mixDUMP, the distribution of number of any distinct value in dummy points is $Bin(ns,\frac{\gamma}{k})$.
Thus, the estimation function of flexible mixDUMP is 
\[\tilde f_v=\frac{\sum_{i\in[n],j\in [s+1]}\mathbbm{1}_{\{v=y_{i,j}\}}-n\lambda(1-\frac{1}{k})-\frac{\gamma ns}{k}}{n(1-2\lambda(1-\frac{1}{k}))}\]
where $\lambda=\frac{k}{k-1}q$. 
To simplify derivation, we replace $\lambda$ with $q$, and with $p=\frac{e^{\epsilon_l}}{e^{\epsilon_l}+k-1},\ q=\frac{1}{e^{\epsilon_l}+k-1}$, we have the variance of $\tilde f_v$ is
\[\begin{aligned}
Var[\tilde f_v]&=Var[\frac{1}{n}\cdot \frac{\sum_{i\in[n],j\in [s+1]}\mathbbm{1}_{\{v=y_{i,j}\}}-\gamma ns\cdot \frac{1}{k}-nq}{p-q}]\\
& =\frac{1}{n^2}\cdot \frac{1}{(p-q)^2}Var[\sum_{i\in[n],j\in [s+1]}\mathbbm{1}_{\{v=y_{i,j}\}}]\\
& \simeq\frac{1}{n^2}\cdot \frac{1}{(p-q)^2}(nq(1-q)+ns\cdot \frac{\gamma}{k}\cdot \frac{k-\gamma}{k})\\
& =\frac{1}{n}\cdot \frac{e^{\epsilon_l}+k-2}{(e^{\epsilon_l}-1)^2}+\frac{s\gamma (k-\gamma)}{n k^2}\cdot (\frac{e^{\epsilon_l}+k-1}{e^{\epsilon_l}-1})^2\\
\end{aligned}\]
Lemma \ref{lem:relax_mixDUMP} proves that $\tilde f_v$ is unbiased in flexible mixDUMP, thus we have the MSE of flexible mixDUMP is
\[\setlength\abovedisplayskip{0.5ex}
\setlength\belowdisplayskip{0.5ex}
\begin{aligned}
MSE&= \frac{1}{k}\sum_{v\in \mathbb{D}}E[(\tilde f_v-f_v)^2]=\frac{1}{k}\sum_{v\in \mathbb{D}}Var(\tilde f_v)\\
& =\frac{1}{n}\cdot \frac{e^{\epsilon_l}+k-2}{(e^{\epsilon_l}-1)^2}+\frac{s\gamma (k-\gamma)}{n k^2}\cdot (\frac{e^{\epsilon_l}+k-1}{e^{\epsilon_l}-1})^2\\
\end{aligned}\]
\end{proof}
\end{appendix}
\end{document}